\def\be{\begin{equation}}
\def\ee{\end{equation}}
\def\>>{\rangle}
\def\bm{{\bf m}}
\def\bz{{\bf z}}
\def\bw{{\bf w}}
\def\bt{{\bf t}}
\def\bv{{\bf v}}
\def\cM{\mathcal M}
\def\cZ{\mathcal Z}
\def\bC{\mathbb C}
\def\bN{\mathbb N}
\def\bZ{\mathbb Z}
\def\btau{\boldsymbol\tau}
\def\bnu{\boldsymbol\nu}
\def\Y1{Y^{(1)}}
\DeclareMathOperator{\End}{End}
\newtheorem{theorem}{\emph {Theorem}}
\newtheorem{conjecture}[theorem]{\emph {Conjecture}}
\newtheorem{lemma}[theorem]{\emph {Lemma}}
\newtheorem{corollary}[theorem]{\emph {Corollary}}
\newtheorem{proposition}[theorem]{\emph {Proposition}}
\newtheorem*{definition}{\emph {Definition}}
\theoremstyle{remark}
\newtheorem{remark}{Remark}
\newtheorem*{example}{Example}
\begin{document}	

\title[Inhomogeneous TASEP with spectral parameters]{Inhomogenous
  Multispecies TASEP on a ring with spectral parameters}   
\date{\today}
\author[L. Cantini]{Luigi Cantini}
\address{LPTM, Universit\'e de Cergy-Pontoise (CNRS UMR
    8089), 
%Site de Saint-Martin,
%2 avenue Adolphe Chauvin, F-95302 
Cergy-Pontoise Cedex, France.}
\date{\today}
\email{luigi.cantini@u-cergy.fr}
\keywords{ Asymmetric Simple Exclusion Process, Yang-Baxter, Schubert Polynomials}
%\subjclass[2000]{ggg,ddd}
\thanks{This work has been partially supported by CNRS through a
  ``Chaire d'excellence''.} 

\begin{abstract}

We study an inhomogenous multispecies version of the Totally Asymmetric Simple
Exclusion Process (TASEP) on a periodic oriented one dimensional
lattice, which depends on two sets of parameters $(\btau,\bnu)$, attached to the
particles.
After discussing the Yang-Baxter integrability of our model, we study
its (unnormalized) stationary measure.
Motivated by the integrability of the model we introduce a further set of
spectral parameters $\bz$, attached to the sites of the lattice, and we
uncover a remarkable  underlying 
algebraic structure. We provide exact formulas for the 
stationary measure and prove the factorization of the stationary
probability of certain configurations in terms of double Schubert
polynomials in $(\btau,\bnu)$.

\end{abstract}

%%%%%%%%
%%%%Text
%%%%%%%%

\begin{titlepage}

\maketitle

\end{titlepage}

\section{Introduction}

The Asymmetric Simple Exclusion process (ASEP) is a stochastic process
that in the course of the last thirty years has gained the status of 
a paradigmatic model in the theory of far from equilibrium low
dimensional systems \cite{derrida1998exactly}. 
The model describes the stochastic evolution of particles that occupy
the sites of a one dimensional lattice under the exclusion condition,
which means that each site can contain one particle at most. 
The dynamics involves jumps of the particles on neighboring sites
with asymmetric rates for left or right jumping, modeling in this
way the presence of an external driving force.

On one side the ASEP displays a rich phenomenology and has found a wide
range of applications, going from the study of traffic flow, to that
of surface growth, or sequence alignment (see \cite{chou2011non} for a
recent review of several of these applications).  
On the other side the ASEP is amenable to a variety of mathematical
approaches, in part leading to complementary sets of results.  Among
these we mention Bethe Ansatz \cite{gwa1992bethe}, quadratic algebras
\cite{essler1996representations}, combinatorics
\cite{corteel2007tableaux}, orthogonal polynomials
\cite{uchiyama2004asymmetric}, random matrices 
\cite{johansson2000shape}, stochastic differential equations
\cite{corwin2012kardar} and hydrodynamic limits
\cite{rezakhanlou1991hydrodynamic}.
 
In the present and in the following companion paper
\cite{cantini-multispecies-2}  we  study a multispecies generalization
of ASEP on a ring, i.e. on a periodic oriented one dimensional lattice. 
In this model, particles belong to different species (labeled by
integers) and %occupy all the lattice sites.
the exclusion condition is implemented by requiring each lattice site to be
occupied by exactly one particle (at wish one can interpret particles
of a given species as empty sites).
The time evolution consists of swaps of neighboring particles: a
particle of species $\alpha$ on the left swaps its position with a
particle of species $\beta$ on the right with transition rates
$r_{\alpha,\beta}$ given by  
$$
r_{\alpha,\beta}=\left\{ 
\begin{array}{cc}
  \tau_\alpha-\nu_\beta & \alpha<\beta\\
0 &\alpha\geq\beta
\end{array}\right.
$$
for some family of parameters $\btau=\{\tau_\alpha\}_{\alpha\in \mathbb
  Z}, \bnu=\{\nu_\alpha\}_{\alpha\in\mathbb Z}$.
In particular, since particles of higher species cannot overcome particles of
lower species, we speak of a multispecies  Totally Asymmetric
Exclusion process (M-TASEP).
As will be explained in Section \ref{integr-section}, such a choice
ensure the Yang-Baxter integrability of the M-TASEP.
Our main focus in the present paper will be on the \emph{stationary
probability}, and moreover we will restrict to a system with a single
particle per species on
a ring of length $N$.  On one hand this allows to use a
light notation, on the other hand, as it will explained in
\cite{cantini-multispecies-2} the results for more general species content 
can be derived from the ones presented here. 

For some choices of the parameters $\btau,\bnu$ the model has already
appeared in the literature. The case $\nu_\alpha=\tau_\alpha$ has been
considered by Karimipour in \cite{karimipour1999multispecies}, where
using a matrix product representation he showed that the stationary
measure is uniform.
Another case has appeared in the work of R\'akos and Sch\"utz
\cite{rakos2005bethe}. They considered a system of $N$ species of
particles, each species moving to the right on empty sites with rates
$v_\alpha$, but exchange of particles is
forbidden and since the particles cannot exchange position, one can assume
each particle to be of a different species. In order to fit this model
in our framework we identify particles of species
$N+1$ as empty sites and to forbid the exchange of particles of
successive species $\alpha$, $\alpha+1$, which means
$\nu_{N+1}=0$, and  $\nu_{\alpha}=\tau_{\alpha-1}$ for
$\alpha\leq N$. 

The main motivation of the present work come though from yet
another particular case  considered by Lam and Williams
\cite{lam2012markov} in which all the parameters $\nu_\alpha$ vanish. 
Lam and Williams conjectured that the stationary probabilities of the
particles configurations, apart for a normalization factor called \emph{partition function}, turn out
to be polynomials in the parameters $\btau =\{\tau_\alpha\}$, with
positive integer coefficients. Actually they made an even stronger and
more intriguing conjecture, 
namely that  the unnormalized probability $\psi_\bw(\btau)$ of any
particle configuration $\bw$ is a non negative integral
sum of \emph{Schubert polynomials} in the variables $\btau$. On top of
this they gave explicit formulas for certain components as products of Schubert
polynomials \cite[Conjecture 3 and 4]{lam2012markov}(see Appendix 
\ref{app-schubert} for the definition of Schubert polynomials).  
The weaker result on integrality and positivity of  the coefficients
of $\psi_\bw(\btau)$
was soon settled in two steps. As a first step
Ayyer and Linusson \cite{ayyer2014inhomogeneous} 
gave a conjectural combinatorial expression of the
integers coefficients as enumerating certain multiline queues
previously introduced by Ferrari and Martin
\cite{ferrari2005multiclass,ferrari2007stationary}. Shortly 
later Arita and Mallick  \cite{arita2013matrix} proved Ayyer-Linusson conjecture by constructing a
matrix product ansatz representation of $\psi_\bw(\btau)$
which turns out to be equivalent to the multiline queues. 
Since then, many known results about  stationary measure of the
M-TASEP have been
obtained by using the multiline queues (see for example
\cite{aas2015continuous,aas2013product,ayyer2014correlations}). 
Still the approach through multiline queues has given no insight for
explaining the appearance of Schubert polynomials in this problem.
Moreover the matrix representation of multispecies TASEP has been
rederived recently in the framework of the Zamolodchikov tetrahedron
equation \cite{kuniba2015multispecies, kuniba2016multispecies}.

 \vskip .3cm

 For generic $\bnu$, some of Lam and Williams
conjectures extend  in a natural way. It is convenient to think at the
unnormalized probabilities $\psi_\bw(\btau,\bnu)$  as components of a
vector $\Psi_N(\btau,\bnu)$ in a basis labeled by particle
configurations $\bw$, therefore in this paper we speak of
``components'' instead of ``unnormalized probabilities''.
Let's specify the normalization of $\Psi_N(\btau,\bnu)$ by fixing the
component associated to the configuration
$
12\dots N
$
as
\be\label{norm-general}
\psi_{12\dots N}(\btau,\bnu) =  \phi_{N}(\btau,\bnu),
\ee
with
\be\label{def_norm-general}
\phi_{N}(\btau,\bnu) :=
\prod_{1\leq \alpha< \beta \leq N}(\tau_\alpha-\nu_\beta)^{\beta-\alpha-1}.
\ee
We have the following Theorem that shall be proven in the paper (it
will be a corollary of Theorem \ref{primality-general})
\begin{theorem}\label{theo-minimal-poly}
With the normalization given by
eq.(\ref{norm-general}), the components $\psi_{\bw}(\btau,\bnu)$  are 
relative prime polynomials in $\btau,\bnu$, with integer coefficients.
\end{theorem}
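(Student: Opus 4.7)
The plan is to split the assertion into three ingredients: (a) each $\psi_\bw(\btau,\bnu)$ is a polynomial with rational coefficients, (b) those coefficients are in fact integers, and (c) the family $\{\psi_\bw\}_\bw$ has trivial $\gcd$ in $\bZ[\btau,\bnu]$. Parts (a) and (b) are essentially a consequence of the Yang-Baxter setup of Section \ref{integr-section}, while (c) carries the actual content.

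For (a) and (b) I would start from the normalization $\psi_{12\dots N} = \phi_N \in \bZ[\btau,\bnu]$ and propagate to other components via the local exchange operators furnished by Yang-Baxter integrability. Concretely, the stationary equation combined with the $\Rc$-matrix factorization should give divided-difference / Demazure-type operators $\partial_i$ that send $\psi_\bw$ to $\psi_{s_i \bw}$ (possibly up to a linear combination involving $\tau_\alpha - \nu_\beta$ coefficients), and that preserve $\bZ[\btau,\bnu]$. Iterating these from the ``top'' configuration $12\dots N$ reaches every configuration $\bw$ and keeps us inside $\bZ[\btau,\bnu]$, giving polynomiality and integrality in one stroke (or, if convenient, via Gauss's lemma applied to the primitive polynomial $\phi_N$).

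For (c) the key observation is that the irreducible factors of $\phi_N = \psi_{12\dots N}$ in $\bZ[\btau,\bnu]$ are precisely the linear forms $\tau_\alpha - \nu_\beta$ with $\alpha < \beta$ and $\beta - \alpha \geq 2$; integer primes are immediately excluded because $\phi_N$ is primitive and the stationary vector does not vanish modulo any $p$. Hence it suffices, for each such pair $(\alpha,\beta)$, to exhibit one component $\psi_\bw$ that is not divisible by $\tau_\alpha - \nu_\beta$. The natural witness is a configuration in which $\beta$ sits immediately to the right of $\alpha$ (or more generally a configuration whose Schubert-type factorization omits the factor $\tau_\alpha-\nu_\beta$): the factorization into double Schubert polynomials announced in the abstract and recorded in Theorem \ref{primality-general} assigns to certain $\bw$ an explicit product that visibly does not contain $\tau_\alpha - \nu_\beta$, so its specialization at $\tau_\alpha = \nu_\beta$ is non-zero.

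The main obstacle, unsurprisingly, is step (c): one needs enough structural control over the $\psi_\bw$ to produce, uniformly in $(\alpha,\beta)$, a component whose specialization at the hyperplane $\tau_\alpha = \nu_\beta$ does not vanish. This is exactly why the statement is packaged as a corollary of Theorem \ref{primality-general}; the hard work is the explicit Schubert factorization of distinguished components, and once that is in hand the present theorem is a short extraction. Everything else (polynomiality, integrality, the reduction of coprimality to the linear factors of $\phi_N$) is a formal consequence of the integrable structure plus the shape of the chosen normalization.
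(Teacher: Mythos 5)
Your overall strategy coincides with the paper's: normalize $\psi_{12\dots N}=\phi_N$, propagate to all configurations by the operators $\pi_i(\beta,\alpha)$, and kill common factors using explicitly factorized components; you also correctly locate the real content in Theorem \ref{primality-general}. But there is a genuine gap in your steps (a)--(b). You assert that the exchange operators ``preserve $\bZ[\btau,\bnu]$'' so that polynomiality and integrality come ``in one stroke''. They do not: by eq.~(\ref{divided-ab}) the operator $\pi_i(\beta,\alpha)$ carries an explicit denominator $\tau_\alpha-\nu_\beta$, and only the $z_i-z_{i+1}$ denominator cancels automatically against $G-s_i\circ G$. Whether the factors $\tau_{w_i}-\nu_{w_j}$ accumulated along a chain of exchange moves actually cancel is precisely the non-formal part of the paper's proof of Theorem \ref{primality-general}: starting from the explicitly known $\psi_{N(N-1)\dots 1}(\bz)$, one observes that a dangerous denominator $\tau_{w_i}-\nu_{w_j}$ can only arise from a pair $i<j$ with $w_i<w_j$, and then uses the cyclic symmetry of the ring to rotate $\bw$ so that this pair appears in the harmless order, whence that factor cannot occur. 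Without an argument of this kind, polynomiality in $\btau,\bnu$ (and hence integrality, which does follow formally once polynomiality is known) is unproven, and your appeal to Gauss's lemma has nothing to act on.

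Two smaller points on (c). Your reduction of coprimality to the linear factors $\tau_\alpha-\nu_\beta$ of $\phi_N$ is sound, but your proposed witness --- a configuration with $\beta$ immediately to the right of $\alpha$ --- is not one for which the paper (or your argument) has an explicit formula; the paper instead uses the single uniform witness $\psi_{N(N-1)\dots 1}$, whose factorization into the polynomials $\mathfrak S^{(\beta,N-\beta+1)}$ visibly shares no factor with $\phi_N$, which settles all pairs $(\alpha,\beta)$ at once. Finally, the theorem concerns the specialization $\bz=\infty$; this is harmless here only because both witnesses remain explicit there ($\phi_N$ and a product of double Schubert polynomials), a point worth stating rather than leaving implicit.
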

Moreover numerical computations at small sizes suggest the following
\begin{conjecture}\label{conj-positivity}
With the normalization given by
eq.(\ref{norm-general}), the components $\psi_{\bw}(\btau,-\bnu)$  are
polynomials in $\btau,\bnu$, with positive integer coefficients.
%
%The integer coefficients in Theorem  \ref{theo-minimal-poly} are all non negative.
\end{conjecture}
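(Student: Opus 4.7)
The strategy I would pursue is to construct an explicit combinatorial formula for $\psi_{\bw}(\btau,-\bnu)$ as a sum over a finite set of combinatorial objects, each contributing a monomial in the variables $\{\tau_\alpha + \nu_\beta\}_{\alpha<\beta}$ with coefficient $+1$. Since such monomials are polynomials in $(\btau,\bnu)$ with non-negative integer coefficients, positivity and integrality would follow immediately. The natural candidates are a suitable $\bnu$-deformation of the Ferrari--Martin multiline queues used by Ayyer--Linusson and Arita--Mallick in the case $\bnu=0$, or equivalently the pipe dreams / RC-graphs that expand double Schubert polynomials positively (after the sign change $\bnu\to-\bnu$ which turns differences $\tau_\alpha-\nu_\beta$ into sums).

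First I would look for a matrix product ansatz in the spirit of Arita--Mallick, but with an extra $\bnu$-dependence. One would seek local operators $A_\alpha(\btau,\bnu)$ acting on an auxiliary space, each built out of elementary blocks whose entries are manifestly non-negative integer combinations of monomials in $\tau_\alpha+\nu_\beta$, together with a twist element $\Omega$, so that
\[
\psi_{w_1\dots w_N}(\btau,-\bnu)=\operatorname{Tr}\bigl(A_{w_1}\cdots A_{w_N}\,\Omega\bigr).
\]
The consistency of this ansatz would be checked using the hopping relations that the operators $A_\alpha$ must satisfy in order that $\Psi_N(\btau,\bnu)$ be stationary: these relations arise directly from the Yang--Baxter integrability established in Section \ref{integr-section}, and are the natural $\btau,\bnu$-deformation of the Arita--Mallick quadratic algebra.

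An alternative, which I find more promising because it explains the appearance of Schubert polynomials, is to use the $\check R$-matrix of the model as a ladder operator. Here the idea is to start from the reference component $\psi_{12\dots N}=\phi_N(\btau,\bnu)$ and to generate all other components by successive applications of the rational $\check R_i(\btau,\bnu)$, which between sites $i,i+1$ takes the general form
\[
\check R_i(\btau,\bnu) = 1 + (\tau_i-\nu_{i+1})\,\partial_i^{\mathrm{BGG}},
\]
for an appropriate Bernstein--Gelfand--Gelfand-type divided difference. Under the substitution $\bnu\to-\bnu$, the coefficient becomes $\tau_i+\nu_{i+1}\geq 0$, and the iterated action of such operators on the positive monomial $\prod(\tau_\alpha+\nu_\beta)^{\beta-\alpha-1}$ produces, by induction on the length of the underlying permutation, a positive integer combination of monomials in $\tau_\alpha+\nu_\beta$. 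This is formally the same mechanism by which double Schubert polynomials are generated from the top monomial by divided differences, and it would be the route explaining the Lam--Williams Schubert phenomenon (Conjectures 3 and 4 of \cite{lam2012markov}) in the $\bnu=0$ specialization.

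The hard part, in either approach, will be to show that positivity is actually preserved at each step. The divided difference $\partial_i^{\mathrm{BGG}}$ is not itself positivity-preserving on arbitrary polynomials, and the matrix product entries must be chosen with the right non-negative structure; the Yang--Baxter equation alone does not single out such a presentation. One has to exhibit a combinatorial model --- my guess is a two-parameter generalization of multiline queues in which each cell is labeled by a pair $(\alpha,\beta)$ and contributes weight $\tau_\alpha+\nu_\beta$ --- such that the induced global action of $\check R_i$ corresponds to a manifestly positive local rule on these labels. Verifying that this rule matches the algebraic action of $\check R_i$, and that the sum telescopes correctly on the periodic ring (in particular reproducing $\phi_N(\btau,\bnu)$ on the identity configuration), is where the real combinatorial work lies; all other ingredients are essentially inductive on the weak Bruhat order.
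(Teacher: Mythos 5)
This statement is a \emph{conjecture} in the paper, not a theorem: the author offers no proof, only the remark that ``numerical computations at small sizes suggest'' it, and explicitly poses as open questions exactly the two routes you sketch (a combinatorial model generalizing multiline queues, and a matrix product ansatz representation). So there is no proof in the paper to compare against, and your text is, by your own account, a research program rather than a proof. The gap is the one you yourself identify but do not close: in both of your approaches the positivity-preservation step is asserted as the ``hard part'' and left unestablished. Concretely, the actual exchange relations of the paper read $\psi_\bw = \pi_i(w_{i+1},w_i)\,\psi_{s_i\circ\bw}$ with $\pi_i(\beta,\alpha)$ carrying an explicit denominator $\tau_\alpha-\nu_\beta$ (eq.~(\ref{divided-ab})); after $\bnu\to-\bnu$ this is a division by $\tau_\alpha+\nu_\beta$, so even \emph{polynomiality} of the result is nontrivial (it is the content of Theorem~\ref{primality-general}), and positivity certainly does not follow from iterating such operators on a positive monomial --- divided differences of positive polynomials are not positive in general, as you concede. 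Your proposed form $\check R_i = 1+(\tau_i-\nu_{i+1})\partial_i^{\mathrm{BGG}}$ also misattributes the parameters: the rates depend on the \emph{species} $(w_i,w_{i+1})$ occupying the sites, not on the site indices, which is precisely why the operators $\pi_i(\alpha,\beta)$ generalize the $0$-Hecke generators rather than reducing to ordinary BGG operators.

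What would be needed to turn your outline into a proof is the explicit construction of the conjectured two-parameter combinatorial objects (or the deformed quadratic algebra with manifestly non-negative matrix entries), together with a verification that their weighted enumeration satisfies the exchange equations (\ref{exch2}) and the normalization (\ref{norm-general}). None of that is supplied, and it is exactly the open problem the paper records. The parts of your sketch that are verifiable --- that $\psi_{12\dots N}$ and $\psi_{N(N-1)\dots 1}$ are positive after $\bnu\to-\bnu$, via eq.~(\ref{def_norm-general}) and the Schubert factorization of Theorem~\ref{theo-product-general} --- are consistent with the paper but cover only isolated components, not arbitrary $\bw$.
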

A natural question to ask is whether these coefficients have any 
combinatorial origin. This amounts to ask whether there exist
combinatorial objects, possibly generalizing the multiline queues 
of Ferrari and Martins, whose appropriately weighted enumerations
coincide with $\psi_\bw(\btau,-\bnu)$.
A related question is to construct a Matrix Product Ansatz \cite{derrida1993exact}
representation of the stationary measure.

In this paper we study the stationary measure through an approach
which is based on ideas introduced by Di Francesco and Zinn-Justin
in the context of the stochastic dense O(1) loop model
\cite{di2005around}. For a system on a ring of length $L$ we deform
the master equation for the stationary measure by introducing
scattering matrices that depends on $L$ spectral parameters
$\bz=\{z_1, \dots,z_L\}$. The scattering matrices have a common stationary
state $\Psi_N(\bz)$ that reduces to $\Psi_N(\btau,\bnu)$ for
$z_i=\infty$. We show that $\Psi_N(\bz)$ is solution of 
 a set of exchange equations. Such equations
involve certain divided difference operators $\pi_i(\alpha,\beta)$
(see eq.(\ref{divided-ab}) for their definition) that
generalize the isobaric divided difference operators of Lascoux and
Sch\"utzenberger and whose commutation relation generalize to one
satisfied by the generators of the $0-$Hecke algebra \cite{lascoux1983symmetry,fomin1994grothendieck}. 

By analyzing the exchange equations we
prove exact expressions for the components associated to several
configurations. Notably we show that the component $\psi_{N(N-1)\dots
  1}(\bz)$ associated to the configuration 
$$
N(N-1)\dots 21
$$
factorizes in terms of polynomials $\mathfrak S^{r, s}(\bz)$ that
correspond to a $\bz$ deformations of certain Double Schubert
polynomials, thus proving a generalization of one of Lam and Williams
conjectures.      
As a byproduct of the analysis we show that many components
factorize in terms of $\mathfrak S^{r, s}(\bz)$. Moreover, the knowledge of the explicit form of $\psi_{N(N-1)\dots
1}(\bz)$ allows to compute the so called partition function
$\cZ_N(\bz,\btau,\bnu)$, i.e. the sum of all the components and also
to show a remarkable factorization of the stationary measure that has
already been proven for $\nu_\alpha=0$ by Aas and Sj{\"o}strand
\cite{aas2013product} using multiline queues enumerations.

The paper is organized as follows. In Section \ref{notation} we 
discuss briefly multispecies exclusion processes fixing the notations
we use in the rest of the paper. 
In Section \ref{integr-section} we 
discuss the Yang-Baxter integrability of the multispecies exclusion
processes, in particular we show how integrability leads to the
exchange rates discussed above.

Sections \ref{exch-section} and \ref{sect:solution} form the core of
the paper. 
In Section \ref{exch-section} we deform the master equation for the
stationary measure by introducing scattering operators that involve
the spectral parameters. Then unique stationary measure of the 
scattering operators depends on the spectral parameters and reduces
the the stationary measure of the original M-TASEP when
the spectral parameters are set to $0$. In the same section we show
that the stationary measure of the scattering operators can be
normalized in such a way to satisfy certain exchange equations. In
Section \ref{sect:solution} we start by analyzing the exchange
equations, by expressing them in terms of divided difference
operators acting on the configurations probabilities. Then in Section 
\ref{sect:triv-fact} we derive trivial factors of the components, this
allows to compute $\psi_{12\dots N}(\bz)$ and to determine the degree
$\Psi_N(\bz)$ as a polynomial in $\bz$. In Section \ref{sect:recurs}
we derive recursions relating $\Psi_N(\bz)$ and $\Psi_{N-1}(\bz)$.
These will be used in Section \ref{sect:desc-conf} to provide an the
formula for $\psi_{N(N-1)\dots 1}(\bz)$. Then in Section
\ref{sect:normal} we show the factorization property of the stationary
measure and compute the partition function.

In Appendix \ref{app-schubert} we present the definition of
Double Schubert polynomials, while in Appendix \ref{app-proofs} we
gather some technical results used in the paper.

\section{Multispecies exclusion processes} \label{notation}

In \emph{multispecies exclusion processes}
each site of a periodic oriented lattice (a ring) is occupied by a  
particle, and the particles belong to different species labeled by
integers. 
The dynamics takes place in continuous time and consists of local
updates of pairs of neighboring sites: if the site $i$ is occupied by a
particle of specie $\alpha$ and site $i+1$ is occupied by a particle
of specie $\beta$ then the exchange rate is
$r_{\alpha,\beta}$.  
\vskip .7cm
\begin{center}
\includegraphics[scale = .45]{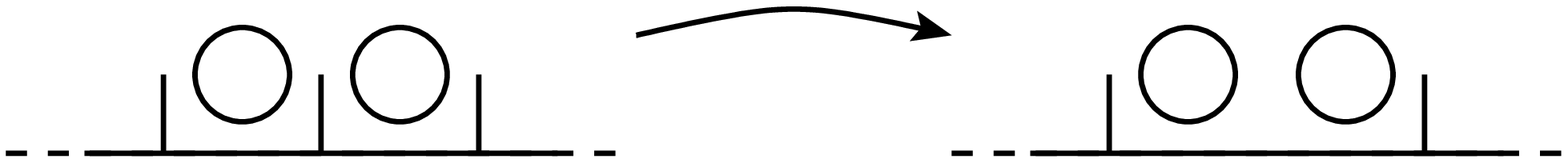}
\end{center}
\begin{picture}(0,0)
\put(170,49){$r_{\alpha,\beta}$}
\put(86,29){$\alpha$}
\put(265,29){$\alpha$}
\put(112,28){$\beta$}
\put(240,28){$\beta$}
\end{picture}

\vskip -.3cm
Particles belonging to the same specie are
considered as indistinguishable, therefore there is no rate
$r_{\alpha,\alpha}$.  

For a periodic lattice of length $L$, a configuration
$\bw=\{w_1w_2\dots w_L\}$ is specified by assigning to 
each site $i$ an  integer $w_i$ corresponding to the specie of the
 particle occupying that site.
Since the dynamics preserves the number of particles, the system is
completely specified when we fix $m_\alpha$, the number of particles
of species $\alpha$ present on the lattice. If there are $N$ different
species of particles on the lattice, we can assume (up to relabeling
of the species index) that $m_\alpha>0$ for $1\leq \alpha\leq N$ and
call $\bm=\{m_1,\dots,m_N\}$ the \emph{species   content}. 
\begin{example}
Here is an example of a configuration for a system of size $L=8$
$$
\bw=\{1,2,2,4,1,5,3,5\},~~~~~\bm(\bw)=\{2,2,1,1,2\}. 
$$
\end{example}
We call $\mathcal H_{\bm}$ the vector space of states of content
$\bm$, with a preferred  basis $v_\bw$ labeled by configurations $\bw$ such that
$\bm(\bw)=\bm$.  In order to write explicitly the Markov matrix
defining the stochastic evolution of the system on $\mathcal
H_{\bm}$, it is convenient to introduce $\mathcal H^{L}$, the space with
unconstrained content, which has a
preferred basis labeled by all the configurations $\bw$ of length
$L(w)=L$. Such space $\mathcal
H^{L}$ has a natural tensor product structure
$$
\mathcal H^L = V_1\otimes \cdots
\otimes V_i\otimes \cdots \otimes V_L  
$$  
where $V_i \simeq V\simeq \bC^\infty$, with a preferred basis
$\{v_\alpha|\alpha\in \bN \}$. The state space $\mathcal H_{\bm}$ of
a M-TASEP of content ${\bf   m}$ is naturally embedded as a subspace
in $\mathcal H^{L}$ with $L=|\bm|:=\sum_{\alpha=1}^Nm_\alpha$.

Call $p_\bw(t)$ the probability of having a the configuration
$\bw$ of content $\bm$ at time $t$. The time evolution of $p_\bw(t)$ is
determined by the Master equation 
$$
\frac{d}{dt}\mathcal P_\bm(t) =  \cM \mathcal P_\bm(t)
$$
where the probabilities $p_\bw(t)$ are gathered in the vector 
$$
\mathcal P_\bm(t):=\sum_{\bw|\bm(\bw)=\bm}
p_\bw(t) v_\bw 
$$
and  the Markov matrix $\cM$ is written as a
sum of local exchange terms 
\be
\cM = \sum_{i=1}^L h_{i} .
\ee
The operator $h\in \End(V\otimes V)$, which accounts for the local
exchange rates on two consecutive sites, reads
\be
h = \sum_{\alpha ,\beta \in \bZ} r_{\alpha,\beta} T^{(\alpha,\beta)}
\ee
with%where $p_{\alpha,\beta}$ is the rate of exchange of the ordered pair
%of particles $\alpha \leftrightarrow \beta$,
\be
T^{(\alpha,\beta)} = E^{(\beta,\alpha)} \otimes
E^{(\alpha,\beta)} - E^{(\alpha,\alpha)}\otimes E^{(\beta,\beta)},  
\ee
and the elementary operators $E^{(\alpha,\beta)}\in\End (V)$ act on
the basis $\{v_\gamma\}$ of $V$ by $E^{(\alpha,\beta)} v_\gamma=
\delta_{\beta,\gamma} v_\alpha$.
The operators $h_{i}$ act locally on the tensor product
$V_i\otimes V_{i+1}$
$$
h_{i}= {\bf 1}_1\otimes \cdots {\bf 1}_{i-1}\otimes h \otimes  {\bf
  1}_{i+2} \otimes \cdots \otimes {\bf 1}_L.
$$
In the present paper we will be concerned only with the stationary
measure $\mathcal P_\bm$. For sufficiently generic rates
$r_{\alpha,\beta}$ such measure is unique and is given by  the 
solution of the master equation
\be\label{eq-stationarity}
\cM \mathcal P_\bm=0.
\ee
The approach we will adopt consists in deforming the
previous master equation. In order to this we have first to discuss
the integrability of the M-TASEP.

\section{Integrability}\label{integr-section}

The standard way to show the Yang-Baxter integrability of an operator  
like $\cM$, given by the sum of local operators, is to find  
$\check R_{i}(x,y)$ matrices, acting on $V_i\otimes
V_{i+1}$, 
%depending on two spectral parameters
%$x,y$ that we shall consider just as formal commuting variables, and 
such that   
\be\label{simple-rel}
\begin{split}
&\check R_{i}(x,x)={\bf 1}\\
&\check R_{i}(x,y)\check R_{i}(y,x)={\bf 1}\\
&\frac{d}{dx} \check R_{i}(x,y)|_{x=y=c} \propto h_{i}
\end{split}
\ee
and such that they satisfy the braided Yang-Baxter equation
\be\label{YBE1}
\check R_{i}(y,z)\check R_{i+1}(x,z)\check R_{i}(x,y)=\check
R_{i+1}(x,y)\check R_{i}(x,z)\check R_{i+1}(y,z). 
\ee
Motivated by the fact that $h$ is itself the sum of more
elementary operators $T^{(\alpha,\beta)}$, we search $\check R$ matrices of
the baxterized form
\be\label{baxterized1}
\check R_{i}(x,y) = {\bf 1} + \sum_{\alpha, \beta \in \bZ}
g_{\alpha,\beta}(x,y) T_{i}^{(\alpha,\beta)}.    
\ee
The following result is probably well-known, but apparently never
stated explicitly
\begin{theorem}
Assume that $\forall \alpha,\beta$  $g_{\alpha,\beta}(x,y)$ do
not vanish identically. Then, up to reparametrization of the spectral
variables $(x,y)$ and reordering of the species labels, the
baxterized solutions of eqs.(\ref{simple-rel},\ref{YBE1}) are labeled
by a parameter $q$ and read
\be\label{hecke-case}
g_{\alpha,\beta}(x,y) = \frac{x-y}{1-(q+q^{-1})y+ xy}
q^{\textrm{sign}(\alpha- \beta)}.  
\ee
\end{theorem}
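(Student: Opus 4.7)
The plan is to substitute the ansatz (\ref{baxterized1}) into the conditions (\ref{simple-rel}) and (\ref{YBE1}), reducing everything to a system of functional equations for the coefficients $g_{\alpha,\beta}(x,y)$. Since $T^{(\alpha,\beta)}$ acts non-trivially only on the two-dimensional subspace $\mathrm{span}(v_\alpha \otimes v_\beta,\, v_\beta \otimes v_\alpha)\subset V_i \otimes V_{i+1}$ (for $\alpha\neq\beta$), sending $v_\alpha \otimes v_\beta \mapsto v_\beta \otimes v_\alpha - v_\alpha \otimes v_\beta$, the operator $\check R_i(x,y)$ preserves each such subspace as well as each diagonal line $\mathbb{C}(v_\alpha \otimes v_\alpha)$. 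On the two-dimensional piece indexed by $\alpha\neq\beta$ it is represented in the basis $(v_\alpha \otimes v_\beta, v_\beta \otimes v_\alpha)$ by the $2\times 2$ matrix
$$\begin{pmatrix} 1 - g_{\alpha,\beta}(x,y) & g_{\beta,\alpha}(x,y) \\ g_{\alpha,\beta}(x,y) & 1 - g_{\beta,\alpha}(x,y) \end{pmatrix}.$$
The conditions $\check R_i(x,x)=\mathbf{1}$ and $\check R_i(x,y)\check R_i(y,x)=\mathbf{1}$ then immediately give $g_{\alpha,\beta}(x,x)=0$ and a bilinear ``unitarity'' identity coupling $g_{\alpha,\beta}$ with $g_{\beta,\alpha}$.

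Next, I would impose the braided Yang--Baxter equation (\ref{YBE1}) by applying both sides to $v_\alpha \otimes v_\beta \otimes v_\gamma$. When two or three of the labels coincide, the equation is either trivial or follows from the two-site unitarity together with $g_{\alpha,\beta}(x,x)=0$; the substantive case is three pairwise distinct labels, which yields six scalar identities, one per permutation of $(\alpha,\beta,\gamma)$ appearing in the final state. The key step is to extract from this system that the dependence of $g_{\alpha,\beta}(x,y)$ on the pair $(\alpha,\beta)$ must factor through the integer $\mathrm{sign}(\alpha-\beta)$: after relabeling the species so that they are totally ordered, one finds $g_{\alpha,\beta}(x,y)=F_+(x,y)$ for $\alpha<\beta$ and $g_{\alpha,\beta}(x,y)=F_-(x,y)$ for $\alpha>\beta$, with $F_+$ and $F_-$ sharing a common denominator. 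The non-vanishing hypothesis is essential here to allow division, and substituting this structure back collapses the six identities into a closed system for the pair $(F_+,F_-)$.

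Finally, I would solve this reduced system. The bilinear unitarity forces the ratio $F_+(x,y)/F_-(x,y)$ to be a constant, which I parametrize as $q^{-2}$; this produces the free parameter appearing in the statement. The remaining three-variable functional identity, together with $F_\pm(x,x)=0$, determines the common rational form $(x-y)/(1-(q+q^{-1})y+xy)$ up to a reparametrization $x\mapsto\varphi(x)$ of the spectral variable, yielding (\ref{hecke-case}); the differential normalization $\left.\partial_x\check R_i(x,y)\right|_{x=y=c}\propto h_i$ is then automatic. The main obstacle is the middle step: disentangling the six three-variable equations arising from the YBE on a distinct triple, so as to cleanly extract the $\mathrm{sign}(\alpha-\beta)$-dependence, is where the case analysis must be organized carefully, and it is precisely this step that produces the ``up to reparametrization and reordering'' freedom in the statement.
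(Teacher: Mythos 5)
A preliminary remark: the paper itself gives no proof of this theorem (it is introduced as ``probably well-known, but apparently never stated explicitly''), so the only internal point of comparison is the proof of Theorem \ref{Theo-par-sol}, which uses exactly the component-by-component strategy you adopt. Your setup is correct: the $2\times 2$ block form of $\check R_i(x,y)$, the consequence $g_{\alpha,\beta}(x,x)=0$ of $\check R_i(x,x)=\mathbf 1$, the bilinear unitarity identities, and the idea of organizing the Yang--Baxter equation by the species content of the states are all sound.

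The gap is in your triage of the YBE components. You assert that when two of the three labels coincide the resulting equations are ``either trivial or follow from the two-site unitarity together with $g_{\alpha,\beta}(x,x)=0$,'' and that the only substantive case is three pairwise distinct labels. This is backwards. Restricting (\ref{YBE1}) to states of content $\{\alpha,\alpha,\beta\}$ yields the Yang--Baxter equation of the two-species model, i.e.\ a six-vertex-type YBE; writing $a_{uv}=g_{\alpha,\beta}(u,v)$ and $b_{uv}=g_{\beta,\alpha}(u,v)$, one of its components reads
\[
a_{xy}+a_{yz}-a_{xz}\;=\;a_{xy}a_{yz}+b_{xy}\,a_{yz}\,(1-a_{xz}),
\]
a genuine three-spectral-parameter functional equation that cannot be a consequence of unitarity (which couples only the two points $(x,y)$ and $(y,x)$). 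These repeated-label equations are precisely where the invariant $q+q^{-1}$ and the rational form $(x-y)/(1-(q+q^{-1})y+xy)$ come from; the paper's proof of Theorem \ref{Theo-par-sol} follows this order, using the two-species restriction first to fix the functional form and the three-distinct-species components only afterwards to match parameters across pairs. Relatedly, your claim that unitarity forces $F_+/F_-$ to be constant is not correct: the two off-diagonal unitarity identities give only $F_+(x,y)/F_-(x,y)=F_+(y,x)/F_-(y,x)$, i.e.\ symmetry of the ratio; constancy is again a YBE statement (it does follow, for instance, from the diagonal component of the distinct-triple equation, which gives $F_+(x,y)F_-(y,z)=F_+(y,z)F_-(x,y)$). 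As written, your argument discards the equations that determine the spectral-parameter dependence and misattributes the ratio condition, so the derivation of (\ref{hecke-case}) is incomplete even as a plan; at a minimum you must retain the repeated-label sector, both to derive the functional form and to verify that the candidate solution satisfies the full system.
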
  
\begin{remark} The solution corresponding to (\ref{hecke-case}) is
nothing else than the baxterization of the Hecke algebra, indeed the
operators
\be
E_i = \sum_{\alpha,\beta \in \bZ}q^{\textrm{sign}(\alpha- \beta)} T_{i}^{(\alpha,\beta)}
\ee
satisfy the Hecke relations \cite{alcaraz1993reaction}
\be
\begin{split}
E^2_i&=-(q+q^{-1})E_i\\
[E_i,E_j]&=0~~\textrm{for}~~|i-j|>1\\
E_iE_{i+1}E_i-E_i&=E_{i+1}E_iE_{i+1}-E_{i+1}.
\end{split}
\ee
\end{remark}
In order to obtain a richer family of solutions we set to zero
the function $g_{\alpha,\beta}(x,y)$ for $\alpha>\beta$.
\begin{theorem}\label{Theo-par-sol}
Suppose that $g_{\alpha,\beta}(x,y)=0$ for $\alpha>\beta$, while
$g_{\alpha,\beta}(x,y)$  not identically zero for $\alpha<\beta$. Then
the most
general solution of eqs.(\ref{simple-rel},\ref{YBE1}), of the form
(\ref{baxterized1}), is given for $\alpha<\beta$ by
\be\label{gen-sol-Tasep}
g_{\alpha,\beta}(x,y) = g(x,y|\tau_\alpha,\nu_\beta):=
1-\frac{f(x|\tau_\alpha,\nu_\beta)}{f(y|\tau_\alpha,\nu_\beta)},
\ee
with
$$
f(x|\tau,\nu)=\frac{x-\tau}{x-\nu}. 
$$
\end{theorem}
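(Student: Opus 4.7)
The plan is to split the proof into three stages: reducing the unitarity condition to a scalar constraint on each $g_{\alpha,\beta}$, extracting functional equations from the braided Yang--Baxter equation (\ref{YBE1}) on three-species sectors, and solving those equations to recover the announced parametric form.

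For the first stage, the key algebraic observation is that for strictly upper pairs $\alpha<\beta$ the operator $T^{(\alpha,\beta)}$ is supported on $v_\alpha\otimes v_\beta$ and satisfies $(T^{(\alpha,\beta)})^2 = -T^{(\alpha,\beta)}$, while $T^{(\alpha,\beta)}T^{(\gamma,\delta)} = 0$ whenever $(\alpha,\beta)\neq (\gamma,\delta)$ are both strictly upper pairs. Expanding $\check R(x,y)\check R(y,x)$ in the Ansatz (\ref{baxterized1}) then decouples unitarity into the pairwise scalar identity $(1 - g_{\alpha,\beta}(x,y))(1 - g_{\alpha,\beta}(y,x)) = 1$, together with $g_{\alpha,\beta}(x,x) = 0$ from $\check R(x,x) = {\bf 1}$. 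For convenience, set $F_{\alpha,\beta} := 1 - g_{\alpha,\beta}$.

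For the second stage, since $\check R_i$ preserves species content, the YBE on $V^{\otimes 3}$ decomposes into independent equations on each content sector, and the constraining case is a triple of distinct species $\alpha<\beta<\gamma$, on which the equation lives on the six-dimensional span of the $S_3$-orbit of $v_\alpha\otimes v_\beta\otimes v_\gamma$. Abbreviating $A := F_{\alpha,\beta}$, $B := F_{\beta,\gamma}$, $C := F_{\alpha,\gamma}$, I would apply the YBE to each of the initial vectors $|\gamma\alpha\beta\rangle$, $|\beta\gamma\alpha\rangle$, $|\alpha\gamma\beta\rangle$ (each fixed by at least one of the two elementary operators $\check R_1, \check R_2$) to extract the chain rules $A(x,y)A(y,z) = A(x,z)$ and the analogous ones for $B, C$; combined with unitarity these force $F_{\alpha,\beta}(x,y) = f_{\alpha,\beta}(x)/f_{\alpha,\beta}(y)$ for some scalar function $f_{\alpha,\beta}$, determined up to an overall multiplicative constant. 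Applying the YBE also to $|\alpha\beta\gamma\rangle$ and reading off the coefficients of $|\gamma\alpha\beta\rangle$ and $|\beta\gamma\alpha\rangle$, after cancelling the common factors $A(x,y)$ and $B(y,z)$ respectively, yields the cross-relations
\[(1-B(x,z))(1-C(y,z)) = (1-B(y,z))(1-C(x,z)),\]
\[(1-A(x,y))(1-C(x,z)) = (1-A(x,z))(1-C(x,y)).\]

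For the third stage, writing $a, b, c$ for $f_{\alpha,\beta}, f_{\beta,\gamma}, f_{\alpha,\gamma}$ and using $1-F(x,y) = (f(y)-f(x))/f(y)$, the first cross-relation becomes $(b(z)-b(x))(c(z)-c(y)) = (b(z)-b(y))(c(z)-c(x))$; read as a linear constraint in $b(z), c(z)$ at fixed $x, y$, this forces $c = \lambda b + \mu$ for constants $\lambda,\mu$, so $b$ and $c$ share a pole that depends only on their common upper index $\gamma$, which I call $\nu_\gamma$. Specialising the second cross-relation at any point $\tau_\alpha$ where $a$ vanishes forces $c(\tau_\alpha)=0$, so $a$ and $c$ share a zero depending only on their common lower index $\alpha$. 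Running this over every triple shows that each $f_{\alpha,\beta}$ has a zero at $\tau_\alpha$ and a pole at $\nu_\beta$; using the reparametrisation freedom of the spectral variable to eliminate any additional zeros or poles then forces $f_{\alpha,\beta}(x) = (x-\tau_\alpha)/(x-\nu_\beta)$ up to an overall scale, giving the announced formula. The main obstacle lies in this last stage: the separate scaling gauges of the $f_{\alpha,\beta}$ must be reconciled consistently across all triples, and one must justify that the allowed reparametrisation of the spectral variable really suffices to reduce each $f_{\alpha,\beta}$ to the single-zero, single-pole M\"obius form. Verifying that the resulting expression satisfies the remaining coefficient equations of the YBE (finitely many per triple) is then routine.
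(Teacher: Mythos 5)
Your overall strategy coincides with the paper's: both restrict the braided Yang--Baxter equation (\ref{YBE1}) to a sector of three distinct species $\alpha<\beta<\gamma$, extract from selected matrix elements (i) a multiplicative cocycle condition forcing $1-g_{\alpha,\beta}(x,y)=f_{\alpha,\beta}(x)/f_{\alpha,\beta}(y)$ and (ii) cross-relations tying the different $f_{\alpha,\beta}$ together (your two cross-relations are the analogues of the paper's (\ref{abceq1},\ref{abceq2})), and both finish by verifying that the remaining components vanish for the announced form. Your component equations are correct --- the coefficients of $|\gamma\alpha\beta\rangle$ and $|\beta\gamma\alpha\rangle$ do give exactly the two identities you state after cancelling $F_{\alpha,\beta}(x,y)$ and $F_{\beta,\gamma}(y,z)$ --- and your explicit reduction of unitarity via $(T^{(\alpha,\beta)})^2=-T^{(\alpha,\beta)}$ and the vanishing of mixed products is a clean version of what the paper leaves implicit.

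The gap is exactly where you flag it, and as written it is a genuine one: in the third stage you speak of ``the zero $\tau_\alpha$'' and ``the pole $\nu_\beta$'' of $f_{\alpha,\beta}$ before having established that $f_{\alpha,\beta}$ is rational, whereas the cocycle condition alone allows, say, $f_{\alpha,\beta}(x)=e^{x}$, which has no zeros or poles, so ``specialising at a point where $a$ vanishes'' may be vacuous; and there is only one global reparametrisation of the spectral variable, so it cannot be applied separately to each $f_{\alpha,\beta}$ to ``eliminate additional zeros or poles''. The paper closes this by extracting more from the cross-relations than you do: your first one gives the full affine relation $f_{\beta,\gamma}=\lambda f_{\alpha,\gamma}+\mu$ (which you state), and your second one, read in full rather than only at zeros of $a$, gives a linear relation between $1/f_{\alpha,\beta}$ and $1/f_{\alpha,\gamma}$, i.e.\ a M\"obius relation between them. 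Chaining these over all triples shows that every $f_{\alpha,\beta}$ is a projective transform of one fixed arbitrary function $f_0$; the single reparametrisation $x\mapsto f_0^{-1}(x)$ then turns each $f_{\alpha,\beta}$ into a genuine M\"obius function $(x-\tau_{\alpha,\beta})/(x-\nu_{\alpha,\beta})$, up to an overall scale that cancels in $g$. Substituting this normal form back into the two cross-relations is what yields $\tau_{\alpha,\beta}=\tau_{\alpha,\gamma}$ and $\nu_{\alpha,\gamma}=\nu_{\beta,\gamma}$, i.e.\ dependence on the first and second index only --- a step your sketch also leaves implicit. With these repairs your argument becomes the paper's proof of (\ref{gen-sol-Tasep}).
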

\begin{proof}
Call $Y\!B$ the difference between left and right
hand side of the Yang-Baxter equations (\ref{YBE1}), then we look for
the solutions of the equations $Y\!B^{\Theta'}_\Theta=0$  with the 
multi-indices of $\Theta=\{\theta_1,\theta_2,\theta_3\}$ and
$\Theta'=\{\theta'_1,\theta'_2,\theta'_3\}$ which are related
by a permutation. 

If we restrict the elements of $\Theta$ to the set
$\{\alpha,\beta\}$, then what we obtain are the Yang-Baxter equations
of a 
problem with just the two species $\alpha$ and $\beta$.
Direct inspection of these equations shows that their solution take the form
$g_{\alpha,\beta}(x,y)=1-\frac{f_{\alpha,\beta}(x)}{f_{\alpha,\beta}(y)}$,
where at  this point   $f_{\alpha,\beta}(x)$ is an arbitrary function.
In order to see how the different functions $f_{\alpha,\beta}(x)$ are
related one has to look at equation in which all the three species
$\alpha,\beta$ and $\gamma$ appear.
Let's look at the equations
$YB^{\{\beta,\gamma,\alpha\}}_{\{\alpha,\beta,\gamma\}}=0$ and
$YB^{\{\gamma,\alpha,\beta\}}_{\{\alpha,\beta,\gamma\}}=0$, which read 
respectively 
\begin{align}\label{abceq1}
%YB^{\{\beta,\gamma,\alpha\}}_{\{\alpha,\beta,\gamma\}}=
(g_{\alpha,\beta}(y,z)g_{\alpha,\gamma}(x,z)-g_{\alpha,\beta}(x,z)g_{\alpha,\gamma}(y,z)) 
(g_{\beta,\gamma}(x,y)-1)&=0,\\\label{abceq2}
%YB^{\{\gamma,\alpha,\beta\}}_{\{\alpha,\beta,\gamma\}}=
(g_{\alpha,\gamma}(x,z)g_{\beta,\gamma}(x,y)-g_{\alpha,\gamma}(x,y)g_{\beta,\gamma}(x,z)) 
(g_{\alpha,\beta}(y,z)-1)&=0.
\end{align}
If $\alpha < \beta,\gamma$ then the functions $g_{\alpha,\beta}(x,y)$
and $g_{\alpha,\gamma}(x,y)$ are not identically zero and one can
rewrite eq.(\ref{abceq1}) as 
\be\label{abceq3}
\frac{g_{\alpha,\beta}(x,z)}{g_{\alpha,\gamma}(x,z)}=
\frac{g_{\alpha,\beta}(y,z)}{g_{\alpha,\gamma}(y,z)}= F^{\alpha,\beta,\gamma}_1(z) 
\ee
for some $F^{\alpha,\beta,\gamma}_1(z)$ which depends only on $z$. Analougously if $\gamma >
\alpha,\beta$, eq.(\ref{abceq2}) can be rewritten as
\be\label{abceq4}
\frac{g_{\alpha,\gamma}(x,y)}{g_{\beta,\gamma}(x,y)}=
\frac{g_{\alpha,\gamma}(x,z)}{g_{\beta,\gamma}(x,z)}=F^{\alpha,\beta,\gamma}_2(x)  
\ee
for some $F^{\alpha,\beta,\gamma}_2(x)$ which depends only on $x$.
Once expressed in terms of $f_{\alpha,\beta}(x), f_{\alpha,\gamma}(x)$
and $f_{\beta,\gamma}(x)$, eqs.(\ref{abceq3},\ref{abceq4}) imply that
these functions are  related  one to the other by projective
transformations. Therefore without loosing generality we  can assume     
$$
f_{\alpha,\beta}(x)=\frac{x-\tau_{\alpha,\beta}}{x-\nu_{\alpha,\beta}}
$$
When this form of $f_{\alpha,\beta}(x), f_{\alpha,\gamma}(x)$
and $f_{\beta,\gamma}(x)$ is plugged into eq.(\ref{abceq3}) one finds
that $\tau_{\alpha,\beta}=\tau_{\alpha,\gamma} $, while from
eq.(\ref{abceq4}) one finds $\nu_{\alpha,\gamma}=\nu_{\beta,\gamma}$
which mean that the parameters $\tau$ and $\nu$ depend 
only on the first and on the second index respectively.
In order to conclude the proof it is sufficient to check that with
the choice $g_{\alpha,\beta}(x,y)$ of eq.(\ref{gen-sol-Tasep}) all the
other components of the Yang-Baxter equations automatically vanish, which is
the case.
\end{proof}
\begin{remark}
Actually with just a little more annoying work one can relax the
hypothesis of Theorem \ref{Theo-par-sol}. It is enough to assume that
for some $\gamma< \delta$, $g_{\delta,\gamma}(x,y)=0$, while
$g_{\gamma,\delta}(x,y)$ not identically zero. Then, up to relabeling
of the species, one can deduce that $g_{\alpha,\beta}(x,y)=0$ for $\alpha>\beta$.
\end{remark}
\begin{remark}
The solution of the Yang-Baxter equation of Theorem \ref{Theo-par-sol}
was first found in an implicit form in \cite[Appendix
  B]{cantini2008algebraic}. Recently such solution has found a nice
algebraic formulation and generalization in \cite{crampe2015new}.
\end{remark}
The derivative of $\check R_{i}(x,y)$ specialized in $x=y=c$ reads
\be
h_{i}(c)=c^2\frac{d}{dx} \check R_{i}(x,y)|_{x=y=c} = \sum_{\gamma
  \leq \alpha<\beta \leq \delta}
\left(\frac{c^2}{\tau_\alpha-c}-\frac{c^2}{\nu_\beta-c} \right )T_{i}^{(\alpha,\beta)}. 
\ee
By setting $c=\infty$ we obtain the exchange rates
\begin{align}\label{rates}
  r_{\alpha \geq \beta}=0,&&r_{\alpha < \beta} = \tau_\alpha-\nu_\beta.&
\end{align}
This is the class of models whose stationary measure we analyze
in the rest of the paper. Certain particular cases of this class have already 
appeared in the literature. 

In the introduction we have already mentioned the work of Lam and
Williams \cite{lam2012markov} about the case $\nu_\alpha=0$,  
which has been one of the main
motivation of the present paper.

In \cite{karimipour1999multispecies} Karimpour studied the case in which $N$ species
of particles moves on a ring 
with empty spaces with the following rules: a particle of type
$\alpha$ moves to an ampty site with rate $v_\alpha$ (which is
interpreted as the ``speed'' of this specie), while two
particles of species $\alpha$ and $\beta$ exchange position with rate
$v_\alpha-v_\beta$ if it is positive or they do not move (without
loosing generality one can assume  $v_\alpha \geq v_{\alpha+1}$. This
model would correspond in our language to a system with $N+1$ species,
the $N+1$st corresponding to empty sites and parameters
$\nu_\alpha=\tau_\alpha=v_\alpha$ for $1\leq \alpha \leq N$ and
$\nu_{N+1}=0$. Using a Matrix Product Ansatz, Karimpour showed that the
stationary probability is simply the uniform measure, this result can
also be easily recovered with our approach that will be explained in
Section \ref{sect:exch-analysis}.

Another case has appeared in the work of R\'akos and Sch\"utz
\cite{rakos2005bethe}. They considered a system of $N$ species of
particles, each species moving to the right on empty sites with rates
$v_\alpha$ as in Karimpour's models, but exchange of particles is
forbidden and since the particles cannot exchange position, one can assume
each particle to be of a different species. In order to fit this model
in our language it is enough to identify particles of species
$N+1$ as empty sites and to forbid the exchange of particles of
successive species $\alpha$, $\alpha+1$, which means
$\nu_{N+1}=0$, and  $\nu_{\alpha}=\tau_{\alpha-1}$ for
$\alpha\leq N$. In this way, the subset of 
configurations in which (up to cyclic permutations) particles with
labels less then $N+1$ are ordered increasingly is absorbing and
preserved by time evolution.

\section{Exchange equations}\label{exch-section}

The goal of this section is to use the integrability to deform the
stationary eqs.(\ref{eq-stationarity}) by introducing the spectral parameters.
The starting point are the scattering matrices $S_{i}({\bf z})$,
defined by 
\be\label{def-scatt}
S_{i}({\bf z}):= \mathcal R \check
R_{i-2}(z_i,z_{i-1}) \dots \check R_{i+1}(z_i,z_{i+2}) \check R_i(z_i,z_{i+1})
\ee
where $\mathcal R \in \End(V_1\otimes V_2\otimes \cdots\otimes
V_N)$ is the operator that ``rotates'' our system
\be
\mathcal R (v_1\otimes v_2 \otimes \cdots \otimes
v_{N-1}\otimes v_N) := (v_N\otimes v_1\otimes v_2 \otimes \cdots \otimes
v_{N-1}).
\ee 
Thanks to the Yang-Baxter equation (\ref{YBE1}) it is easy to verify
that the scattering matrices commute among themselves 
\be
[S_i({\bf z}),S_j({\bf z})]= 0.%[S_i({\bf z}),\mathcal T(t\,|\,{\bf z})]=0.
\ee 
Moreover we have the following important
\begin{proposition}\label{markov-scatt}
The scattering matrices $S_i({\bf z})$, acting on $\mathcal H_{\bm}$, have a
single common eigenvector $\Psi_\bm(\bz)$ of eigenvalue $1$ for any $i$.
\end{proposition}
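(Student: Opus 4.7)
\emph{Proof plan.} The plan is to exhibit $1$ as an eigenvalue of each $S_i(\bz)$ by a stochasticity argument, and then to pin down both existence and uniqueness of a common right $1$-eigenvector by degenerating to the M-TASEP, whose stationary measure is known to be unique for generic rates.

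First I would observe that each elementary operator $T^{(\alpha,\beta)}=E^{(\beta,\alpha)}\otimes E^{(\alpha,\beta)}-E^{(\alpha,\alpha)}\otimes E^{(\beta,\beta)}$ has vanishing column sums, so the baxterized $\check R_i(x,y)=\mathbf{1}+\sum g_{\alpha,\beta}(x,y)\,T_i^{(\alpha,\beta)}$ preserves the all-ones covector $\langle\mathbf{1}|:=\sum_{\bw}\langle v_\bw|$. Since $\mathcal{R}$ is a permutation of basis vectors, it too preserves $\langle\mathbf{1}|$, whence $\langle\mathbf{1}|S_i(\bz)=\langle\mathbf{1}|$ for every $i$ and $1$ lies in the spectrum of each $S_i(\bz)$ on $\mathcal H_\bm$. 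This is enough to hope for a joint $1$-eigenvector, but does not yet supply one.

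For existence plus uniqueness I would use a perturbative degeneration. Setting $z_j=c+\epsilon u_j$ with the $u_j$ distinct and expanding $S_i(\bz)=\mathcal{R}+\epsilon\,\mathcal{R}F_i+O(\epsilon^2)$ using the derivative relation $\partial_x\check R_i(x,y)|_{x=y=c}\propto h_i$ from (\ref{simple-rel}), the conditions $S_i\Psi=\Psi$ at leading order require $\Psi_0\in\ker(\mathcal{R}-\mathbf{1})$ and $F_i\Psi_0\in\mathrm{Im}(\mathcal{R}-\mathbf{1})$ for each $i$. Projecting onto $\mathcal{R}$-invariants by the averaging $\tfrac{1}{L}\sum_k\mathcal{R}^k$ and using $\mathcal{R}h_j\mathcal{R}^{-1}=h_{j+1}$, these conditions collapse (thanks to the distinctness of the $u_j$) to the single M-TASEP stationarity equation $\cM\Psi_0=0$, which has a one-dimensional solution space for generic $\btau,\bnu$. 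Solving higher orders inductively (the relevant operator on the quotient by $\mathcal{R}$-invariants is $\cM$, invertible on the appropriate complement) would produce a unique formal solution $\Psi_\bm(\bz)$ in a neighbourhood of $\bz=c\mathbf{1}$. Upper semi-continuity of $\dim\bigcap_i\ker(S_i(\bz)-\mathbf{1})$ in $\bz$, together with polynomiality of the normalized solution (addressed later in the paper), then extends existence and uniqueness from this neighbourhood to generic $\bz$.

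\emph{Main obstacle.} The technical heart is the leading-order reduction of the $L$ conditions $F_i\Psi_0\in\mathrm{Im}(\mathcal{R}-\mathbf{1})$ to the single equation $\cM\Psi_0=0$, together with the inductive solvability at all higher orders. The commutativity $[S_i(\bz),S_j(\bz)]=0$ is essential for consistency of the perturbative expansion, and it in turn relies on the YBE (\ref{YBE1}) together with the unitarity relations $\check R(x,x)=\mathbf{1}$ and $\check R(x,y)\check R(y,x)=\mathbf{1}$ from (\ref{simple-rel}); these will be invoked to discard the many cross-terms that arise when expanding the long product structure of $S_i(\bz)$.
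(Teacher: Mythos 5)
Your opening observation (vanishing column sums of $T^{(\alpha,\beta)}$, hence $\langle\mathbf 1|S_i(\bz)=\langle\mathbf 1|$) is correct and is indeed one ingredient of the paper's argument, but from there you take a perturbative detour that both misses the short proof and leaves real gaps. The paper's route is: choose $\bz,\btau,\bnu$ so that $0<g_{\alpha,\beta}(z_i,z_j)<1$ for all $j\neq i$; then each $S_i(\bz)$ is an \emph{irreducible stochastic} matrix on $\mathcal H_\bm$, so Perron--Frobenius gives, for each $i$ separately, a unique right $1$-eigenvector $\Psi_i$ with strictly positive entries. Commutativity then forces each $\Psi_j$ to be an eigenvector of every $S_i$ with some eigenvalue $\lambda_{i,j}$, and $\lambda_{i,j}=1$ follows because $\lambda_{i,j}\langle\mathbf 1|\Psi_j\rangle=\langle\mathbf 1|S_i\Psi_j\rangle=\langle\mathbf 1|\Psi_j\rangle$ with $\langle\mathbf 1|\Psi_j\rangle>0$ by positivity; hence all the $\Psi_i$ coincide. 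You never exploit positivity/irreducibility, which is precisely what makes both existence and uniqueness immediate.

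The concrete gaps in your degeneration argument are these. First, the extension step at the end does not work as stated: upper semi-continuity of $\dim\bigcap_i\ker(S_i(\bz)-\mathbf 1)$ says the locus where this dimension is $\geq k$ is closed, so it can control \emph{uniqueness} away from a point where the dimension is known to be $1$, but it goes the wrong way for \emph{existence}; moreover at the center $\bz=c\mathbf 1$ every $S_i$ equals $\mathcal R$ and the common $1$-eigenspace is the full space of $\mathcal R$-invariants (dimension $(N-1)!$ in the single-particle-per-species sector), so you have no actual point where the dimension equals $1$, only a formal-power-series solution whose relevance to genuine parameter values is not established. Second, the inductive solvability at all orders (the Fredholm-alternative step on the complement of $\ker\cM$, and the consistency of the $L$ simultaneous conditions at each order, which must come from $[S_i,S_j]=0$) is asserted rather than proved, and this is exactly where the work would be. The first-order reduction to $\cM\Psi_0=0$ can in fact be justified (using that $\cM$ commutes with $\mathcal R$ and that the $u_j$ are distinct), but as written the argument is an unproved plan for the hard part of the statement, whereas a direct Perron--Frobenius argument settles everything.
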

\begin{proof}
By choosing $\bz$, $\btau$ and $\bnu$ such that $\forall j\neq i$, 
$0<g_{a,b}(z_i,z_j)<1$ we have that the matrices $S_i(\bz)$ are
irreducible stochastic matrices in any sector $\mathcal H_\bm$, hence
each of them has a single right eigenvector with eigenvalue $1$ in
$\mathcal H_\bm$, that we call $\Psi_i(\bz)$. It remains to show
that   $\Psi_i(\bz)=\Psi_j(\bz)$ for 
$i\neq j$. Suppose by absurd that this is not the case. Since
the matrices $S_i({\bf z})$ commute, all the vectors $\Psi_j(\bz)$ are
right eigenvectors of any  $S_i({\bf z})$.
By absurd therefore one should have that for some
$i\neq j$,  $S_i({\bf z})\Psi_j(\bz)= \lambda_{i,j}\Psi_j(\bz)$
with $\lambda_{i,j}\neq 1$. But this would mean that $\Psi_j(\bz)$ is
orthogonal to the common left eigenvector (whose entries are all
equal), which is impossible by the Perron-Frobenius theorem.
\end{proof}
Since the entries of $S_i({\bf z})$ are rational functions
of the variables ${\bf z}$, $\Psi_\bm({\bf z})$  can be normalized in such
a way that its entries are polynomials of such variables. 
The key result that allows to compute $\Psi_\bm({\bf z})$ is the following
\begin{theorem}[Exchange equations]
Let $\Psi_\bm(\bz)$ be the unique (up to scalar multiplication) common 
eigenvector of $S_i(\bz)$ in the sector $\mathcal H_\bm$, with eigenvalue $1$ and
normalized in such a way that its components are polynomials of $\bz$
of minimal degree. Then $\Psi_\bm(\bz)$ satisfies the following 
exchange equations
\be\label{exchange-eq}
\boxed{\check R_i(z_i,z_{i+1})\Psi_\bm(\bz) = s_i\circ \Psi_\bm(\bz)}
\ee 
where $s_i$ exchange the variables $z_i \leftrightarrow z_{i+1}$.
\end{theorem}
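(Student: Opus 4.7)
The plan is to argue by uniqueness: I will show that $\check R_i(z_i, z_{i+1})\Psi_\bm(\bz)$ is a common eigenvector with eigenvalue $1$ of all the scattering matrices $S_j(s_i\bz)$, then invoke Proposition~\ref{markov-scatt} to identify it with $\Psi_\bm(s_i\bz)$ up to a scalar, and finally pin that scalar down to $1$ using the minimality of the polynomial normalization.

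The core algebraic input is the conjugation identity
\[
\check R_i(z_{i+1}, z_i)\, S_j(s_i\bz)\, \check R_i(z_i, z_{i+1}) \;=\; S_{\sigma_i(j)}(\bz) \qquad (\forall j),
\]
where $\sigma_i$ denotes the transposition $i\leftrightarrow i+1$. I would derive it by dragging $\check R_i(z_i, z_{i+1})$ through the cyclic product of $\check R$-factors defining $S_j(s_i\bz)$ in (\ref{def-scatt}), using the braided Yang--Baxter equation (\ref{YBE1}) at every crossing and collapsing adjacent pairs with the unitarity $\check R_i(x,y)\check R_i(y,x)=\mathbf{1}$ from (\ref{simple-rel}); the rotation $\mathcal R$ contributes only the index shift $\mathcal R\,\check R_k\,\mathcal R^{-1}=\check R_{k+1}$. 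Three sub-cases arise. For $j=i$, the rightmost factor of $S_i(s_i\bz)$ is $\check R_i(z_{i+1},z_i)$, which cancels directly against $\check R_i(z_i,z_{i+1})$ by unitarity; YBE then reorganizes the surviving product into $S_{i+1}(\bz)$. The case $j=i+1$ is symmetric and yields $S_i(\bz)$. For $j \notin \{i,i+1\}$, one runs the standard ``train'' argument in which YBE alone is used to transport $\check R_i$ from the right end of the product to the left, without any unitarity cancellation.

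Granting the conjugation identity, I apply both sides to $\Psi_\bm(\bz)$; using $S_{\sigma_i(j)}(\bz)\Psi_\bm(\bz)=\Psi_\bm(\bz)$ and unitarity to move a $\check R_i(z_{i+1},z_i)$ off the left, one obtains
\[
S_j(s_i\bz)\bigl(\check R_i(z_i, z_{i+1})\Psi_\bm(\bz)\bigr)=\check R_i(z_i, z_{i+1})\Psi_\bm(\bz)
\]
for every $j$. Proposition~\ref{markov-scatt} applied at the point $s_i\bz$ then forces $\check R_i(z_i, z_{i+1})\,\Psi_\bm(\bz)=c(\bz)\,\Psi_\bm(s_i\bz)$ for some rational scalar $c(\bz)$. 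Applying $\check R_i(z_{i+1},z_i)$ to both sides, invoking unitarity on the left and the same identity (now at $s_i\bz$) on the right, yields the cocycle relation $c(\bz)\,c(s_i\bz)=1$.

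It remains to show $c(\bz)\equiv 1$. From the explicit expression $g_{\alpha,\beta}(x,y)=(\tau_\alpha-\nu_\beta)(y-x)/\bigl((x-\nu_\beta)(y-\tau_\alpha)\bigr)$ obtained from (\ref{gen-sol-Tasep}), the entries of $\check R_i(z_i,z_{i+1})$ are rational with controlled poles located only at $z_i=\nu_\beta$ and $z_{i+1}=\tau_\alpha$. Clearing the common denominator, comparing total degrees between $\check R_i(z_i,z_{i+1})\Psi_\bm(\bz)$ and $\Psi_\bm(s_i\bz)$, and using the assumption that the components of $\Psi_\bm(\bz)$ are polynomials of \emph{minimal} total degree, one rules out any non-trivial polynomial factor in $c(\bz)$; the cocycle $c(\bz)c(s_i\bz)=1$ then forces $c(\bz)=1$, proving (\ref{exchange-eq}). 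The anticipated main obstacle is the conjugation identity of the second step: the combinatorial bookkeeping needed to push $\check R_i(z_i,z_{i+1})$ cyclically through a product of $L-1$ other $\check R$-factors, while tracking the rotation $\mathcal R$ and ensuring that the result is exactly $S_{\sigma_i(j)}(\bz)$ rather than a mere conjugate, is where the technical weight of the proof sits.
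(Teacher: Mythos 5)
Your overall strategy (conjugation via Yang--Baxter, uniqueness from Proposition \ref{markov-scatt}, the cocycle relation $c(\bz)\,c(s_i\bz)=1$) is the same as the paper's, and the intertwining identity you worry about is actually the easy part — the paper only needs it for $j\neq i,i+1$, where it is a one-line consequence of (\ref{YBE1}), since a single commuting $S_j$ already suffices to invoke uniqueness. The genuine gap is in your last step. Degree comparison plus the cocycle relation does \emph{not} force $c(\bz)=1$: the function
$$
c(\bz)=\frac{(z_{i+1}-\nu_\beta)(z_i-\tau_\alpha)}{(z_i-\nu_\beta)(z_{i+1}-\tau_\alpha)}
$$
has total degree zero, satisfies $c(\bz)\,c(s_i\bz)=1$ and $c|_{z_i=z_{i+1}}=1$, and its denominator sits exactly on the allowed pole locus of $\check R_i(z_i,z_{i+1})$, so neither minimality of degree nor pole-matching against $\check R_i\Psi_\bm(\bz)$ rules it out. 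What the local arguments (poles of $\check R_i$, coprimality of the components coming from minimality, the cocycle, and $\check R_i(z,z)=\mathbf 1$) actually give is only that $c_i(\bz)=k^{(i)}_1(z_{i+1})k^{(i)}_2(z_i)/\bigl(k^{(i)}_1(z_i)k^{(i)}_2(z_{i+1})\bigr)$ for some products $k^{(i)}_1,k^{(i)}_2$ of linear factors in the $\tau$'s and $\nu$'s respectively.

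To kill these residual factors the paper uses a \emph{global} argument that your proof is missing: specialize $z_1=z$ and $z_j=w$ for $j\neq 1$, compose the exchange relations all the way around the ring so that the product $\check R_N\cdots\check R_1$ acts as a scattering-type operator, and contract with the common left Perron--Frobenius eigenvector (all of whose entries are equal). This forces $\prod_{i}k^{(i)}_1(w)k^{(i)}_2(z)/\bigl(k^{(i)}_1(z)k^{(i)}_2(w)\bigr)=1$, and since the $k^{(i)}_1$ are built from $\tau$'s and the $k^{(j)}_2$ from $\nu$'s, no cancellation is possible and every $k^{(i)}_a$ must be constant. You need to add this step (or an equivalent global constraint) — as written, your proof does not establish $c(\bz)\equiv 1$.
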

\begin{proof}
Take $j\neq i,i+1$, then from the Yang-Baxter equation (\ref{YBE1}) we
immediately find that 
$$
R_i(z_i,z_{i+1})S_j({\bf z}) = (s_i\circ S_j({\bf z}))R_i(z_i,z_{i+1}).
$$
This means that $R_i(z_i,z_{i+1})\Psi_\bm(\bz)$ is an eigenvector of
$(s_i\circ S_j(\bz))$ with eigenvalue equal to $1$, therefore by
uniqueness  it must be proportional to  $s_i \circ \Psi_\bm(\bz)$.
$$
\check R_i(z_i,z_{i+1})\Psi_\bm({\bf z}) = c_i({\bf z}) s_i \circ \Psi_\bm({\bf z}).
$$ 
The proportionality factor $c_i({\bf z})$ is a 
rational function of ${\bf z}$ and, since $\Psi_\bm({\bf z})$ is supposed to
be of minimal degree, its denominator part can possibly come only from
the poles of $\check R_i(z_i,z_{i+1})$, in particular it depends only on
$z_i,z_{i+1}$ in a factorized form, i.e. $c_i({\bf z}) =
\frac{\bar c_i({\bf z})}{k^{(i)}_1(x_i)k^{(i)}_2(x_{i+1})}$,
where $k^{(i)}_1(x_i)$ is some product of $(x_i-\tau_\alpha)$, while
$k^{(i)}_2(x_{i+1})$  is some product of $(x_i-\nu_\beta)$.  From $\check
R_i(z_i,z_{i+1})\check R_i(z_{i+1},z_{i})= 1$ it follows that
$c_i({\bf z})\left( s_i\circ c_i({\bf z})\right)=1$ and from $\check R_i(z,z)=1$ it
follows $c_i({\bf z})|\{z_i=z_{i+1}\}=1$. Combining these information we
conclude that $c_i({\bf z})$ must be of the form $c_i({\bf
z})=\frac{k^{(i)}_1(x_{i+1})k^{(i)}_2(x_{i})}{k^{(i)}_1(x_i)k^{(i)}_2(x_{i+1})}$. 
Remark that
for any pair $(i,j)$,  $k^{(i)}_{1}(x)$ and
$k^{(j)}_{2}(x)$ have no common factors. 
 
Now specialize $z_1=z$ and for $i\neq 1$, $z_i=w$, by repeatedly
applying the exchange equation it follows that  
$$
\check R_N(z,w)\dots \check R_1(z,w) \Psi_\bm(z,w,\dots,w)= \prod_{i=1}^N
\frac{k^{(i)}_1(w)k^{(i)}_2(z)}{k^{(i)}_1(z)k^{(i)}_2(w)}\Psi_\bm(z,w,\dots,w) 
$$ 
Contracting  both sides of the previous equation with the dual eigenvector we
find that
$$
\prod_{i=1}^N
\frac{k^{(i)}_1(w)k^{(i)}_2(z)}{k^{(i)}_1(z)k^{(i)}_2(w)}=1
$$
which, in view of the absence of common factors between $k^{(i)}_{1}(x)$ and
$k^{(j)}_{2}(x)$, forces all the $k^{(i)}_{\alpha}(x)$ to be equal to $1$.
\end{proof}
\begin{corollary}
The exchange equations (\ref{exchange-eq}) have a unique solution up to
multiplication by a symmetric function of ${\bf z}$.
\end{corollary}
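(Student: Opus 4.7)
The plan is to establish that any two solutions $\Psi$ and $\Psi'$ of the exchange equations (\ref{exchange-eq}) are proportional by a rational function of $\bz$, and then that this proportionality factor is forced to be symmetric in $\bz$.

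For proportionality, I would show that any solution of (\ref{exchange-eq}) is automatically a common $1$-eigenvector of all the scattering matrices $S_i(\bz)$, so that uniqueness follows from Proposition \ref{markov-scatt}. Starting from $\Psi(\bz)$, apply the rightmost factor $\check R_i(z_i,z_{i+1})$ in the definition of $S_i(\bz)$ to obtain $\Psi$ evaluated at the permuted argument $s_i \cdot \bz$; then apply $\check R_{i+1}(z_i,z_{i+2})$, whose arguments match precisely the entries of $s_i \cdot \bz$ at positions $i+1$ and $i+2$, yielding $\Psi(s_{i+1} s_i \cdot \bz)$; and iterate. After all the $\check R$-factors are consumed, the cumulative permutation applied to $\bz$ is the cyclic shift corresponding to $z_i$ travelling once around the ring, and the final $\mathcal R$ rotates the vector indices in a way that restores $\Psi(\bz)$. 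Thus $S_i(\bz) \Psi(\bz) = \Psi(\bz)$ for every $i$, and Proposition \ref{markov-scatt} (applied at generic $\bz$) forces $\Psi'(\bz) = f(\bz) \Psi(\bz)$ for some rational function $f$ of $\bz$.

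For the symmetry of $f$, I substitute $\Psi' = f \Psi$ into (\ref{exchange-eq}). Since $\check R_i(z_i,z_{i+1})$ acts only on the vector part and commutes with scalar multiplication by $f(\bz)$, the left-hand side becomes $f \, \check R_i \Psi = f \, (s_i \circ \Psi)$, using the exchange equation for $\Psi$ itself; the right-hand side is $s_i \circ (f \Psi) = (s_i \circ f)(s_i \circ \Psi)$. Since $s_i \circ \Psi$ does not vanish as a vector-valued function, comparing the two expressions component by component gives $f = s_i \circ f$ for every $i$, hence $f$ is a symmetric function of $\bz$.

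The main obstacle I anticipate is the eigenvector step: one must carefully verify that the cumulative permutation of $\bz$ produced by the chain of $\check R$-factors inside $S_i(\bz)$ is exactly the cyclic shift whose effect on $\Psi$ is inverted by $\mathcal R$. This requires bookkeeping of which $z$-variables sit in which slots of $\Psi$ after each intermediate application of the exchange equation. The specific choice of arguments $(z_i, z_{j+1})$ for each $\check R_j$ appearing in the definition (\ref{def-scatt}) of $S_i(\bz)$ is precisely tailored to make these substitutions consistent; once this bookkeeping is in place, the rest of the argument is essentially mechanical.
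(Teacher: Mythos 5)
Your second step (forcing the proportionality factor to be symmetric) is correct, and your overall strategy --- reduce to the uniqueness of the common $1$-eigenvector in Proposition \ref{markov-scatt} --- is the natural one. But the first step has a genuine gap, and it sits exactly where you dismiss the difficulty as ``bookkeeping.'' Following the chain of exchange equations through the $\check R$-factors of $S_i(\bz)$ in (\ref{def-scatt}) is indeed consistent (the arguments do match slot by slot), and it yields
$$
S_i(\bz)\,\Psi(z_1,\dots,z_L) \;=\; \mathcal R\,\Psi(z_2,z_3,\dots,z_L,z_1).
$$
To conclude that this equals $\Psi(\bz)$ you need the rotation covariance $\mathcal R\,\Psi(z_2,\dots,z_L,z_1)=\Psi(z_1,\dots,z_L)$. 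That identity is \emph{not} one of the exchange equations (\ref{exchange-eq}), and it cannot be derived from them for an arbitrary solution: the exchange equations never involve $\mathcal R$, only the $\check R_i$ and transpositions of the $z$'s. It does hold for the particular eigenvector $\Psi_\bm$ constructed in the theorem (there it is equivalent to $S_i\Psi_\bm=\Psi_\bm$ given the exchange equations), but assuming it for a general solution is circular: one checks easily that $\bz\mapsto\mathcal R\,\Psi(z_2,\dots,z_L,z_1)$ is again a solution of the exchange equations whenever $\Psi$ is, so it differs from $\Psi$ by a symmetric factor \emph{only if the corollary is already known}.

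A way to close the argument entirely inside the exchange-equation framework is to work in components. By the cyclic structure of eqs.~(\ref{exch2}), every component of a solution is obtained from $\psi_{12\cdots N}$ by applying operators $\pi_i(\alpha,\beta)$, so a solution is determined by that single component. Moreover the quasi-symmetry constraints (the content of Proposition \ref{prop-factors-2} applied with $j=1$, $k=N$) force the component $\psi_{12\cdots N}$ of \emph{any} solution to be the fixed product of trivial factors $\prod_i\bigl(\prod_{\alpha<i}(z_i-\tau_\alpha)\prod_{\alpha>i}(z_i-\nu_\alpha)\bigr)$ times a function symmetric in all of $\bz$. Two solutions therefore have $\psi_{12\cdots N}$ components differing by a symmetric factor $c(\bz)$; since $c\,\Psi$ is again a solution with the same $12\cdots N$ component as the second one, and solutions are determined by that component, the two coincide. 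Your symmetry argument for the scalar is then not even needed separately --- it is built in.
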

At this point we can easily relate $\Psi_\bm(\bz)$ to
$\mathcal P_\bm$, the stationary measure in the sector
$\mathcal H_\bm$: for $\bz=\infty$ the two are just
proportional\footnote{If $P({\bf x})$ is a
  polynomial 
  of total degree $D$ in the variables ${\bf  x}=x_1,x_2,\dots,x_L$,
  by $P(\infty)$   we mean the coefficient of the monomial of top
  degree of  $P({\bf x})|_{x_1=x_2=\cdots=x_L=x}$,i.e.
$$
P(\infty):=\lim_{x\rightarrow \infty} x^{-D}P({\bf x})|_{x_1=x_2=\cdots,x_L=x}.  
$$
} 
\be\label{specialization1}
\Psi_\bm(\infty) \propto \mathcal P_\bm.  
\ee
Indeed by differentiating eq.(\ref{exchange-eq}) with respect to $z_i$ and then 
setting $\bz=c$ we get 
$$
h_i(c)\Psi_\bm(c)+c^2\partial_i \Psi_\bm(c) = c^2\partial_{i+1} \Psi_\bm(c)
$$ 
from which we see that the sum of the terms $h_i(c)\Psi_\bm(c)$ is
telescopic and at the end we obtain 
\be\label{stat-psi1}
\sum_{i=1}^L h_i(c)\Psi_\bm(c) =0.
\ee

In the rest of the paper we concentrate on $\Psi_N(\bz)$ the solution of the
exchange eq.(\ref{exchange-eq}) in the case 
$m_i=1$ for $1\leq i \leq N$ and zero otherwise.  
We will be able to derive
several properties of $\Psi_N(\bz)$, and through a straightforward
specialization $\bz=\infty$ we will settle some of the questions
about $\Psi_N$ mentioned in the Introduction. 

\section{Solution of the exchange equations}\label{sect:solution}

\subsection{The exchange equations in components} \label{sect:exch-analysis}

As a first step we expand the exchange equations (\ref{exchange-eq})
into the basis $v_\bw$ 
\be
\Psi_\bm(\bz) = \sum_{\ell(\bw)=N}\psi_\bw(\bz) v_\bw.
\ee
The components $\psi_\bw(\bz)$ correspond to a deformation of the
(unnormalized) stationary probabilities of the configurations $\bw$. 
In order to write the exchange equation in a compact way it is
convenient to introduce the natural action of the symmetric group
$\mathcal S_N$ on particles configurations, for $\sigma \in  \mathcal S_N$
$$
\sigma \{w_1,\dots,w_N\}= \{w_{\sigma(1)},\dots,w_{\sigma(N)}\}.
$$ 
Then the exchange equations between positions $(i,i+1)$ become 
\begin{equation}
%\label{exch1}
%\psi_\bw(\bz) &= s_i\circ \psi_\bw(\bz)&&\textrm{if}&&w_i=w_{i+1}\\
\label{exch2}
\boxed{\psi_\bw(\bz) = \pi_i(w_{i+1},w_i)
\psi_{s_i\circ \bw}(\bz)\hspace{.4cm}\textrm{if} \hspace{.4cm}w_i<w_{i+1}.} 
\end{equation} 
where $\pi_i(\beta,\alpha)$ are isobaric divided difference operators
in the variables $f(z|\tau_\alpha,\nu_\beta)$ defined by 
\be\label{divided-ab}
\boxed{\begin{split}
\pi_i(\beta,\alpha) G({\bf z}) &= f(z_{i+1}|\tau_\alpha,\nu_\beta)\frac{G({\bf z})-s_i\circ G({\bf
    z})}{f(z_{i}|\tau_\alpha,\nu_\beta)-f(z_{i+1}|\tau_\alpha,\nu_\beta)}\\
&= \frac{(z_{i+1}-\tau_\alpha)(z_i-\nu_\beta)}{\tau_\alpha-\nu_\beta}\frac{G({\bf z})-s_i\circ G({\bf
    z})}{z_{i}-z_{i+1}}
\end{split}}
\ee

It is not difficult to realize that the system of equations
(\ref{exch2}) is cyclic: suppose a component associated to
a configurations $\bw$ is known, all the other components can be
obtained by the action of the operators $\pi_i(\alpha,\beta)$.
\vskip .3cm

Before proceeding further in the analysis of eqs.(\ref{exch2}) we want
to spend a couple of words about the operators
$\pi_i(\beta,\alpha)$. These operators satisfy relations that 
generalize the ones satisfied by the generators of the $0-$Hecke algebra
\be
\begin{split}
\pi_i(\alpha,\beta)\pi_j(\gamma,\delta)&=\pi_j(\gamma,\delta)
\pi_i(\alpha,\beta)\hspace{1cm}|i-j|>1 \\ 
\pi_i(\alpha,\beta)\pi_i(\gamma,\delta)&=-\pi_i(\alpha,\beta)\\
\pi_i(\beta,\gamma)\pi_{i+1}(\alpha,\gamma)\pi_i(\alpha,\beta)&=
\pi_{i+1}(\alpha,\beta)\pi_i(\alpha,\gamma)\pi_{i+1}(\beta,\gamma). 
\end{split} 
\ee
In particular we remark that the last equation is a braided
Yang-Baxter equation. The $0-$Hecke algebra is recovered in the case
we choose $\tau_\alpha=\tau$ and $\nu_\alpha=\nu$ $\forall \alpha$, in
which case the operators $\pi_i(\alpha,\beta)$ become independent of
the labels $\alpha,\beta$ and correspond to the more common isobaric divided
difference operators $\pi_i$ in the variable $\frac{1-\tau x}{1+\nu
  x}$ \cite{lascoux2003symmetric}. On the other hand,  many of the remarkable properties of the 
operators $\pi_i$ get generalized to the operators
$\pi_i(\alpha,\beta)$.

\subsection{Trivial factors}\label{sect:triv-fact}

From eq.(\ref{exch2}) we already know that as a polynomial in the 
spectral parameters, the component $\psi_\bw(\bz)$ has factors
$(z_{i+1}-\tau_{w_i})(z_i-\nu_{w_{i+1}})$ whenever $w_i<w_{i+1}$
\be\label{exch2_2}
\psi_\bw(\bz) = (z_{i+1}-\tau_{w_i})(z_i-\nu_{w_{i+1}}) \tilde \psi_\bw(\bz)
\ee
and the factor $\tilde \psi_\bw(\bz)$ is a polynomial symmetric under exchange
$z_i\leftrightarrow z_{i+1}$. 
We want to use this remark in order to determine for each component
$\psi_\bw(\bz)$  as many ``trivial'' factors of the form  $(z_i-\tau_\alpha)$ or
$(z_i-\nu_\beta)$ as possible. 
\begin{proposition}\label{prop-factors-1}
1) Suppose that given $j<k$,  the configuration $\bw$ is such that
$w_i<w_k$ for $j\leq i <k $, then 
$\psi_\bw(\bz)$ is divisible by $(z_j-\nu_{w_k})$.\\% for $j\leq i <k $.\\
2) Suppose that given $j<k$,  the configuration $\bw$ is such that
$w_i>w_j$ for $j<  i \leq k $, then 
$\psi_\bw(\bz)$ is divisible by $(z_k-\tau_{w_j})$.% for $j< i \leq k$.
\end{proposition}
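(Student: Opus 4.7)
The plan is to prove both parts by iterating the exchange equation (\ref{exch2}) to ``bubble'' one entry across several positions, and then reading off the desired linear factor from one specific step in the resulting chain.

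For part 1, the hypothesis $w_i < w_k$ for $j \leq i < k$ lets me move $w_k$ leftward step by step: starting from $\bw$, I apply (\ref{exch2}) first at position $k-1$ (legitimate since $w_{k-1} < w_k$), then at position $k-2$, and so on down to position $j$. At each intermediate stage the value immediately to the left of $w_k$ is still one of the original $w_i$ with $j \leq i < k$, hence strictly smaller than $w_k$, so the swap is permitted. Composing these equations yields
\begin{equation*}
\psi_\bw(\bz) = \pi_{k-1}(w_k, w_{k-1})\, \pi_{k-2}(w_k, w_{k-2}) \cdots \pi_j(w_k, w_j)\, \psi_{\bw'}(\bz),
\end{equation*}
where $\bw'$ is obtained from $\bw$ by moving $w_k$ to position $j$ and shifting $w_j, \dots, w_{k-1}$ one step to the right.

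In this composition the innermost (first applied) operator is $\pi_j(w_k, w_j)$, which by the explicit formula (\ref{divided-ab}) produces the prefactor $(z_j - \nu_{w_k})$. The remaining operators $\pi_{j+1}, \dots, \pi_{k-1}$ all act only on the variables $z_{j+1}, \dots, z_k$, and the corresponding transpositions $s_{j+1}, \dots, s_{k-1}$ leave $z_j$ fixed. Since the divided difference $(G - s_i G)/(z_i - z_{i+1})$ commutes with multiplication by any polynomial that does not involve $z_i$ or $z_{i+1}$, the factor $(z_j - \nu_{w_k})$ is pulled out of every subsequent application and survives in the final expression for $\psi_\bw(\bz)$.

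Part 2 is completely analogous, with $w_j$ moved rightward instead: the hypothesis provides the chain
\begin{equation*}
\psi_\bw(\bz) = \pi_j(w_{j+1}, w_j)\, \pi_{j+1}(w_{j+2}, w_j) \cdots \pi_{k-1}(w_k, w_j)\, \psi_{\bw''}(\bz),
\end{equation*}
and now the innermost operator $\pi_{k-1}(w_k, w_j)$ contributes, via (\ref{divided-ab}), the factor $(z_k - \tau_{w_j})$, while the outer operators $\pi_{k-2}, \dots, \pi_j$ involve only $z_j, \dots, z_{k-1}$ and so leave $z_k$ untouched. The only point requiring a moment's care is this ``survival'' observation; apart from it, the proof is bookkeeping of which adjacent swaps the hypotheses permit, and there is no genuine obstacle.
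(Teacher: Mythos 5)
Your proof is correct and is essentially the paper's argument: the paper proves the same chain of exchange-equation applications by induction on $k-j$, peeling off one operator $\pi_{k-1}(w_k,w_{k-1})$ at a time and using that it acts only on $z_{k-1},z_k$, which is exactly your explicit composition read in reverse. The factor $(z_j-\nu_{w_k})$ (resp.\ $(z_k-\tau_{w_j})$) coming from the innermost operator and surviving because the remaining operators do not touch $z_j$ (resp.\ $z_k$) is the same mechanism in both write-ups.
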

\begin{proof}
We just prove point $1)$, point $2)$ being completely analogous. 
We proceed by a simple induction on the difference $h=k-j$. For $h=1$ the
statement follows immediately from eq.(\ref{exch2}). Now suppose the
statement true for $h$ and take $k-j=h+1$. Since $w_k>w_{k-1}$ we know
that $\psi_\bw(\bz)=\pi_{k-1}(w_k,w_{k-1})
\psi_{s_{k-1}\circ\bw}(\bz)$. By induction we know that
$\psi_{s_{k-1}\circ\bw}(\bz)$ is divisible by $(z_j-\nu_{w_k})$; on
the other hand, since $\pi_{k-1}(w_k,w_{k-1})$ acts only on the
variables $z_{k-1},z_k$ we conclude that $(z_j-\nu_{w_k})$ divides
also $\psi_\bw(\bz)$.
\end{proof}
An immediate corollary of the previous Proposition is the following
\begin{corollary}\label{zero-max-min}
1) If $w_i\neq N$ then $\psi_\bw(\bz)$ is divisible
by $(z_i-\nu_{N})$.\\
2) If $w_i\neq 1$ then $\psi_\bw(\bz)$ is divisible
by $(z_i-\tau_{1})$.
\end{corollary}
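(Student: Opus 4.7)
The plan is to deduce each part directly from Proposition \ref{prop-factors-1} by locating the extremal species. In the sector under consideration each species appears exactly once, so $w_k = N$ holds at a unique position $k$; by hypothesis $w_i \neq N$, hence $i \neq k$. The symmetric situation handles part~2 with the unique position where $w = 1$.

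For the easy case $i < k$ of part~1, I would take $j = i$ in Proposition \ref{prop-factors-1}(1). For every $i'$ in $\{i, i+1, \ldots, k-1\}$, the label $w_{i'}$ is strictly less than $N$, since $N$ is the maximum of the species labels present and appears only at position $k$. The hypothesis is therefore satisfied and the proposition delivers $(z_i - \nu_N) \mid \psi_\bw(\bz)$ immediately.

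For the wrap-around case $i > k$, the cyclic structure of the ring has to enter. I would first promote Proposition \ref{prop-factors-1} to a cyclic version by exploiting the rotational symmetry of $\Psi_N(\bz)$: the scattering operator $S_i(\bz)$ contains the rotation $\mathcal R$, and since $S_i(\bz)\Psi_N(\bz) = \Psi_N(\bz)$ by Proposition \ref{markov-scatt}, telescoping the exchange identities (\ref{exchange-eq}) inside the definition of $S_i(\bz)$ produces an identity relating $\psi_{\mathcal R \bw}$ evaluated at a cyclically shifted argument to $\psi_\bw(\bz)$. Using this, I cyclically rotate $\bw$ until the unique occurrence of $N$ sits at a position strictly greater than $i$, apply the case already treated, and transport the divisibility back through the cyclic shift. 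Part~2 is handled in exactly the same way using the minimal species $1$ and Proposition \ref{prop-factors-1}(2), producing the factor $(z_i - \tau_1)$.

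The main obstacle is pinning down the cyclic symmetry at the level of the polynomial components and verifying that the cyclic shift on $\bz$ preserves the factor of interest (i.e.\ that after transport the factor in the shifted variables is still $(z_i - \nu_N)$, not some other $(z_{i'} - \nu_N)$). Once the cyclic extension of Proposition \ref{prop-factors-1} is in hand, the corollary is a routine specialization of that proposition to the extremal species.
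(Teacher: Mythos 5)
Your proposal is correct and follows the paper's intended route: the paper obtains the corollary by specializing Proposition \ref{prop-factors-1} to the extremal species, exactly as in your first case, and gives no further detail. The wrap-around case you isolate is a real gap hidden in the word ``immediate'': when the unique occurrence of $N$ lies to the left of position $i$, Proposition \ref{prop-factors-1} as stated (which requires $j<k$) does not apply directly. Your fix is the right one and goes through as you sketch it: telescoping the exchange equations (\ref{exchange-eq}) inside $S_1(\bz)\Psi_N(\bz)=\Psi_N(\bz)$ yields the rotation covariance $\psi_{\bw}(z_1,\dots,z_N)=\psi_{(w_2,\dots,w_N,w_1)}(z_2,\dots,z_N,z_1)$, so sites and spectral parameters rotate together and the factor $(z_i-\nu_N)$ is transported to itself rather than to some other $(z_{i'}-\nu_N)$; this is the same covariance the paper invokes implicitly later, in the proof of Theorem \ref{primality-general}.
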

A particularly favorable situation is when the configuration $\bw$
presents a subset of consecutive increasing  entries.
For a finite set of integers $I$, a configuration $\bw$ and $1\leq
j,k\leq N$ define 
\be
G_{I;\bw;j,k}(\bz)=\prod_{i=j}^k  \left(\prod_{\substack{\alpha\in
    I\\w_i>\alpha }}(z_i-\tau_\alpha) \prod_{\substack{\alpha\in I\\w_i<\alpha }}(z_i-\nu_\alpha) \right). 
\ee
Then we have the following
\begin{proposition}\label{prop-factors-2}
Suppose that  for some $j<k$, the configuration $\bw$ is such that 
$w_i< w_\ell$ for $j\leq i <\ell\leq k$, then the component $\psi_\bw(\bz)$ is of the form
\be
\psi_\bw(\bz) = \tilde\psi_\bw(\bz)G_{\bw_{[j,k]};\bw; j,k}(\bz),
\ee
where $\tilde\psi_w(\bz)$ is a
polynomial symmetric in the variables $z_j,z_{j+1},\dots, z_k$ and
$\bw_{[j,k]}=\{w_j,w_{j+1},\dots,w_k\}$.
%\be
%G_{j,k;\bw}(\bz)=\prod_{\alpha\in I_{j,k}(\bw)}\left(\prod_{\substack{j\leq i
%    \leq k\\w_i>\alpha }}(1-\tau_\alpha z_i) \prod_{\substack{j\leq i
%    \leq k\\w_i<\alpha }}(1+\nu_\alpha z_i) \right), 
%\ee
%with $I_{j,k}(\bw)$ the set of particle species between position
%$j$ and $k$ in the configuration $\bw$. 
\end{proposition}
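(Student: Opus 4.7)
The plan is to combine Proposition \ref{prop-factors-1} with the pairwise symmetries enforced by eq.~(\ref{exch2_2}) inside the increasing block. Since $w_j<w_{j+1}<\cdots<w_k$, the definition of $G_{\bw_{[j,k]};\bw;j,k}$ collapses to
$$
G(\bz)\;=\;\prod_{j\le a<b\le k}(z_b-\tau_{w_a})(z_a-\nu_{w_b}),
$$
because for $\alpha=w_m\in\bw_{[j,k]}$ the condition $w_i>\alpha$ is equivalent to $i>m$ inside the block, and the condition $w_i<\alpha$ to $i<m$. For every pair $j\le a<b\le k$, the monotonicity hypothesis supplies both hypotheses of Proposition \ref{prop-factors-1} (one has $w_i<w_b$ for $a\le i<b$ and $w_i>w_a$ for $a<i\le b$), so $(z_a-\nu_{w_b})$ and $(z_b-\tau_{w_a})$ each divide $\psi_\bw(\bz)$. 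Viewing $\btau,\bnu,\bz$ as independent indeterminates, these distinct linear polynomials are pairwise coprime, hence their full product $G$ divides $\psi_\bw$; set $\tilde\psi_\bw:=\psi_\bw/G\in\mathbb Z[\btau,\bnu,\bz]$.

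It remains to verify that $\tilde\psi_\bw$ is invariant under each adjacent transposition $s_p$ with $j\le p<k$; the full symmetry in $z_j,\ldots,z_k$ then follows because these transpositions generate the symmetric group on that block. The combinatorial key is that, in the factorization above, the only contribution to $G$ that is antisymmetric under $s_p$ comes from the pair $(a,b)=(p,p+1)$: for every $a\in\{j,\ldots,p-1\}$ the pairs $(a,p)$ and $(a,p+1)$ contribute $(z_p-\tau_{w_a})(z_{p+1}-\tau_{w_a})$, which is $s_p$-symmetric, while for every $b\in\{p+2,\ldots,k\}$ the pairs $(p,b)$ and $(p+1,b)$ contribute $(z_p-\nu_{w_b})(z_{p+1}-\nu_{w_b})$, also $s_p$-symmetric. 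Every factor of $G$ touching $z_p$ or $z_{p+1}$ is of one of these three types, so
$$
G(\bz)\;=\;(z_{p+1}-\tau_{w_p})(z_p-\nu_{w_{p+1}})\,H_p(\bz),
$$
with $H_p$ symmetric in $(z_p,z_{p+1})$.

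On the other hand, $w_p<w_{p+1}$, so eq.~(\ref{exch2_2}) applied at position $i=p$ yields $\psi_\bw=(z_{p+1}-\tau_{w_p})(z_p-\nu_{w_{p+1}})\,Q_p$ with $Q_p$ symmetric in $(z_p,z_{p+1})$. Cancelling the common antisymmetric prefactor gives $\tilde\psi_\bw=Q_p/H_p$; since $\tilde\psi_\bw$ is already known to be a polynomial from the previous step, this identity forces it to be $s_p$-invariant. Running over $p=j,\ldots,k-1$ gives symmetry in $z_j,\ldots,z_k$, as claimed.

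The only real obstacle is the bookkeeping in the middle step: one must be certain that apart from the pair $(p,p+1)$, every factor of $G$ containing $z_p$ has a $s_p$-partner containing $z_{p+1}$ of the same $\tau$- or $\nu$-type. This is precisely where the monotonicity of $\bw$ on $[j,k]$ is used, since it ensures that for every $\alpha=w_m\in\bw_{[j,k]}$ with $m\notin\{p,p+1\}$ the two inequalities $w_p$ vs.\ $\alpha$ and $w_{p+1}$ vs.\ $\alpha$ have the same sign, so $\alpha$ produces matched factors at positions $p$ and $p+1$.
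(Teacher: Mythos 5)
Your proof is correct and follows essentially the same route as the paper: obtain the factor $G_{\bw_{[j,k]};\bw;j,k}(\bz)$ from Proposition \ref{prop-factors-1}, observe that for each adjacent pair in the increasing block $G$ splits as $(z_{p+1}-\tau_{w_p})(z_p-\nu_{w_{p+1}})$ times an $s_p$-symmetric factor, and compare with eq.~(\ref{exch2_2}) to conclude that the quotient is $s_p$-invariant. You merely spell out the pairwise-coprimality and the matching of factors in more detail than the paper does.
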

\begin{proof}
The presence of the factor $G_{\bw_{[j,k]};\bw; j,k}(\bz)$ is
ensured by Proposition 
\ref{prop-factors-1}.  
Since $w_i<w_{i+1}$, $G_{\bw_{[j,k]};\bw; j,k}(\bz)$ is  given
by $(z_{i+1}-\tau_{w_i})(z_i-\nu_{w_{i+1}})$ times a function symmetric
under $s_i$. Therefore from eq.(\ref{exch2_2}) we conclude that
$\tilde\psi_w(\bz)$ is symmetric under $s_i$.  
\end{proof}
Now consider the configuration $12\dots N$.
For such configuration we have   
\be
\psi_{12\dots N}(\bz) =  \tilde\psi(\bz) \prod_{i=1}^N\left(
\prod_{\alpha=1}^{i-1}(z_i-\tau_\alpha)\prod_{\alpha=i+1}^{N}(z_i-\nu_\alpha)
\right).
%    ];\bw^{(a)}(\bm); 1,L}(\bz)
\ee
where $\tilde\psi(\bz)$ is symmetric in all the spectral
parameters. Any other component can be obtained from $\psi_{12\dots N}(\bz)$ 
through the action of the operators
$\pi_i(\alpha,\beta)$, which preserve symmetric factors. Therefore $\tilde\psi(\bz)$ 
appears as a factor of all
the components. This means that in the minimal degree solution of the
exchange equations, we can assume $\tilde\psi(\bz)$ to be a constant
in the spectral parameter. 
We choose the normalization of $\Psi_N(\bz)$ by setting 
\be\label{normalization}
\psi_{12\dots N}(\bz) =   \phi_N(\btau,\bnu) \prod_{i=1}^N\left(
\prod_{\alpha=1}^{i-1}(z_i-\tau_\alpha)\prod_{\alpha=i+1}^{N}(z_i-\nu_\alpha ) \right).
\ee
In particular in this way we have been able to fix the 
degree of $\Psi_N(\bz)$ as a polynomial in any of the spectral
parameters $z_i$. 
\begin{corollary}\label{coroll-degree}
The degree
of $\Psi_\bm(\bz)$  as a polynomial in $z_i$ for any $i$ is equal to
$N-1$. 
\end{corollary}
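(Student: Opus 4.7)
My plan is to combine the explicit form of $\psi_{12\dots N}(\bz)$ supplied by the normalization with a degree-preservation property of the divided difference operators. The lower bound $\deg_{z_i}\Psi_N(\bz)\geq N-1$ is immediate from (\ref{normalization}): the $i$-th factor $\prod_{\alpha=1}^{i-1}(z_i-\tau_\alpha)\prod_{\alpha=i+1}^{N}(z_i-\nu_\alpha)$ has degree $(i-1)+(N-i)=N-1$ in $z_i$, while the prefactor $\phi_N(\btau,\bnu)$ is independent of $\bz$, so $\deg_{z_i}\psi_{12\dots N}=N-1$ exactly.

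For the matching upper bound I would check the elementary statement that each operator $\pi_j(\alpha,\beta)$ from (\ref{divided-ab}) preserves the class of polynomials $G(\bz)$ with $\deg_{z_k}G\leq N-1$ for every $k$. If $G$ is such a polynomial, the antisymmetric combination $G-s_jG$ vanishes on $z_j=z_{j+1}$, hence is divisible by $z_j-z_{j+1}$, and the quotient has degree $\leq N-2$ in each of $z_j$ and $z_{j+1}$. Multiplication by $(z_{j+1}-\tau_\alpha)(z_j-\nu_\beta)$ brings these two degrees back up to $\leq N-1$, while the degrees in the remaining $z_k$ are untouched. Combining this with the remark following (\ref{exch2}) that every component $\psi_\bw(\bz)$ can be reached from $\psi_{12\dots N}(\bz)$ by a finite sequence of divided differences $\pi_j(\alpha,\beta)$, one inductively concludes $\deg_{z_i}\psi_\bw(\bz)\leq N-1$ for every $\bw$, and therefore $\deg_{z_i}\Psi_N(\bz)=N-1$.

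The main obstacle I anticipate lies in the connectivity step. The interior exchange equations (\ref{exch2}) only express $\psi_\bw$ (ascent at $i$) in terms of $\psi_{s_i\bw}$ (descent at $i$), never the reverse, so starting from $\psi_{12\dots N}$, which has no interior descent at all, the operators $\pi_j$ cannot propagate by themselves. One must exploit the ring's wrap-around piece of the scattering data, inherited from the rotation $\mathcal R$ baked into $S_i(\bz)$, to obtain a first descent; once such a cyclic move is available, interior exchanges combined with cyclic shifts are enough to navigate to any permutation of $\{1,\dots,N\}$. The degree-preservation step is then a one-line verification, so the entire proof reduces to making this cyclicity precise.
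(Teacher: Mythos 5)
Your proposal follows essentially the same route as the paper: the lower bound is read off from the normalization eq.(\ref{normalization}), and the upper bound comes from the fact that the operators $\pi_j(\alpha,\beta)$ do not increase the degree in any single variable. Your verification of the latter is correct: $G-s_jG$ is divisible by $z_j-z_{j+1}$ with quotient of degree at most $N-2$ in each of $z_j,z_{j+1}$, and the prefactor $(z_{j+1}-\tau_\alpha)(z_j-\nu_\beta)$ restores the degree to at most $N-1$ while leaving the other variables untouched.

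The connectivity issue you flag is genuine, and you have correctly identified both the obstruction and its resolution; be aware, though, that the paper does not spell it out either — it simply asserts after eq.(\ref{exch2}) that the system is \emph{cyclic}, so that every component is reachable from any other. The missing ingredient is exactly the rotation covariance you point to: combining $S_i(\bz)\Psi_N(\bz)=\Psi_N(\bz)$ with the exchange equations (\ref{exchange-eq}) shows that $\mathcal R$ acts on $\Psi_N$ by a cyclic shift of the spectral parameters, which permutes the variables and hence preserves the property ``degree at most $N-1$ in every $z_k$''; the rotations then supply the needed descents (already for $N=2$ one needs $\psi_{21}(z_1,z_2)=\psi_{12}(z_2,z_1)$, which no interior exchange provides). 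A partial alternative is to anchor the propagation at the all-descent configuration: since resolving an ascent always increases the inversion number of the source, every component equals a product of interior $\pi$'s applied to $\psi_{N(N-1)\dots1}(\bz)$ with no cyclicity needed — this is how the paper itself uses connectivity in the proof of Theorem \ref{part-funct-theo} — but at this stage of the paper one has no independent bound on $\deg_{z_i}\psi_{N(N-1)\dots1}$ (its explicit form, eq.(\ref{eq-largest}), comes only later), so the rotation argument cannot be dispensed with. With that step made explicit, your proof is complete and coincides with the paper's.
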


Let us discuss briefly the particular case
$\tau_\alpha=\nu_\alpha$. In such a case it is simple to see that
$$
\pi_i(\alpha,\beta)(z_i-\nu_\alpha)^{-1}(z_{i+1}-\nu_\beta)^{-1} =
(z_i-\nu_\beta)^{-1}(z_{i+1}-\nu_\alpha)^{-1}.  
$$
This implies that 
\be
\psi_{\bw}(\bz) = \prod_{1\leq i \leq L} (z_i-\nu_{w_i})^{-1}
\ee
is solution of the exchange eqs.(\ref{exch2})\footnote{In
  order to make $\psi_{\bw}(\bz)$ a polynomial is sufficient to
  multiply it by $$\prod_{1\leq i \leq N}\prod_{1\leq \alpha\leq
    N}(z_i-\nu_\alpha).$$}. This is consistent with 
\cite{karimipour1999multispecies}, indeed by setting $\bz= \infty$ we
simply obtain that the stationary measure is uniform. 

\subsection{Recursions}\label{sect:recurs}

We have already seen in Corollary \ref{zero-max-min} that if we
specialize $z_i=\tau_{1}$ or $z_i=\nu_{N}$,
then all the components whose configuration doesn't present a
particle of specie respectively $1$ or $N$ at position
$i$ are equal to zero. Here we want to characterize all the other
components under the same specialization. The first step is to compare
the specializations of $\Psi_N(\bz)$ at different positions.
Let us define the operators $\hat s_i\in \End(\bC[\bz]\otimes \mathcal
H^N)$
$$
\hat s_i \left[f(\bz)v_\bw \right] = s_i f(\bz) v_{s_i \bw}
$$
i.e. the map $\hat s_i$ transpose at the same time the variables
$z_i,z_{i+1}$ and the particles at position $i$ and $i+1$. 
There is no reason for the operators $\hat s_i$ to preserve
$\Psi_N(\bz)$ and indeed in general we have
$$
\hat s_i \Psi_N(\bz) \neq  \Psi_N(\bz),
$$
but, upon specializations $z_i=\tau_{1}$ or
$z_i=\nu_{N}$  we have the following
\begin{proposition}\label{inv-spec-prop}
\begin{align}\label{inv-spec-1}
\hat s_i \Psi_N(\bz)|_{z_i=\tau_{1}} &=
\Psi_N(\bz)|_{z_i=\tau_{1}}\\ \label{inv-spec-2}
\hat s_i \Psi_N(\bz)|_{z_i=\nu_{N}} &=  \Psi_N(\bz)|_{z_i=\nu_{N}}
\end{align}
\end{proposition}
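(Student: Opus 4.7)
The plan is to reduce the statement to a componentwise identity and then extract both the relevant vanishings and an auxiliary symmetry from material already established. Writing $\hat s_i (f(\bz) v_\bw) = (s_i f)(\bz)\, v_{s_i \bw}$ identifies the $\bw$-component of $\hat s_i \Psi_N(\bz)$ with $s_i(\psi_{s_i \bw})(\bz)$, so the two identities of the proposition amount to the family of statements
\[
 s_i(\psi_{s_i \bw})(\bz)\bigr|_{z_i = c} = \psi_\bw(\bz)\bigr|_{z_i = c},
 \qquad c \in \{\tau_1,\, \nu_N\},
\]
indexed by configurations $\bw$. The key ancillary input I will use is that $\psi_\bw + \psi_{s_i \bw}$ is $s_i$-symmetric for every $\bw$: summing the $v_\bw$- and $v_{s_i \bw}$-coefficients of both sides of the exchange equation (\ref{exchange-eq}), the baxterized $\check R_i$ produces (say, when $w_i < w_{i+1}$) $(1 - g_{w_i, w_{i+1}})\psi_\bw + g_{w_i, w_{i+1}}\psi_\bw + \psi_{s_i \bw} = \psi_\bw + \psi_{s_i \bw}$ on the left, whereas the right yields $s_i(\psi_\bw + \psi_{s_i \bw})$; the case $w_i > w_{i+1}$ is identical after swapping $\bw \leftrightarrow s_i \bw$.

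For $z_i = \tau_1$, I split on whether $w_i = 1$. If $w_i \neq 1$, part (2) of Corollary \ref{zero-max-min} gives $(z_i - \tau_1) \mid \psi_\bw$, so $\psi_\bw|_{z_i = \tau_1} = 0$; applied to $s_i \bw$ at position $i+1$, whose entry is $w_i \neq 1$, the same corollary gives $(z_{i+1} - \tau_1) \mid \psi_{s_i \bw}$ and hence $(s_i \psi_{s_i \bw})|_{z_i = \tau_1} = 0$, so both sides of the identity vanish. If $w_i = 1$, then $w_{i+1} > 1$ and Corollary \ref{zero-max-min} provides two vanishings: $\psi_{s_i \bw}|_{z_i = \tau_1} = 0$ (from $(s_i \bw)_i = w_{i+1} \neq 1$) and $(s_i \psi_\bw)|_{z_i = \tau_1} = 0$ (since $(z_{i+1} - \tau_1) \mid \psi_\bw$, as $w_{i+1} \neq 1$). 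Substituting these into the $s_i$-symmetry $s_i(\psi_\bw + \psi_{s_i \bw}) = \psi_\bw + \psi_{s_i \bw}$, specialized at $z_i = \tau_1$, collapses three of the four terms and leaves precisely the desired equality $(s_i \psi_{s_i \bw})|_{z_i = \tau_1} = \psi_\bw|_{z_i = \tau_1}$.

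The argument for $z_i = \nu_N$ is completely parallel: part (1) of Corollary \ref{zero-max-min} replaces part (2), the role of the special value $w_i = 1$ is played by $w_i = N$ (so that $w_{i+1} < N$), and the vanishings now come from factors of the form $(z_\bullet - \nu_N)$. No real obstacle is anticipated; the proof is essentially a two-line verification in each of the four $(c, w_i)$-subcases, with the $s_i$-symmetry of $\psi_\bw + \psi_{s_i \bw}$ doing all the work not already accomplished by the trivial factors identified in Corollary \ref{zero-max-min}.
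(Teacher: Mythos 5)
Your proposal is correct and follows essentially the same route as the paper: reduce to the componentwise identity, observe that both sides vanish when $w_i\neq 1$ (resp.\ $w_i\neq N$) by Corollary \ref{zero-max-min}, and in the remaining case use the $s_i$-symmetry of $\psi_\bw+\psi_{s_i\bw}$ (a direct consequence of the exchange equations) together with the vanishing of two of the four terms at the specialization. The only cosmetic difference is that you rederive the symmetry of $\psi_\bw+\psi_{s_i\bw}$ from the matrix form of $\check R_i$ rather than quoting eq.~(\ref{exch2}) directly.
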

\begin{proof}
Once written in components, eq.(\ref{inv-spec-1}) just states that
\be\label{proof-spec1}
\psi_{\bw}(\dots,z_i,z_{i+1},\dots)|_{z_i=\tau_{1}} = 
\psi_{s_i \bw}(\dots,z_{i+1},z_{i},\dots)|_{z_i=\tau_{1}}.
\ee
The previous equation is obvious for $w_i\neq 1$,  
since both side vanish.
It remains to show the case
$w_i=1$. From eq.(\ref{exch2}) it follows that $\forall \bw$,  
$\psi_\bw+\psi_{s_i \bw}$ is symmetric in $z_i\leftrightarrow z_{i+1}$
\be
\psi_\bw(z_i,z_{i+1})+\psi_{s_i \bw}(z_i,z_{i+1})=\psi_\bw(z_{i+1},z_i)+\psi_{s_i \bw}(z_{i+1},z_i)
\ee
If $w_{i}=1$ then  $w_{i+1}\neq 1$ and setting
$z_i=\tau_{1}$ we have that the terms $\psi_{s_i
  \bw}(z_i,z_{i+1})$ and $\psi_\bw(z_{i+1},z_i)$ vanish and one
remains with eq.(\ref{proof-spec1}).
The proof of eq.(\ref{inv-spec-2}) follows the same lines. 
\end{proof}
Now let us define two insertion operators on configurations $\bw$ of
length $N-1$ 
\begin{itemize}
\item $\Upsilon_j^{N}$ applied to $\bw$  inserts the entry $N$ between
  $w_{j-1}$ and $w_{j}$.
 \item $\Upsilon_j^{1}$ applied to $\bw$  inserts the entry $1$ between
  $w_{j-1}$ and $w_{j}$ and increases all the other entries by $1$.
\end{itemize}
For example 
$$
\Upsilon_3^{6}~21534 = 216534,~~~~~\Upsilon_3^{1}~21534 =321643. 
$$
and extend it to a linear map $\mathcal H_{N-1} \rightarrow \mathcal
H_{N}$, by the action on a basis
$\Upsilon_j^{\alpha}v_\bw:= v_{\Upsilon_j^{\alpha}\bw}$. Using such maps we
can  characterize completely the
specializations of $\Psi_N(\bw)$ in terms of solutions of the
exchange equations of a smaller system\footnote{Here and in the
  following, for a finite or infinite string of ordered variables like
  $\bz=\{z_1,z_2,\dots\}$, and a set if integers $I$, the notation
  $\bz_{\widehat{I}}$ means
$$
\bz_{\widehat{I}}= \bz\setminus \{z_i|i\in I\},
$$
keeping the order inherited from $\bz$.
}
\begin{theorem}[Recursion]\label{recur-theor}
Upon specializations $z_j=\tau_{1}$ or $z_j=\nu_{N}$ we
have the following identities
\begin{align}\label{recursion1}
\Psi_N(\bz)|_{z_j=\tau_{1}} &= \kappa^b_N(\bz_{\widehat j})~ \Upsilon_j^{1}
\tilde \Psi_{N-1}(\bz_{\widehat j})\\\label{recursion2}
\Psi_N(\bz)|_{z_j=\nu_{N}} &= \kappa^t_N(\bz_{\widehat j})~
\Upsilon_j^{N}\Psi_{N-1}(\bz_{\widehat j}), 
\end{align}
where $\tilde \Psi_{N-1}(\bz)$ is obtained from $\Psi_{N-1}(\bz)$ by
renaming $\tau_i,\nu_i \rightarrow \tau_{i+1},\nu_{i+1}$, and  
\begin{align}
\kappa^b_N(\bz) &=
\prod_{\alpha=2}^N(\tau_{1}-\nu_\alpha)^{\alpha-1}
\prod_{i=1}^{N-1}(z_i-\tau_{1})\\
\kappa^t_N(\bz) &=
\prod_{\alpha=1}^{N-1}(\tau_\alpha-\nu_N)^{N-\alpha}
\prod_{i=1}^{L-1}(z_i-\nu_{N}).
\end{align}
\end{theorem}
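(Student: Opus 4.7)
The two identities are proved analogously; I outline the plan for \eqref{recursion1}, with \eqref{recursion2} following by exchanging the roles of $(\tau_1,\nu_N)$ and invoking Corollary \ref{zero-max-min}(1) and the invariance \eqref{inv-spec-2} in place of their (2) counterparts. The overall strategy is to extract a universal prefactor from $\Psi_N(\bz)|_{z_j=\tau_1}$, show that the resulting reduced vector satisfies the exchange equations of the $(N-1)$-particle system with shifted parameters $\tilde\tau_k=\tau_{k+1}$, $\tilde\nu_k=\nu_{k+1}$, and then pin down the residual symmetric factor by matching a single component against \eqref{normalization} via the uniqueness corollary of the exchange theorem.

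\textbf{Support and routine reduced exchanges.} By Corollary \ref{zero-max-min}(2), $(z_i-\tau_1)\mid\psi_\bw(\bz)$ whenever $w_i\neq 1$. Hence $\Psi_N(\bz)|_{z_j=\tau_1}$ vanishes on every configuration with $w_j\neq 1$, is supported precisely on the image of $\Upsilon_j^1$, and for each surviving $\bw=\Upsilon_j^1\bw^\prime$ the whole product $\prod_{i\neq j}(z_i-\tau_1)$ divides $\psi_\bw(\bz)$. Writing $\psi_{\Upsilon_j^1\bw^\prime}(\bz)|_{z_j=\tau_1}=\prod_{i\neq j}(z_i-\tau_1)\,\hat\psi_{\bw^\prime}(\bz_{\widehat j})$ defines a reduced vector $\hat\Psi(\bz_{\widehat j})\in\mathcal H_{N-1}$. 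For each reduced exchange position $(k,k+1)$ whose image in the full system does not straddle $j$ (i.e.\ $k+1<j$ or $k\geq j$), the desired reduced equation follows at once by dividing the corresponding full exchange at $(k,k+1)$ or $(k+1,k+2)$ by the prefactor: with species $\alpha,\beta\geq 2$, the operator $\pi_i(\beta,\alpha)$ becomes precisely $\tilde\pi_k(\beta-1,\alpha-1)$ after the parameter shift, and the prefactor — symmetric in $z_k,z_{k+1}$ and independent of $z_j$ — commutes with both $\pi_i$ and the specialization.

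\textbf{The skip exchange, main obstacle.} The delicate case, and technical heart of the proof, is the reduced exchange at position $(j-1,j)$ for $2\leq j\leq N-1$, which in the full system skips over the species-$1$ site at position $j$. For $\bw_1=\Upsilon_j^1\bw^\prime$ and $\bw_2=\Upsilon_j^1 s^{\mathrm{red}}_{j-1}\bw^\prime$ with flanking entries $a<b$ at positions $j-1$ and $j+1$ in reversed orientations, chaining the full exchange equations along the braid path $(a,1,b)\to(1,a,b)\to(1,b,a)\to(b,1,a)$ yields the intermediate identity
\[
\pi_{j-1}(a,1)\,\psi_{\bw_1}=\pi_j(b,a)\,\pi_{j-1}(b,1)\,\psi_{\bw_2}.
\]
Both sides inherit an explicit $(z_j-\tau_1)$ factor from the $\pi_{j-1}(\cdot,1)$ operators. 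Dividing through by $(z_j-\tau_1)\prod_{i\neq j}(z_i-\tau_1)$ and then specializing $z_j=\tau_1$, a careful algebraic manipulation collapses the right hand side into the reduced skip operator $\tilde\pi_{j-1}(b-1,a-1)$ acting on the pair $(z_{j-1},z_{j+1})$ applied to $\hat\psi_{s^{\mathrm{red}}_{j-1}\bw^\prime}$, while the left hand side reduces to $\hat\psi_{\bw^\prime}$. The key simplifications are the collapse $(z_j-\nu_b)|_{z_j=\tau_1}=\tau_1-\nu_b$, which clears the denominator of $\pi_{j-1}(b,1)$, the identity $(z_{j-1}-\tau_1)/(z_{j-1}-z_j)|_{z_j=\tau_1}=1$ arising in the inner divided difference, and the observation that the prefactor $\prod_{i\neq j}(z_i-\tau_1)$, being symmetric in $z_{j-1}$ and $z_{j+1}$, commutes past $\tilde\pi_{j-1}$. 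This is the technical step I expect to require the most care.

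\textbf{Normalization and conclusion.} With the reduced exchange equations established, the uniqueness corollary after the exchange theorem forces $\hat\Psi(\bz_{\widehat j})=c(\bz_{\widehat j})\,\tilde\Psi_{N-1}(\bz_{\widehat j})$ for some polynomial $c$ symmetric in $\bz_{\widehat j}$. To determine $c$, I match the component $\bw^\prime=12\ldots(N-1)$, corresponding to $\Upsilon_j^1\bw^\prime=(2,3,\ldots,j,1,j+1,\ldots,N)$; the value of $\psi_{\Upsilon_j^1\bw^\prime}(\bz)$ is accessible either through Proposition \ref{prop-factors-2} applied to its two maximal increasing runs, or by a short descent from $\psi_{12\ldots N}$ through a sequence of $\pi$ operators. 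Comparison with the explicit normalization \eqref{normalization} and with the analogous formula for $\tilde\psi^{(N-1)}_{12\ldots(N-1)}$ pins down $c=\prod_{\alpha=2}^{N}(\tau_1-\nu_\alpha)^{\alpha-1}$, which together with the $\prod_{i\neq j}(z_i-\tau_1)$ extracted earlier assembles into $\kappa^b_N(\bz_{\widehat j})$, completing the proof.
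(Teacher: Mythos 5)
Your overall architecture --- identify the support via Corollary \ref{zero-max-min}, peel off $\prod_{i\neq j}(z_i-\tau_1)$, show the reduced vector solves the $(N-1)$-particle exchange equations with shifted parameters, and fix the residual symmetric factor by matching one component --- is the paper's. The non-straddling exchanges and the normalization are handled the same way. The divergence is at the skip exchange, and there the proposal has a genuine gap. The paper never computes the skip exchange: it first uses Proposition \ref{inv-spec-prop} to show that the reduced vector $\bar\Psi^{(j)}_{N-1}(\bz_{\widehat j})$ is \emph{independent of the insertion position} $j$, so that every reduced exchange position $i$ can be checked using some insertion point $j$ with $i\neq j-1$, i.e.\ as a non-straddling exchange. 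Your proof of \eqref{recursion1} never invokes \eqref{inv-spec-1} and instead attacks the straddling case head on.

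That direct attack, as sketched, does not close. In the identity $\pi_{j-1}(a,1)\,\psi_{\bw_1}=\pi_j(b,a)\,\pi_{j-1}(b,1)\,\psi_{\bw_2}$ the outer operator $\pi_j(b,a)$ contains the swap $s_j:z_j\leftrightarrow z_{j+1}$. After setting $z_j=\tau_1$, its swapped term evaluates $\pi_{j-1}(b,1)\psi_{\bw_2}$ (hence $\psi_{\bw_2}$ itself) at a point where $\tau_1$ sits in the $z_{j+1}$ slot while position $j+1$ carries species $a\neq 1$: there both the component and the extracted prefactor $\prod_{i\neq j}(z_i-\tau_1)$ vanish, so the reduced quantity you need is a $0/0$ quotient --- effectively a first derivative of the full component --- controlled neither by Corollary \ref{zero-max-min} nor by the two simplifications you list, which only treat the diagonal (unswapped) term. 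The identity does ultimately hold, but only because of extra relations satisfied by $\psi_{\bw_2}$ under that shifted specialization, which is exactly the content of Proposition \ref{inv-spec-prop}; without invoking it (or proving an equivalent statement) the ``careful algebraic manipulation'' cannot be completed. A secondary issue: for $j\geq 2$ the matching component $\psi_{(2,\ldots,j,1,j+1,\ldots,N)}$ is \emph{not} pinned down by Proposition \ref{prop-factors-2} alone (its trivial factors have degree only $N-j<N-1$ in $z_j$), so your normalization step also silently relies on the $j$-independence you have not established. The repair is to place Proposition \ref{inv-spec-prop} at the head of the argument, as the paper does, after which the skip computation becomes unnecessary and the constant can be fixed at $j=1$ against \eqref{normalization}.
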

\begin{proof}
The proof of the two eqs.(\ref{recursion1},\ref{recursion2})
is completely similar, hence we prove only
eq.(\ref{recursion2}). As we already know, Corollary \ref{zero-max-min}
tells us that $\Psi_N(\bz)|_{z_j=\nu_{N}}$ is in the
image of $\Upsilon_j^{N}$, then call $\bar
\Psi^{(j)}_{N-1}(\bz_{\widehat j})$ its unique preimage. 
Actually, thanks to Proposition \ref{inv-spec-prop}, we have that 
$$\bar \Psi^{(j)}_{N-1}(\bz_{1,\dots,L-1})=\bar
\Psi^{(k)}_{N-1}(\bz_{1,\dots,L-1})$$ for $j\neq k$, hence we can
suppress the upper label $(j)$. Since  $\Upsilon_j^{N}$
intertwines the $\check R$ matrices acting on sites different from
$j-1$ and $j$
\begin{align*}
\check R_i(z,w) \Upsilon_j^{N} = \Upsilon_j^{N} \check R_i(z,w) && i<j-1 \\
\check R_{i+1}(z,w) \Upsilon_j^{N} = \Upsilon_j^{N} \check R_i(z,w) && i>j-1 
\end{align*}
it is obvious that $\bar \Psi_{N-1}(\bz_{1,\dots,L-1})$ satisfies all
the exchange equations (\ref{exchange-eq}) for $i\neq j-1$. But since $\bar
\Psi_{N-1}(\bz_{1,\dots,L-1})$ doesn't depend on $j$ it must satisfy
also the exchange equation for $i=j-1$. Hence by unicity of the
solution of the exchange equations we have
$$
\bar \Psi_{N-1}(\bz_{1,\dots,L-1}) \propto \Psi_{N-1}(\bz_{1,\dots,L-1}).
$$
The proportionality factor is fixed by looking at the
specialization of the component associated to the configuration
$12\dots N$. 
\end{proof}
While the recursion relations of Theorem \ref{recur-theor} do not
allow to reconstruct recursively the full vector $\Psi_N(\bz)$ (recall
that $\deg_{z_j}\Psi_N(\bz)=N-1$), it can be used to determine the
form of certain families of components. For $1\leq \beta\leq N$ define
$$
\bw^{(\beta,N)}=12\dots \widehat \beta\dots N\beta
$$
Thanks to Propositions \ref{prop-factors-1} and \ref{prop-factors-2}
we know that the component $\psi_{\bw^{(\beta,N)}}(\bz)$ has the following form
\be\label{first-non-trivial}
\psi_{\bw^{(\beta,N)}}(\bz)= \phi^{(\beta)}_N(\bz)\widetilde
\psi_{\bw^{(\beta,N)}}(\bz_{\widehat N})
\ee
where 
$
\phi^{(\beta)}_N(\bz)= G_{[1,\dots,N];\bw^{(\beta,N)};N,N}(\bz)G_{[1\dots \widehat \beta
    \dots N];\bw^{(\beta,N)};1,N-1}(\bz)
$
and $\widetilde \psi_{\bw^{(\beta,N)}}(\bz_{\widehat N})$ is a symmetric
polynomial of degree $1$ in each of its variables. 
In this case the recursion relations
(\ref{recursion1},\ref{recursion2}), are enough to completely fix
$\widetilde \psi_{\bw^{(\beta,N)}}(\bz_{\widehat N})$ in a recursive
way. For this we 
introduce the following family of polynomials in the variables
$\bz,{\bf t},{\bf v}$, indexed by two non negative integers 
$r$ and $s$ 
\be
\mathfrak S^{r, s}(\bz,{\bf t},{\bf v})=\prod_{\substack{1\leq \alpha \leq
  r\\ 1\leq \beta \leq s}}(t_\alpha-v_\beta)
\oint_{\bf t} \frac{dw}{2\pi i}
\frac{\prod_{j=1}^{r+s-2}(z_i-w)}{\prod_{\alpha=1}^r(w-t_\alpha)\prod_{\beta=1}^s(w-v_\beta)
}, 
\ee
where the contour integration encircles only the poles at ${\bf t}$. 
It is easy to see that the polynomials $\mathfrak S^{r, s}(\bz,{\bf
  t},{\bf v})$ are fully characterized (by Lagrange interpolation) by
the following recursion relations  
\begin{align}\label{mathfrakS1}
\mathfrak S^{r,s}(\bz,{\bf t},{\bf
  v})|_{z_j=t_{\alpha}}=-\prod_{1\leq \beta\leq
    s}(t_\alpha-v_\beta) \mathfrak S^{r-1, s}(\bz_{\widehat j},{\bf
    t}_{\widehat \alpha},{\bf v}),\\\label{mathfrakS2}
\mathfrak S^{r,s}(\bz,{\bf t},{\bf
  v})|_{z_j=v_{\beta}}=-\prod_{1\leq \alpha \leq
  r}(t_\alpha-v_\beta) \mathfrak S^{r,s-1}(\bz_{\widehat j},{\bf
  t},{\bf v}_{\widehat \beta}),
\end{align}
for $1\leq \alpha \leq r, 1\leq \beta \leq s$, $1\leq j\leq r+ s-2$ and boundary conditions
$$
\mathfrak S^{0, s}(\bz,{\bf t},{\bf
  v})=\mathfrak S^{r,0}(\bz,{\bf t},{\bf
  v})=0.
$$
\begin{proposition}
\begin{multline}\label{sol-recursion}
\widetilde \psi_{\bw^{(\beta,N)}}(\bz)=
\prod_{\substack{1\leq \alpha<\gamma \leq \beta\\
\&\\
\beta\leq \alpha<\gamma \leq N}}(\tau_\alpha-\nu_\gamma)^{\gamma-\alpha-1}
\prod_{1\leq \alpha <\beta <\gamma\leq N}(\tau_\alpha-\nu_\gamma)^{\gamma-\alpha-2}\\
\mathfrak S^{\beta,
  N-\beta+1}(\bz,\{\tau_1,\tau_2\dots,\tau_\beta\},\{\nu_\beta,
\nu_{\beta +1},\dots,\nu_N\}) 
\end{multline}
\end{proposition}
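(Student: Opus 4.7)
The plan is to prove the formula by induction on $N$, using the recursion relations of Theorem \ref{recur-theor} to determine $\widetilde\psi_{\bw^{(\beta,N)}}(\bz)$ at the specializations $z_j=\tau_{1}$ and $z_j=\nu_{N}$, and then invoking Lagrange interpolation in a single variable. This works because, as already observed around \eqref{first-non-trivial}, $\widetilde\psi_{\bw^{(\beta,N)}}(\bz)$ is a polynomial in the $N-1$ spectral variables $\bz_{\widehat N}$ which is symmetric and of degree $1$ in each variable; hence its values at any two specializations of a single $z_j$ determine it uniquely, provided those values are consistent with symmetry in the remaining variables.

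In the inductive step I would first apply recursion \eqref{recursion1} at $j=1$. For $\beta>1$ the entry $1$ in $\bw^{(\beta,N)}=12\dots\widehat{\beta}\dots N\beta$ appears uniquely at position $1$, so only $j=1$ contributes non-trivially to $\psi_{\bw^{(\beta,N)}}|_{z_1=\tau_{1}}$. Dividing by $\phi^{(\beta)}_N(\bz)|_{z_1=\tau_{1}}$ extracts $\widetilde\psi_{\bw^{(\beta,N)}}|_{z_1=\tau_{1}}$; by the inductive hypothesis applied to the shifted system $\tilde\Psi_{N-1}$ (with $\tau_i\to\tau_{i+1}$, $\nu_i\to\nu_{i+1}$) one obtains a multiple of $\mathfrak S^{\beta-1,N-\beta+1}(\bz_{\widehat 1},\{\tau_2,\dots,\tau_\beta\},\{\nu_\beta,\dots,\nu_N\})$. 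This matches the specialization of the RHS of the proposed formula at $z_1=\tau_{1}$ via \eqref{mathfrakS1}, and symmetry of $\widetilde\psi$ in $\bz$ extends the identity to any $z_j=\tau_{1}$. The analogous argument, using \eqref{recursion2} at $j=N-1$ (species $N$ in $\bw^{(\beta,N)}$ with $\beta<N$ sits uniquely at position $N-1$), delivers the $z_j=\nu_{N}$ specialization and matches \eqref{mathfrakS2}. The boundary cases $\beta=1$ and $\beta=N$ are handled separately: in those cases $\widetilde\psi_{\bw^{(1,N)}}$ (respectively $\widetilde\psi_{\bw^{(N,N)}}$) must vanish at every $z_j=\tau_{1}$ (respectively $z_j=\nu_{N}$), which combined with the degree and symmetry constraints forces the simple product form predicted by the formula.

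The main obstacle will be the careful bookkeeping of the prefactors. The polynomial in $\btau,\bnu$ multiplying $\mathfrak S^{\beta,N-\beta+1}$ in \eqref{sol-recursion} must agree, after each reduction step, with the prefactor obtained by combining the $\kappa^b_N$ or $\kappa^t_N$ factor from Theorem \ref{recur-theor}, the residual value of $\phi^{(\beta)}_N$ at the specialization, and the inductive prefactor for $\widetilde\psi_{\bw^{(\beta-1,N-1)}}$ or $\widetilde\psi_{\bw^{(\beta,N-1)}}$. Verifying that the exponents $\gamma-\alpha-1$ and $\gamma-\alpha-2$ in \eqref{sol-recursion} correctly track the transitions between systems of size $N$ and $N-1$ is a delicate but routine combinatorial check; once this matching is established, the degree constraint together with the two specializations closes the induction by Lagrange interpolation.
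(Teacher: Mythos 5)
Your proposal follows essentially the same route as the paper: the paper likewise observes that $\widetilde\psi_{\bw^{(\beta,N)}}(\bz)$, being symmetric and of degree $1$ in each variable, is completely determined by the specializations $z_j=\tau_1$ and $z_j=\nu_N$ coming from Theorem \ref{recur-theor}, and then verifies that the candidate formula satisfies these recursions via eqs.~(\ref{mathfrakS1},\ref{mathfrakS2}). Your write-up merely makes explicit the induction on $N$, the prefactor bookkeeping, and the boundary cases, all of which the paper compresses into ``this is readily done''.
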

\begin{proof}
As mentioned before, being $\widetilde \psi_{\bw^{(\beta,N)}}(\bz)$ a symmetric
polynomial of degree $1$ in each variable, it is completely determined
by the recursion relations
(\ref{recursion1},\ref{recursion2}). Therefore it is sufficient to
check that plugging eq.(\ref{sol-recursion}) into
eq.(\ref{first-non-trivial}), we obtain a family of polynomials that satisfy the
recursion relations. This is readily done using the recursions for
$\mathfrak S^{r,s}(\bz,{\bf t},{\bf v})$,
eqs.(\ref{mathfrakS1},\ref{mathfrakS2}) 
\end{proof}
At this point we remark the appearance of double Schubert polynomials.
Indeed, let $\sigma(h,N)\in \mathcal S^N$ be the permutation defined by 
$$
\sigma(\beta,N) =(1,\beta+1,\beta+2,\dots,N,2,3,\dots,\beta).
$$
as will be shown in Appendix \ref{app-schubert}, the polynomial $\mathfrak
S^{\beta, N-\beta+1}(\infty,{\bf t},{\bf   v})$ is the double Schubert
polynomial in the variables ${\bf t},{\bf v}$ associated to the
permutation $\sigma(N-\beta+1,N)$
\be
\mathfrak S^{\beta , N-\beta+1}(\infty,{\bf t},{\bf v})= \mathfrak
S_{\sigma(N-\beta+1,N)}({\bf t},{\bf v}). 
\ee

\subsection{Descending configurations}\label{sect:desc-conf}

Let $\bw(N,h)$ be the configuration given by 
$$
\bw(N,h)= h(h-1)\dots 21(h+1)(h+2)\dots N
$$
As particular cases we have that 
\begin{align*}
\bw(N,1)=12\dots(N-1)N,&&~~~~\bw(N,N)=N(N-1)\dots 2 1&&
\end{align*}
One of the key results of this paper is the formula for
$\psi_{\bw(N,h)}(\bz)$, which is the content of the following
\begin{theorem}\label{theo-product-general}
Let $1\leq h \leq N$, the polynomial $ \psi_{\bw(N,h)}(\bz)$ has the
following form
\begin{multline}\label{product-general}
\psi_{\bw(N,h)}(\bz) =  \phi^{(N,h)}(\btau,\bnu)\\G_{[h+1,N],\bw(N,h),1,L}(\bz)
\prod_{\beta=1}^h \mathfrak S^{(\beta,N-\beta+1)}(\bz_{\widehat {h-\beta+1}},\btau,\bnu^c),
\end{multline}
where
$$
\phi^{(N,h)}=\prod_{1\leq \alpha\leq h<\gamma \leq N}
(\tau_\alpha-\nu_\gamma)^{\gamma-h-1}
\prod_{h<   \alpha <\gamma\leq N}
(\tau_\alpha-\nu_\gamma)^{\gamma-\alpha-1},
$$
where for a fixed $N$, $\bnu^c$ correspond to $\bnu$ taken in
reversed order starting from $N$, i.e.
\be
\nu^c_i= \nu_{N-i+1}.
\ee
\end{theorem}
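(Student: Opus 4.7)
The plan is to prove (\ref{product-general}) by induction on $N$. The base case $N=1$ is trivial, and the case $h=1$ reduces, for every $N$, to the normalization (\ref{normalization}). For the inductive step I let $F_{N,h}(\bz)$ denote the right-hand side of (\ref{product-general}) and check that it obeys the same specialization identities at $z_j=\tau_1$ and $z_j=\nu_N$ enjoyed by $\psi_{\bw(N,h)}(\bz)$ via Theorem \ref{recur-theor}; a degree count will then close the argument.

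First I would verify that $F_{N,h}$ is a polynomial of degree $\leq N-1$ in each $z_j$, matching Corollary \ref{coroll-degree}: the degree in $z_j$ is $N-h$ (resp.\ $N-h-1$) from the $G$ factor for $j\leq h$ (resp.\ $j>h$), plus $h-1$ (resp.\ $h$) from the product of $\mathfrak S$ factors. The $\beta=1$ factor $\mathfrak S^{(1,N)}(\bz_{\widehat h},\{\tau_1\},\bnu^c)$ collapses by a single residue to $\prod_{j\neq h}(z_j-\tau_1)$, so $F_{N,h}$ vanishes at $z_j=\tau_1$ for $j\neq h$, as predicted by Corollary \ref{zero-max-min}; likewise the $\alpha=N$ contribution in $G_{[h+1,N],\bw(N,h),1,L}$ provides $(z_j-\nu_N)$ for each $j\neq N$. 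For the non-vanishing specialization at $z_N=\nu_N$, I would apply the recursion (\ref{mathfrakS2}) to each of the $h$ factors $\mathfrak S^{(\beta,N-\beta+1)}$ (identifying $\nu_N$ with $\nu^c_1$), collect the resulting prefactors $-\prod_{\alpha=1}^\beta(\tau_\alpha-\nu_N)$, and multiply them by the contribution of $G|_{z_N=\nu_N}$ coming from position $i=N$. Using the explicit form of $\phi^{(N,h)}$ and the ratio $\phi^{(N,h)}/\phi^{(N-1,h)}$, the outcome should reproduce $\kappa^t_N(\bz_{\widehat N})\,F_{N-1,h}(\bz_{\widehat N})$, which by Theorem \ref{recur-theor} and the inductive hypothesis equals $\psi_{\bw(N,h)}|_{z_N=\nu_N}$. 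An analogous computation using (\ref{mathfrakS1}) at $z_h=\tau_1$ matches $\kappa^b_N(\bz_{\widehat h})\,\tilde\psi_{\bw(N-1,h-1)}(\bz_{\widehat h})$, closing the recursion.

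Once both specializations match, the difference $D(\bz)=F_{N,h}(\bz)-\psi_{\bw(N,h)}(\bz)$ vanishes at $z_j=\tau_1$ and $z_j=\nu_N$ for every $j\in\{1,\ldots,N\}$, hence is divisible by $\prod_{j=1}^{N}(z_j-\tau_1)(z_j-\nu_N)$. Combined with the trivial factors of Proposition \ref{prop-factors-1} shared by both $F_{N,h}$ (by direct inspection) and $\psi_{\bw(N,h)}$ (by hypothesis), and with the symmetry of the quotient in $z_{h+1},\ldots,z_N$ provided by Proposition \ref{prop-factors-2} after dividing out $G_{\{h+1,\ldots,N\},\bw(N,h),h+1,N}$, a degree count against $\deg_{z_j}\leq N-1$ forces $D=0$. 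The hard part will be the algebraic bookkeeping in the specialization check: the sign $(-1)^h$ from the $h$ applications of (\ref{mathfrakS2}), the sign $(-1)^{N-h-1}$ from evaluating $G$ at position $N$, and the precise exponents of $(\tau_\alpha-\nu_N)$ arising in $\phi^{(N,h)}/\phi^{(N-1,h)}$ together with the index shift between $\bnu^c$ in the $N$-system and its analogue in the $(N-1)$-system, must all combine exactly to reproduce $\kappa^t_N$; the analogous but shifted bookkeeping is required for the $z_h=\tau_1$ specialization.
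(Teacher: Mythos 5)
Your overall strategy (induction, matching the two specializations $z_j=\tau_{1}$ and $z_j=\nu_{N}$ against Theorem \ref{recur-theor} via the recursions (\ref{mathfrakS1},\ref{mathfrakS2}), then a degree count) is close in spirit to the paper's, and your identification of the relevant preimages under $\Upsilon^1_h$ and $\Upsilon^N_N$ and of the trivial vanishings of the right-hand side is correct. But the closing step has a genuine gap. You argue that $D=F_{N,h}-\psi_{\bw(N,h)}$ vanishes at $z_j=\tau_{1}$ and $z_j=\nu_{N}$ for every $j$, hence is divisible by $\prod_{j}(z_j-\tau_{1})(z_j-\nu_{N})$, and that this contradicts $\deg_{z_j}D\leq N-1$. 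That product has degree only $2$ in each $z_j$, while $\deg_{z_j}D$ can be as large as $N-1$; already for $N=3$, $h=3$ (where $G$ is empty) the divisibility leaves room for $D=c\prod_j(z_j-\tau_1)(z_j-\nu_3)$ with $c\neq 0$, and for larger $N$ the discrepancy grows. Dividing out the common trivial factors from Propositions \ref{prop-factors-1} and \ref{prop-factors-2} and invoking symmetry of the quotient in $z_h,\dots,z_N$ does not repair this: the quotient still has degree up to $h-1$ in those variables, and two specializations per variable only suffice when the residual unknown is of degree $\leq 1$ in each variable.

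What is missing is precisely the paper's key intermediate step: before any specialization, one must show that $\psi_{\bw(N,h)}(\bz)$ is divisible by $G_{[h+1,N],\bw(N,h),1,L}(\bz)\prod_{\beta=1}^{h-1}\mathfrak S^{(\beta,N-\beta+1)}(\bz_{\widehat{h-\beta+1}},\btau,\bnu^c)$, so that the remaining unknown $g(\bz)$ is a \emph{symmetric, multilinear} polynomial in $z_h,\dots,z_N$, which the two specializations $z_N=\nu_N$ and $z_h=\tau_1$ then determine by Lagrange interpolation in a single variable. The paper obtains this divisibility by an inner induction on $h$: the formula at level $h-1$ gives the factorized form (\ref{eq-bw_desc}) for $\psi_{\bw(N,h-1)}$, and Proposition \ref{prop-factors-desc} (whose proof rests on the nontrivial commutation identity of Proposition \ref{general-exch-prop}, describing how strings of $\pi_i(\alpha,\beta)$ act on products of the $\mathfrak S^{(\beta,N-\beta+1)}$) propagates that factorization to every configuration in $\mathcal D(N,h-1)$, in particular to $\bw(N,h)$. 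Without this ingredient, or an equivalent mechanism for extracting the Schubert factors of degree $\beta-1\geq 1$, the specialization-plus-degree argument cannot close.
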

For later reference let us write explicitly the case $h=1$, which
provides a nice factorized expression for $\psi_{N(N-1)\dots1}(\bz) $ 
\be\label{eq-largest}
\psi_{N(N-1)\dots1}(\bz) = \prod_{\beta=1}^N \mathfrak
S^{(\beta,N-\beta+1)}(\bz_{\widehat {N-\beta+1}},\btau,\bnu^c). 
\ee
Let us also notice that
as a corollary of Theorem \ref{theo-product-general}, by setting $\bz= {\bf \infty}$ we prove a
generalization of Lam-Williams Conjectures $3$ and $4$ \cite{lam2012markov}
\begin{corollary}
With the normalization eq.(\ref{norm-general}) the component reads
$$
\psi_{\bw(N,h)}= \phi^{(N,h)}(\btau,\bnu) \prod_{\beta}^h\mathfrak
S_{\sigma(\beta,N)}(\btau,-\bnu^c). 
$$
\end{corollary}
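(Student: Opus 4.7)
The plan is to derive the corollary as a direct specialization $\bz = \infty$ of Theorem \ref{theo-product-general}, combined with the identification of the polynomials $\mathfrak S^{r,s}$ with double Schubert polynomials that is to be established in Appendix \ref{app-schubert}. By eq.~\eqref{specialization1}, the $\bz \to \infty$ limit of $\Psi_N(\bz)$ is proportional to the stationary measure $\mathcal P_N$; by the normalization fixed in eq.~\eqref{normalization}, the component $\psi_{12\dots N}(\bz)$ has leading behavior matching $\phi_N(\btau,\bnu)$, so the normalization \eqref{norm-general} is compatible with the $\infty$-specialization.

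First I would apply $\bz \to \infty$ to both sides of eq.~\eqref{product-general}. The prefactor $\phi^{(N,h)}(\btau,\bnu)$ is independent of $\bz$ and passes through unchanged. The factor $G_{[h+1,N],\bw(N,h),1,L}(\bz)$ is, by construction, a monic polynomial in each $z_i$, so its value at infinity contributes $1$ once one has verified degree bookkeeping. Concretely, one checks that the per-variable degrees of $G$ plus those of the product of $\mathfrak S^{(\beta,N-\beta+1)}$ factors sum to $N-1$, matching the total degree of $\Psi_N(\bz)$ given by Corollary \ref{coroll-degree}.

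The main step is to evaluate each factor $\mathfrak S^{(\beta,N-\beta+1)}(\bz_{\widehat{h-\beta+1}},\btau,\bnu^c)$ at $\bz = \infty$. By the identification to be proved in Appendix \ref{app-schubert}, namely
\[
\mathfrak S^{\beta,N-\beta+1}(\infty,\bt,\bv) \;=\; \mathfrak S_{\sigma(N-\beta+1,N)}(\bt,\bv),
\]
each factor becomes a double Schubert polynomial. A reindexing of the product over $\beta=1,\dots,h$ together with the standard convention identity relating the two sign choices for double Schubert polynomials (which flips $\bv \mapsto -\bv$) yields the required form $\prod_{\beta=1}^h \mathfrak S_{\sigma(\beta,N)}(\btau,-\bnu^c)$.

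The main obstacle is purely bookkeeping: tracking the per-variable degree of each factor of the right-hand side of \eqref{product-general} so that the coefficient of the top-degree monomial at $\bz = \infty$ matches the normalization in \eqref{norm-general}, and carefully reconciling the Schubert polynomial index and sign conventions across the reindexing. All substantive analytic content---namely the factorization of $\psi_{\bw(N,h)}(\bz)$ into the form \eqref{product-general} in terms of $\mathfrak S^{r,s}$---is encoded in Theorem \ref{theo-product-general}, which we assume.
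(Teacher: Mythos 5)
Your proposal is correct and follows essentially the same route as the paper, which states this corollary as the immediate $\bz=\infty$ specialization of Theorem \ref{theo-product-general} combined with the identification $\mathfrak S^{\beta,N-\beta+1}(\infty,\bt,\bv)=\mathfrak S_{\sigma(N-\beta+1,N)}(\bt,\bv)$ established in Appendix \ref{app-schubert}; the paper simply omits the degree/monicity bookkeeping and the index--sign reconciliation that you spell out.
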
 
For the proof of Theorem \ref{theo-product-general} we need some
preparatory steps. Let us introduce the families $\mathcal D(N,h)$, which consist
of configurations containing a 
sub-string of consecutive entries of the form 
$$
h(h-1)\dots 1.
$$
Notice that we have $\bw(N,h)\in\mathcal D(N,h)$ and $\mathcal D(N,h)\subset
\mathcal D(N,k)$ for $k\leq h$. Any configuration $\bw\in \mathcal D(N,h)$ can
be obtained from 
any other configuration 
$\widetilde \bw\in \mathcal D(N,h)$ by a sequence of permutations $S_1,\dots,S_\ell$
\be\label{bw_tilde_bw}
\bw = S_\ell\dots S_2S_1\widetilde \bw
\ee
with $S_j$ being either a transposition of consecutive
entries $h<w_{i+1}<w_i$, $s_i:w_iw_{i+1}\mapsto w_{i+1}w_i
$ or a permutation $\sigma_i$ moving $w_i$ from the left of the string
$h(h-1)\dots 1$ to its right, i.e. 
$$
\sigma_i:\bw_L w_ih(h-1)\dots 1\bw_R \mapsto \bw_L h(h-1)\dots
1w_i\bw_R .
$$
\begin{proposition}\label{prop-factors-desc}
Let $\widetilde \bw\in \mathcal D(N,h)$, where the sub-string
$h(h-1)\dots 1$ goes from $\tilde j+1$ to $\tilde j+h$. Suppose 
 $\psi_{\widetilde  \bw}(\bz)$ to be of the form 
$$
\psi_{\widetilde
  \bw}(\bz) = \psi^{(0)}_{\widetilde
  \bw}(\bz) \bar\psi_{N;\tilde j+1,\tilde j+h}(\bz)
$$
with $ \psi^{(0)}_{\widetilde
  \bw}(\bz)$  a polynomial symmetric in the variables
$\bz_{\widehat{[\tilde j+1,\dots,\tilde j+h]})}$ and 
$$
\bar\psi_{N;\tilde j+1,\tilde j+h}(\bz):=\prod_{\beta=1}^h \mathfrak
S^{(\beta,N-\beta+1)}(\bz_{\widehat {\tilde j+\beta}},\btau,-\bnu^c).
$$
Then the same factorization holds for the components associated to
any $\bw\in \mathcal D(N,h)$  with sub-string $h(h-1)\dots 1$
between positions $j+1$ and $j+h$, i.e. the component
$\psi_\bw(\bz)$ is also of the form
\be\label{eq-bw_desc}
\psi_\bw(\bz)=\psi^{(0)}_{\bw}(\bz) \bar\psi_{N;j+1,j+h}(\bz), 
\ee
for some $\psi^{(0)}_{\bw}(\bz)$, symmetric in $\bz_{\widehat{[j+1,j+h]}}$.
\end{proposition}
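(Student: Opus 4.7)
My plan is to proceed by induction on the length $\ell$ of the sequence $\bw = S_\ell \cdots S_1 \widetilde{\bw}$ of elementary moves, with the base case $\ell=0$ trivial. The inductive step then reduces to checking that each of the two elementary moves—a transposition $s_i$ of adjacent entries both $>h$ and outside the substring (type (a)), or a $\sigma_i$ shifting the descending substring by one position (type (b))—takes a factorization of the form (\ref{eq-bw_desc}) to another such factorization.

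The type (a) case is straightforward. Since the substring position $[j+1,j+h]$ is unchanged, $\bar{\psi}_{N;j+1,j+h}$ appears on both sides. The crucial observation is that in each factor $\mathfrak S^{(\beta,N-\beta+1)}(\bz_{\widehat{j+\beta}},\btau,-\bnu^c)$ the omitted position $j+\beta$ lies inside the substring, while $i,i+1$ lie outside, so each factor—and hence $\bar{\psi}_{N;j+1,j+h}$—is symmetric in $z_i,z_{i+1}$. It therefore passes freely through $\pi_i(w_i,w_{i+1})$, and one may set $\psi^{(0)}_{\bw} = \pi_i(w_i,w_{i+1}) \psi^{(0)}_{\widetilde{\bw}}$; this retains symmetry in the substring variables $z_{j+1},\dots,z_{j+h}$ because $\pi_i$ touches only $z_i,z_{i+1}$.

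The type (b) case is the delicate one, since the substring itself moves and the factor $\bar{\psi}$ on the two sides differs. A $\sigma_i$ sending $w_i$ (with $w_i>h$) from position $j_0$ across the substring at $[j_0+1,j_0+h]$ is realized by iterating the exchange equation over the swaps $s_{j_0},s_{j_0+1},\dots,s_{j_0+h-1}$, each of which is ascending since $w_i$ exceeds all of $h,h-1,\dots,1$. This produces the composition
\[
\Pi := \pi_{j_0+h-1}(w_i,1)\,\pi_{j_0+h-2}(w_i,2)\cdots\pi_{j_0}(w_i,h)
\]
applied to $\psi_{\widetilde{\bw}}$, and the substring ends up at $[j_0,j_0+h-1]$. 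The heart of the argument is thus the following technical lemma, which I would state separately and relegate its verification to Appendix \ref{app-proofs}: for any polynomial $f(\bz)$ symmetric in $z_{j_0+1},\dots,z_{j_0+h}$, the composition $\Pi$ sends $f(\bz)\,\bar{\psi}_{N;j_0+1,j_0+h}(\bz)$ to a product $g(\bz)\,\bar{\psi}_{N;j_0,j_0+h-1}(\bz)$ with $g(\bz)$ symmetric in $z_{j_0},\dots,z_{j_0+h-1}$.

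This lemma is the main obstacle. I would prove it by Lagrange interpolation in the variables $z_{j_0},\dots,z_{j_0+h}$: both sides are polynomials in each such $z_k$ of controlled degree (bounded using Corollary \ref{coroll-degree} together with the explicit $\tau$, $\nu$ factors already known from Propositions \ref{prop-factors-1}, \ref{prop-factors-2}), and specializing $z_k$ to any $\tau_\alpha$ with $\alpha\le h$ or any $\nu^c_\alpha$ with $\alpha\le N-h+1$ collapses the relevant $\mathfrak S^{(\beta,N-\beta+1)}$ factors via the recursions (\ref{mathfrakS1}),(\ref{mathfrakS2}), while collapsing the divided differences via the numerator factors in the definition (\ref{divided-ab}) of $\pi_i$. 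These specializations reduce the identity to ones involving a smaller $\mathfrak S^{r-1,s}$ or $\mathfrak S^{r,s-1}$, which can be handled inductively. Once the lemma is in hand, the inductive step is complete, and the proposition follows.
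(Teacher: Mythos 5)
Your overall architecture coincides with the paper's: induction on the number of elementary moves $S_\ell\cdots S_1$, with the transposition case disposed of because $\bar\psi_{N;j+1,j+h}(\bz)$ is symmetric in $z_i,z_{i+1}$ and so passes through $\pi_i$, and the $\sigma_i$ case reduced to a technical lemma about the composition $\pi_{j_0+h-1}(w_i,1)\cdots\pi_{j_0}(w_i,h)$ acting on $f(\bz)\,\bar\psi_{N;j_0+1,j_0+h}(\bz)$. That lemma is precisely the paper's Proposition \ref{general-exch-prop}, and identifying it as the crux is the right move. The paper, however, actually proves it, by a direct computation: each $\mathfrak S$ factor is written as a contour integral $G^{(F,m)}$, one operator $\pi_1(u;h)$ is peeled off, the remaining composition is handled by induction on $h$, and the induction is closed with the residue-sum identity of Proposition \ref{prop-sum-1}.

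The gap in your proposal is that the interpolation strategy you sketch for this lemma does not go through. First, you state the lemma for an \emph{arbitrary} symmetric polynomial $f$ --- correctly, since that is the generality the induction requires --- but then there is no a priori degree bound and no finite set of specializations can establish the identity. Second, even restricting to the actual $f=\psi^{(0)}_{\bw'}$: by Corollary \ref{coroll-degree} the component has degree $N-1$ in each $z_k$, of which the factor $\bar\psi$ accounts for only $h-1$ in each substring variable, leaving an unknown residual factor of degree up to $N-h$; the only specializations at which both sides are independently computable are $z_k=\tau_1$ and $z_k=\nu_N$ (Theorem \ref{recur-theor}), i.e.\ two points per variable, which is insufficient whenever $h<N-1$. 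The further points $z_k=\tau_\alpha$ with $\alpha\ge 2$ do collapse the $\mathfrak S$ factors via eqs.~(\ref{mathfrakS1}),(\ref{mathfrakS2}), but they give you no handle on $\psi^{(0)}_{\bw'}$ there, the target factor $\bar\psi_{N;j_0,j_0+h-1}$ does not vanish there, and the divided differences (which involve $s_i$) do not commute with such specializations --- so at those points the left-hand side still contains the very unknown you are trying to determine. An explicit closed-form computation of the action of the composed operators, as in Proposition \ref{general-exch-prop}, is genuinely needed here and cannot be replaced by specialization bookkeeping.
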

\begin{proof}
Call $d(\widetilde \bw, \bw)$ the minimal $\ell$ for which we
have a sequence of permutations realizing eq.(\ref{bw_tilde_bw}).
We prove the statement by induction on 
$d(\widetilde \bw, \bw)$. For $d(\widetilde \bw,
\bw)=0$ there is nothing to prove. Now consider $\ell=d(\widetilde \bw,
\bw)>0$, this means that we can write $\bw=S_\ell\dots S_1\bw$, for
some $S_i$ as defined above. By induction we know that for
$\bw'=S_{\ell-1}\dots S_1\bw$, $\psi_{\bw'}(\bz)$ is of the form
eq.(\ref{eq-bw_desc}). For $S_\ell$ there are two possibilities
either $S_\ell=s_i$ or $S_\ell=\sigma_i$ for some $s_i$ or $\sigma_i$.
 
If $S_\ell=s_i$, it means that $\psi_{\bw}(\bz) =
\pi_i(w'_i,w'_{i+1})\psi_{\bw'}(\bz)$.
The polynomial $\psi_{\bw'}(\bz)$ is
of the form eq.(\ref{eq-bw_desc}) with the factor
$\bar\psi_{\bm,\rho,\bw}(\bz)$ symmetric in $z_i,z_{i+1}$, therefore
we have $\psi_{\bw}(\bz) =
\left[\pi_i(w'_i,w'_{i+1})\psi^{(0)}_{\bw}(\bz)\right]\bar\psi_{N,j'+1,j'+h}(\bz)$,
which is again of the form eq.(\ref{eq-bw_desc}).

If $S_\ell=\sigma_i$ then 
$
\psi_{\bw}(\bz) =\pi_{\sigma_i}(\bw') \psi_{\bw'}(\bz). 
$
Since $\psi_{\bw'}(\bz)$ is of the form eq.(\ref{eq-bw_desc}), we can
use Proposition \ref{general-exch-prop}\footnote{Up to a translation
  in the indices of the variables $z_k\rightarrow z_{k-i+1}$.}, with $u=w_i$,
\begin{align*}
K(z_i;\bz_{i+1,\dots,i+h})= \psi^{(0)}_{\bw}(\bz),&&
F_j(w)= \frac{\prod_{\substack{1\leq \ell\leq N\\\ell \notin [i+1,i+h]
  }} (z_\ell-w)}{\prod_{\beta=h}^N(w-\nu_\beta)}.&& 
\end{align*}  
Then eq.(\ref{eq-general-exch-prop}) allows to conclude that
$\psi_{\bw}(\bz)$ is of the desired form.
\end{proof}
We can now pass to the proof of Theorem
\ref{theo-product-general}.
\begin{proof}[Proof. of Theorem \ref{theo-product-general}] 
The proof proceeds by a double induction on $N$ and on $h$. 

For $N=1$ the statement is trivial, while 
the case $h=1$  %(which means $\k=b(\bm)$)
holds because in such case $\bw(N,1)=12\dots N$ and
eq.(\ref{product-general}) coincides with eq.(\ref{normalization}).

Assuming $N,h>1$ we proceed by factor exhaustion. The
presence of the factor $G_{[h+1,N],\bw(N,h),1,L}(\bz)$ is
ensured by Propositions \ref{prop-factors-1} and
\ref{prop-factors-2}.  By induction we know that
$\psi_{\bw(N,h-1)}(\bz)$ is of the form eq.(\ref{product-general}) and in
particular of the form eq.(\ref{eq-bw_desc}), therefore it follows
from Proposition \ref{prop-factors-desc} that the components
corresponding to configurations in $\mathcal D(N,h-1)$ have the same
form. Since $\bw(N,h)\in \mathcal D(N,h-1)$, we conclude in particular
that $\psi_{\bw(N,h)}(\bz)$ 
contains the factor
$$\prod_{\beta=1}^{h-1} \mathfrak S^{(\beta,N-\beta+1)}(\bz_{\widehat
  {h-\beta+1}},\btau,\bnu^c),$$ 
which is prime with  $G_{[h+1,N],\bw(N,h),1,L}(\bz)$.
The remaining factor $g(\bz)$ is a symmetric polynomial in the
variables  $z_h,z_{h+1},\dots,z_N$, of degree $1$ in
each of these variables.
Therefore, in order to check that 
$$
g(\bz)= \phi^{(N,h)}(\btau,\bnu) \mathfrak S^{(h,N-h+1)}(\bz_{\widehat
  1},\btau,\bnu^C) 
$$ 
it is enough to check that eq.(\ref{product-general}) holds when
specialized at two distinct values of $z_i$  for $i\in [h,N]$. 
For $i=N$, using the recursions of Theorem
\ref{recur-theor}, the specialization
$\psi_{\bw(N,h)}(\bz)|_{z_N=\nu_{N}}$ can be written in terms of 
$\psi_{\bw(N-1,h)}(\bz_{\widehat N})$, which by induction
(it corresponds to the case $N-1$) is given by
the expression of eq.(\ref{product-general}). It is not difficult to
check that using eq.(\ref{product-general}) for
$\psi_{\bw(N,h)}(\bz)$ and specializing $z_N=\nu_{N}$ one
obtains the same result. In the same way one can check the specialization
$z_h=\tau_{1}$.
\end{proof}
Another result that can be obtained by using Theorem 
\ref{theo-product-general} concerns the primality of the components of $\Psi_N(\bz)$
\begin{theorem}\label{primality-general}
With the normalization given by eq.(\ref{normalization}) the
components of $\Psi_N(\bz)$, as 
functions of $\btau,\bnu$ and $\bz$, are prime polynomials with
integer coefficients.
\end{theorem}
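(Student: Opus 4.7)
The theorem asserts two properties of the components $\psi_\bw(\bz)$ under the normalization \eqref{normalization}: (A) each $\psi_\bw(\bz)$ lies in $\bZ[\btau,\bnu,\bz]$, and (B) the family $\{\psi_\bw(\bz)\}_\bw$ is relatively prime, i.e., has no common nonunit polynomial factor. Both statements generalize Theorem \ref{theo-minimal-poly} to the $\bz$-deformed setting.

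\textbf{Integer coefficients.} My plan is to start from the explicit formula \eqref{eq-largest} for $\psi_{N(N-1)\ldots 1}(\bz)$, which is manifestly in $\bZ[\btau,\bnu,\bz]$ since each $\mathfrak S^{r,s}$ is built from residue sums with monic denominators. Proposition \ref{prop-factors-desc} then preserves this factorized structure within any descending family $\mathcal D(N,h)$, so every component in $\mathcal D(N,h)$ can be reached from $\bw(N,N)\in\mathcal D(N,h)$ by a chain of steps alternating between adjacent transpositions inside a fixed $\mathcal D(N,h)$ and descents $\mathcal D(N,h)\to\mathcal D(N,h-1)$. The subtle point is that $\pi_i(\beta,\alpha)$ carries the apparent denominator $\tau_\alpha-\nu_\beta$; but the factorized form supplied by Proposition \ref{prop-factors-desc} — in particular the matching numerator factor in $\psi^{(0)}_{\bw}(\bz)$ forced by Propositions \ref{prop-factors-1}--\ref{prop-factors-2} — ensures this denominator cancels at every step, so integer coefficients propagate.

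\textbf{Coprimality.} Any common factor $p$ of the family must divide the normalized base component $\psi_{12\ldots N}(\bz)$ of \eqref{normalization}, and hence is (up to sign) one of: $(z_i - \tau_\alpha)$ with $\alpha < i$, $(z_i - \nu_\alpha)$ with $\alpha > i$, $(\tau_\alpha - \nu_\beta)$ with $\beta - \alpha \geq 2$, or an integer prime. I would rule each out by exhibiting a specific component not divisible by it. For $(z_i - \tau_\alpha)$, take a configuration $\bw$ with $w_i = \alpha$ arranged so that neither Proposition \ref{prop-factors-1} nor Proposition \ref{prop-factors-2} forces the factor at position $i$; non-vanishing at $z_i = \tau_\alpha$ is then checked by specialization, using the recursion of Theorem \ref{recur-theor} after a suitable relabeling. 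The $(z_i - \nu_\alpha)$ case is symmetric. Integer primes are excluded by comparing leading $\bz$-coefficients of $\psi_{12\ldots N}$ and $\psi_{N(N-1)\ldots 1}$, which are manifestly primitive in $\btau,\bnu$.

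\textbf{The main obstacle} is the case $p = (\tau_\alpha - \nu_\beta)$ with $\beta - \alpha \geq 2$: such factors appear with positive multiplicity both in $\phi_N(\btau,\bnu)$ and — through the prefactors $\prod(t_{\alpha'}-v_{\beta'})$ of $\mathfrak S^{r,s}$ — in $\psi_{N(N-1)\ldots 1}(\bz)$. My plan is to compute the precise multiplicity of $(\tau_\alpha - \nu_\beta)$ in $\psi_{\bw(N,h)}(\bz)$ as a function of $h$, using the explicit product \eqref{product-general}: this multiplicity is a sum of contributions from $\phi^{(N,h)}(\btau,\bnu)$ and from the $\mathfrak S^{(\beta',N-\beta'+1)}$ prefactors, all of which are combinatorially explicit. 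For a suitable $h$ straddling the interval $[\alpha,\beta]$ I expect the total multiplicity to drop strictly below the multiplicity appearing in $\psi_{12\ldots N}$, which suffices to rule $(\tau_\alpha-\nu_\beta)$ out as a common factor. I anticipate this multiplicity accounting, rather than the coefficient-integrality or the two simpler types of linear common factors, to be the main technical step.
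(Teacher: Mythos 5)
The integrality half of your argument has a genuine gap. The only possible non\-polynomial contributions in $\btau,\bnu$ come from the factor $(\tau_{w_i}-\nu_{w_{i+1}})^{-1}$ carried by $\pi_i(w_{i+1},w_i)$ in eq.~(\ref{exch2}); what must be shown is that $\tau_{w_i}-\nu_{w_{i+1}}$ divides the divided difference $(\psi_{s_i\circ\bw}-s_i\circ\psi_{s_i\circ\bw})/(z_i-z_{i+1})$. The ``matching numerator factors'' you invoke from Propositions \ref{prop-factors-1}--\ref{prop-factors-2} are linear in $\bz$ (of the form $z_j-\tau_\alpha$ or $z_j-\nu_\beta$) and cannot cancel a denominator $\tau_\alpha-\nu_\beta$; moreover Proposition \ref{prop-factors-desc} only controls the moves $\sigma_i$ across a descending block (where the cancellation does occur, coming from the prefactors $\prod(t_\alpha-v_\beta)$ of the $\mathfrak S^{r,s}$ via Proposition \ref{general-exch-prop}), and it says nothing about the denominators $\tau_a-\nu_b$ with $a,b>h$ introduced by the transpositions $s_i$ inside $\mathcal D(N,h)$; finally, for $h=1$ the family $\mathcal D(N,1)$ is all configurations but the factorization it supplies is essentially just the trivial factor of Corollary \ref{zero-max-min}, so nothing is gained. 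The paper's actual mechanism is different and global: starting from $\psi_{N(N-1)\cdots1}(\bz)$, a denominator $\tau_{w_i}-\nu_{w_j}$ (with $w_i<w_j$) can arise only from the unique application of an operator $\pi_\ell(w_j,w_i)$, which happens only when the value $w_i$ sits to the left of $w_j$; one then rotates $\bw$ cyclically so that $w_j$ comes first, notes that the rotated component has no such denominator, and uses the cyclicity of the exchange system to transfer the conclusion back to $\psi_\bw(\bz)$. Without this (or an equivalent) argument, polynomiality in $\btau,\bnu$ --- and hence integrality --- is not established.

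For coprimality your enumerate-and-exclude strategy is workable in principle but is both heavier than necessary and left unfinished precisely at the case you correctly identify as hardest. The paper disposes of coprimality in one line: by eq.~(\ref{normalization}) and Theorem \ref{theo-product-general} both $\psi_{12\cdots N}(\bz)$ and $\psi_{N(N-1)\cdots 1}(\bz)$ are explicit products, and one checks directly that they share no irreducible factor; in particular $\tau_\alpha-\nu_\beta$ does not divide $\mathfrak S^{r,s}(\bz,\btau,\bnu^c)$, as one sees from the residue expansion (under the specialization $\tau_\alpha=\nu_\beta$ only the residue at $w=\tau_\alpha$ survives and is generically nonzero) or from the recursions (\ref{mathfrakS1})--(\ref{mathfrakS2}). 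Since any common factor of the whole family divides both of these two components, coprimality follows with no multiplicity bookkeeping over the $\psi_{\bw(N,h)}(\bz)$.
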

\begin{proof}
First we notice that if $F(\btau,\bnu;\bz)$ is a polynomial in
$\btau,\bnu,\bz$ with integer coefficients and
$\pi_i(\beta,\alpha)F(\btau,\bnu;\bz)$ is also polynomial in
$\btau,\bnu,\bz$, then it must have integer
coefficients as well. Therefore, since all the components can be obtained from
$\psi_{12\dots N}(\bz)$ (which has integer coefficients) by action of operators
$\pi_i(\beta,\alpha)$, once we will have proven that all components are
polynomial in $\btau,\bnu,\bz$ we shall automatically get that their
components are integer.
 
From their formulas, we see that
$\psi_{12\dots N}(\bz)$ and $\psi_{N(N-1)\dots 1}(\bz)$ are prime
polynomials in all the variables $\bz, \btau,\bnu$ (they have no
common polynomial factor), therefore the only thing that remains to be
proven is the polynomiality in $\btau$ and $\bnu$ of all the
other components.
Any $\psi_\bw(\bz)$ can be obtained from $\psi_{N(N-1)\dots 1}(\bz)$,
by sequential action of the operators $\pi_\ell(\alpha,\beta)$ with
$1\leq \ell \leq L-1$. In particular if for $i<j$,  $w_i<w_j$, then
exactly one of the members of the sequence of operators
$\pi_\ell(\alpha,\beta)$ is of the form $\pi_\ell(w_i,w_j)$ for some
$\ell$. Therefore, in principle we could get a factor at
denominator of the form $\tau_{w_i}-\nu_{w_j}$, screwing up the
polynomiality in $\btau$ and $\bnu$.
We have to make sure this does not happen. 
In facts the reasoning above can be reversed giving us some positive
information, namely it tells us that if for $j<i$,
$w_i<w_j$, then at denominator we do not have a factor of the form
$\tau_{w_i}-\nu_{w_j}$, because there is no way this could have been
arisen. 
Now let us come back to the case  $i<j$,  $w_i<w_j$: upon rotating by
$h= L+1-j$ steps we get a new configuration $\tilde \bw=R^h\bw$ with 
$\tilde w_1= w_j$ and $\tilde w_{L+1+i-j}= w_i$. Since $L+1+i-j>1$ we
conclude that 
$\psi_{R^h\bw}(\bz)$ (and henceforth $\psi_{\bw}(\bz)$) does not
have the factor $\tau_{w_i}-\nu_{w_j}$ at denominator.
\end{proof}
A corollary of the previous result is Theorem \ref{theo-minimal-poly}.

\subsection{Factorization and normalization}\label{sect:normal}

Let us now draw a few consequences of Theorem \ref{theo-product-general}.
%A quite direct consequence of eq.(\ref{eq-largest}) is the
%following 
\begin{theorem}
Let $1\leq h \leq  k \leq N$ and $\bw$ of the form
$$
\bw= \bw^{(L)}k(k-1)\dots(h+1)h \bw^{(R)}
$$
with $w^{(L)}_i>k$ and $w^{(r)}_i<h$, then
$\psi_{\bw^{(L)}k(k-1)\dots(h+1)h \bw^{(R)}}(\bz)$ has a factor of the
form
\be
\prod_{\beta=h}^k \mathfrak
S^{(\beta,N-\beta+1)}(\bz_{\widehat {N-\beta+1}},\btau,\bnu^c)
\ee
\end{theorem}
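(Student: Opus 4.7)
Plan: The constraints on the configuration force the descending block $k(k-1)\cdots h$ to occupy the fixed positions $N-k+1,\ldots,N-h+1$. Indeed, $\bw^{(L)}$ must consist of the $N-k$ particles with labels in $\{k+1,\ldots,N\}$ and $\bw^{(R)}$ of the $h-1$ particles with labels in $\{1,\ldots,h-1\}$, so $|\bw^{(L)}|=N-k$ and $|\bw^{(R)}|=h-1$ are pinned. In particular the fully descending configuration $\bw_0 := N(N-1)\cdots 1$ belongs to the family (with $\bw^{(L)}=N(N-1)\cdots(k+1)$ and $\bw^{(R)}=(h-1)\cdots 1$), and eq.(\ref{eq-largest}) gives
\[
\psi_{\bw_0}(\bz) = \prod_{\beta=1}^{N} \mathfrak{S}^{(\beta,N-\beta+1)}(\bz_{\widehat{N-\beta+1}},\btau,\bnu^c),
\]
in which the claimed factor
\[
F(\bz) := \prod_{\beta=h}^{k} \mathfrak{S}^{(\beta,N-\beta+1)}(\bz_{\widehat{N-\beta+1}},\btau,\bnu^c)
\]
appears manifestly as a sub-product.

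Next I would show that an arbitrary $\bw$ in the family can be obtained from $\bw_0$ by iterating the exchange relation (\ref{exch2}) using only adjacent transpositions internal to either the left region $[1,N-k]$ or the right region $[N-h+2,N]$. This is possible because the block sits at the same pinned positions in both $\bw$ and $\bw_0$; the two configurations differ only in the orderings within $\bw^{(L)}$ and $\bw^{(R)}$, each of which is realized by adjacent swaps inside its own side region. No swap ever crosses the block boundary: on the left side the outermost block entry is $k$ and its neighbor in $\bw^{(L)}$ satisfies $w^{(L)}_i>k$; on the right side the outermost block entry is $h$ and its neighbor in $\bw^{(R)}$ satisfies $w^{(R)}_j<h$; and the block is internally descending. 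Running the exchange equation backwards along such a path produces
\[
\psi_\bw(\bz) = \pi_{i_1}(\cdot,\cdot)\,\pi_{i_2}(\cdot,\cdot)\cdots\pi_{i_T}(\cdot,\cdot)\,\psi_{\bw_0}(\bz),
\]
with all indices $i_t\in [1,N-k-1]\cup[N-h+2,N-1]$.

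Finally, I would observe that $F(\bz)$ is symmetric under every $s_{i_t}$ in the above sequence. The variables omitted in $F$, namely $z_{N-k+1},\ldots,z_{N-h+1}$, are exactly the block positions; since each allowed $i_t$ lies strictly outside this range, both $z_{i_t}$ and $z_{i_t+1}$ appear among the $\bz$-arguments of every $\mathfrak{S}^{(\beta,N-\beta+1)}$ with $\beta\in[h,k]$, and those polynomials are symmetric in all their $\bz$-arguments by the contour-integral definition. Because $\pi_i(\alpha,\beta)$ is a divided-difference operator in $(z_i,z_{i+1})$ alone, any $s_i$-symmetric factor commutes through it, $\pi_i(\alpha,\beta)[F\,G] = F\,\pi_i(\alpha,\beta)G$. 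Sliding $F$ to the left through the whole chain of $\pi$'s yields $\psi_\bw(\bz) = F(\bz)\,Q(\bz)$, with $Q$ the image of $\psi_{\bw_0}/F$ under the same chain; that $Q$ is actually a polynomial in $\bz,\btau,\bnu$ follows from Theorem \ref{primality-general} applied to $\psi_\bw(\bz)$, together with the $\bz$-primality of $F$ in $\mathbb{Q}(\btau,\bnu)[\bz]$. The only step requiring genuine verification is the symmetry of $F$ under the allowed transpositions; the remainder is combinatorial bookkeeping exploiting the pinned position of the block, in the same spirit as Proposition \ref{prop-factors-desc}.
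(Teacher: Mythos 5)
Your proof is correct and follows essentially the same route as the paper: identify the claimed product as a sub-factor of the factorized $\psi_{N(N-1)\cdots 1}(\bz)$ from eq.~(\ref{eq-largest}), observe that $\bw$ is reached from the reverse configuration by $\pi_i(\alpha,\beta)$'s acting only inside the left and right regions, and use that the factor is symmetric under those transpositions so it commutes through the divided differences. Your extra care about the polynomiality of the cofactor and your index range $[N-h+2,N-1]$ for the right region are, if anything, slightly more precise than the paper's own wording.
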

\begin{proof}
The configuration $\bw= \bw^{(L)}k(k-1)\dots(h+1)h \bw^{(R)}$ is
obtained from the descending configurations $N(N-1)\dots 1$ through
transpositions involving only the first $N-k$ and the last $h-1$
positions. Therefore $\psi_{\bw^{(L)}k(k-1)\dots(h+1)h
  \bw^{(R)}}(\bz)$ is obtained from $\psi_{N(N-1)\dots 1}(\bz)$ by the 
action of operators $\pi_i(\alpha,\beta)$ with $1\leq i \leq N-k-1$ or
$N-h+1\leq i\leq N-1$. Any factor of $\psi_{N(N-1)\dots 1}(\bz)$
which is symmetric in the first $N-k$ and the last $h-1$ variables is
preserved by the action of such operators and is a factor of $\psi_{\bw^{(L)}k(k-1)\dots(h+1)h
  \bw^{(R)}}(\bz)$.  The product $\prod_{\beta=h}^k \mathfrak
S^{(\beta,N-\beta+1)}(\bz_{\widehat {N-\beta+1}},\btau,\bnu^c)$ is
such a factor.
\end{proof}
In the same spirit we have the following factorization result
\begin{theorem}\label{factorization-probas-theo}
Suppose that $\bw = {\bf x}{\bf y}$ with $x_i>y_j$, with $\ell({\bf x})=k$, then 
\be\label{factorization-probas}
\psi_{{\bf x}{\bf y}}(\bz)= \psi^{(1)}_{{\bf x}}(\bz)\psi^{(2)}_{{\bf
    y}}(\bz), 
\ee
where $\psi^{(1)}_{{\bf x}}(\bz)$ is symmetric in $z_{k+1},\dots,z_N$ and
depends only on ${\bf x}$,
while $\psi^{(2)}_{{\bf y}}(\bz)$ is symmetric in $z_1,\dots,z_k$ and
depends only on ${\bf y}$.
\end{theorem}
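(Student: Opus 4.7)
The plan is to proceed by reverse induction on the inversion number $\mathrm{inv}({\bf w})$, starting from the base case ${\bf w}^{(0)} = N(N-1)\dots 1$ for which Theorem \ref{theo-product-general} already gives an explicit product formula. The constraint $x_i > y_j$ forces ${\bf x}$ to be a permutation of $\{N-k+1,\dots,N\}$ and ${\bf y}$ a permutation of $\{1,\dots,N-k\}$, and since $x_k > y_1$ creates a descent at position $k$, every such ${\bf w}$ can be reached from ${\bf w}^{(0)}$ by a sequence of adjacent transpositions $s_i$ with $i \ne k$.

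For the base case I would use equation (\ref{eq-largest}) together with the observation that, from its integral representation, $\mathfrak S^{r,s}(\bz,\bt,\bv)$ is symmetric in all of its $\bz$-variables. Thus the $\beta$-th factor $\mathfrak S^{(\beta, N-\beta+1)}(\bz_{\widehat{N-\beta+1}},\btau,\bnu^c)$ is symmetric in every $z_j$ with $j \ne N-\beta+1$. Splitting the product at $\beta = N-k$, the sub-product over $\beta \ge N-k+1$ omits only indices in $\{1,\dots,k\}$ and is therefore symmetric in $z_{k+1},\dots,z_N$, while the sub-product over $\beta \le N-k$ omits only indices in $\{k+1,\dots,N\}$ and is symmetric in $z_1,\dots,z_k$. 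These are my candidates for $\psi^{(1)}_{{\bf x}^{(0)}}(\bz)$ and $\psi^{(2)}_{{\bf y}^{(0)}}(\bz)$, with ${\bf x}^{(0)} = N\cdots(N-k+1)$ and ${\bf y}^{(0)} = (N-k)\cdots 1$.

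For the inductive step, any ${\bf w} \ne {\bf w}^{(0)}$ has some ascent at a position $i \ne k$, and the exchange equation (\ref{exch2}) gives $\psi_{\bf w} = \pi_i(w_{i+1},w_i)\psi_{s_i {\bf w}}$ with $\mathrm{inv}(s_i {\bf w}) = \mathrm{inv}({\bf w})+1$, so the inductive hypothesis applies. If $i < k$, then $s_i {\bf w} = {\bf x}'{\bf y}$ with ${\bf x}' = s_i {\bf x}$, and the factor $\psi^{(2)}_{\bf y}$ is symmetric in $z_1,\dots,z_k$, in particular in $z_i,z_{i+1}$; the Leibniz identity $\pi_i(fg)=\pi_i(f)\cdot g$ valid for $s_i$-symmetric $g$ then lets $\pi_i$ pass through $\psi^{(2)}_{\bf y}$, and I define $\psi^{(1)}_{\bf x} := \pi_i(w_{i+1},w_i)\psi^{(1)}_{{\bf x}'}$. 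The symmetry in $z_{k+1},\dots,z_N$ is preserved because $\pi_i$ only affects $z_i,z_{i+1} \in \{z_1,\dots,z_k\}$. The case $i > k$ is treated symmetrically, updating $\psi^{(2)}_{\bf y}$ instead.

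The main obstacle I anticipate is showing that the recursive definition of $\psi^{(1)}_{\bf x}$ is independent of the particular reduced sequence of $s_i$'s chosen to connect ${\bf x}^{(0)}$ to ${\bf x}$ (and similarly for $\psi^{(2)}_{\bf y}$). This consistency should follow from the braid-type relations satisfied by the operators $\pi_i(\alpha,\beta)$ listed in Section \ref{sect:exch-analysis}, mirroring the Lascoux--Sch\"utzenberger construction of Schubert polynomials by repeated action of divided differences starting from the top permutation.
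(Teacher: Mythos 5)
Your proposal is correct and follows essentially the same route as the paper: split the known factorized expression (\ref{eq-largest}) for $\psi_{N(N-1)\cdots 1}(\bz)$ at $\beta=N-k$ into a factor symmetric in $z_{k+1},\dots,z_N$ and one symmetric in $z_1,\dots,z_k$, then observe that $\bw={\bf x}{\bf y}$ is reached by transpositions avoiding position $k$, so each $\pi_i(\alpha,\beta)$ acts on only one of the two factors. The ``main obstacle'' you anticipate is not actually needed: since $\psi_{\bw}(\bz)$ itself is path-independent and the second factor is unchanged along any path acting only on the first $k$ positions, the equality of the products forces the first factors from different reduced words to coincide, so no appeal to the braid relations is required.
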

\begin{proof}
Using eq.(\ref{eq-largest}), we write $\psi_{N(N-1)\dots 2 1}(\bz)$ 
as   
$$
\psi_{N(N-1)\dots 2 1}(\bz) = \psi^{(1)}_{N,k}(\bz)\psi^{(2)}_{N,k}(\bz)
$$ 
with
$$
\psi^{(1)}_{N,k}(\bz) =\prod_{\beta=N-k+1}^N \mathfrak
S^{(\beta,N-\beta+1)}(\bz_{\widehat {N-\beta+1}},\btau,-\bnu^c)  
$$
$$
\psi^{(2)}_{N,k}(\bz) = \prod_{\beta=1}^{N-k} \mathfrak
S^{(\beta,N-\beta+1)}(\bz_{\widehat {N-\beta+1}},\btau,-\bnu^c). 
$$ 
Notice that $\psi^{(1)}_{N,k}(\bz) $ is symmetric in the last $N-k$
variables, while $\psi^{(2)}_{N,k}(\bz)$ is symmetric in the first $k$
variables. 

Any configurations $\bw = {\bf x}{\bf y}$ with $x_i>y_j$, with
$\ell({\bf x})=k$ is obtained from $N(N-1)\dots 1$ by separate
transpositions of the first $k$ and last $N-k$ and therefore
$\psi_{{\bf x}{\bf y}}(\bz)$ is obtained from $ \psi_{N(N-1)\dots 2
  1}(\bz)$ by action of operators $\pi_i(\alpha,\beta)$ with 
$i\neq k$. Being $\psi^{(2)}_{N,k}(\bz)$ symmetric in the first $k$
variables, the operators $\pi_i(\alpha,\beta)$ with $i<k$ act only on
the first factor $\psi^{(1)}_{N,k}(\bz)$, viceversa    the operators
$\pi_i(\alpha,\beta)$ with $i>k$ act only on the second factor
$\psi^{(2)}_{N,k}(\bz)$ and therefore we get the factorization of eq.(\ref{eq-largest}).
\end{proof}
Theorem \ref{factorization-probas-theo} means in particular that for
the stationary measure, under the 
conditioning that the configuration $\bw$ splits as $\bw = {\bf
  x}{\bf y}$, with $x_i>y_j$, then ${\bf x}$ and ${\bf y}$ are
independent. This fact has already been proven for the case
$\bz =\infty$ and $\nu_\alpha=0$ in \cite{aas2013product} using the
multiline queues representation of $\psi_\bw$.

The next result concerns the \emph{partition function}
$\cZ_N(\bz,\btau,\bnu)$, that is the sum of all the components
\be
\cZ_N(\bz,\btau,\bnu) := \sum_{\ell (\bw)=N} \psi_\bw(\bz).
\ee
In order to write the formula for $\cZ_N(\bz,\btau,\bnu)$ we need a
little bit of further notation. For an ordered set of $N$ variables
$\bz=\{z_1,\dots,z_N\}$ and a permutation $\sigma \in \mathcal S_N$ we
write
$
\bz_\sigma:= \{z_{\sigma(1)},\dots, z_{\sigma(N)}\}. 
$
For any function $f(\bz)$ of $N$ variables, we call $\textrm{Sym}\left[f(\bz)
  \right]$ its symmetrized version, i.e.
$$
\textrm{Sym}\left[f(\bz) \right] = \sum_{\sigma\in\mathcal S_N}f(\bz_\sigma).
$$
We have the following
\begin{theorem}[Partition function]\label{part-funct-theo}
$$
\cZ_N(\bz,\btau,\bnu)= \frac{1}{\prod_{1\leq \alpha<\beta\leq
    N}(\tau_\alpha-\nu_\beta)^{\beta-\alpha}}\textrm{Sym}\left[ \frac{ 
    \psi_{1\cdots(N-1)N}(\bz_{w_0})\psi_{N(N-1)\cdots 1}(\bz) }{\Delta(\bz)} \right] 
$$
Where $w_0$ is the longest permutation in $\mathcal S_N$
$w_0=(N,N-1,\dots,2,1).$
\end{theorem}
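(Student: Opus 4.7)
My plan is to prove the theorem by induction on $N$, using the recursions of Theorem~\ref{recur-theor} as the principal reduction tool. Before setting up the induction I first observe that $\cZ_N(\bz,\btau,\bnu)$ is a symmetric polynomial in $\bz$ of degree at most $N-1$ in each variable: symmetry follows from the exchange equations~(\ref{exchange-eq}) by contracting both sides with the all-ones left vector, which is a left eigenvector of $\check R_i$ with eigenvalue $1$ by stochasticity, so $\cZ_N(\bz)=\cZ_N(s_i\bz)$; the degree bound is Corollary~\ref{coroll-degree}.

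The recursion on the left-hand side is a direct consequence of Theorem~\ref{recur-theor}. Specialising $z_j=\tau_{1}$ kills every component whose configuration does not carry a species-$1$ particle at position $j$ (Corollary~\ref{zero-max-min}); the surviving configurations are parametrised by $S_{N-1}$ via $\Upsilon_j^{1}$, and summing the identity of Theorem~\ref{recur-theor} over $\bw'\in S_{N-1}$ gives
\[
\cZ_N(\bz)|_{z_j=\tau_{1}}=\kappa^b_N(\bz_{\widehat j})\,\widetilde{\cZ}_{N-1}(\bz_{\widehat j}),
\]
where $\widetilde{\cZ}_{N-1}$ denotes the $(N-1)$-partition function at shifted parameters $\tau_i,\nu_i\mapsto\tau_{i+1},\nu_{i+1}$. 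An analogous recursion holds at $z_j=\nu_{N}$, and these two families, combined with the degree bound, determine $\cZ_N$ uniquely from $\cZ_{N-1}$.

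The core of the proof is to verify that the right-hand side $F_N(\bz)$ satisfies the same recursion. Formulas~(\ref{normalization}) and~(\ref{eq-largest}) show that both $\psi_{12\cdots N}(\bz_{w_0})$ and $\psi_{N(N-1)\cdots 1}(\bz)$ carry the factor $\prod_{k\neq N}(z_k-\tau_{1})$, so in the symmetrisation $\sum_\sigma$ at $z_j=\tau_{1}$ only the $(N-1)!$ permutations $\sigma$ with $\sigma(N)=j$ survive, and these restrict naturally to $S_{N-1}$ acting on the variables $\bz_{\widehat j}$. For such $\sigma$, the factor $\psi_{N(N-1)\cdots 1}(\bz_\sigma)|_{z_j=\tau_{1}}$ is simplified using~(\ref{eq-largest}) together with the reduction~(\ref{mathfrakS1}) of the polynomials $\mathfrak S^{r,s}$, while the Vandermonde $\Delta(\bz_\sigma)|_{z_j=\tau_{1}}$ factors as a product of $(\tau_{1}-z_m)$ terms with $m\neq j$ (which cancel against the corresponding factors of $\psi_{12\cdots N}(\bz_{\sigma w_0})|_{z_j=\tau_{1}}$) times the Vandermonde in $\bz_{\widehat j}$. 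Reassembling these surviving pieces produces exactly $\kappa^b_N(\bz_{\widehat j})$ times the right-hand side for the $(N-1)$-system, closing the induction; the base case $N=1$ is immediate.

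The hard part will be the bookkeeping of the many polynomial prefactors in this last step: namely matching the global normalisation $\prod_{\alpha<\beta}(\tau_\alpha-\nu_\beta)^{\beta-\alpha}$ against the combination of $\phi_N(\btau,\bnu)$ coming from $\psi_{12\cdots N}(\bz_{w_0})$, the prefactors of the form $\prod(t_\alpha-v_\beta)$ contained in each $\mathfrak S^{r,s}$ factor in~(\ref{eq-largest}), and the factor $\kappa^b_N(\bz_{\widehat j})$ produced by Theorem~\ref{recur-theor}, while simultaneously tracking the shift $\tau_i,\nu_i\mapsto\tau_{i+1},\nu_{i+1}$ and the reversal convention $\bnu^c$ that emerges when restricting from $N$ to $N-1$ variables.
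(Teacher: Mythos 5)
Your preliminary observations are sound: the symmetry of $\cZ_N(\bz)$ in $\bz$ does follow from contracting the exchange equations with the all-ones left eigenvector of the stochastic matrices $\check R_i$, the degree bound in each $z_i$ is Corollary~\ref{coroll-degree}, and the recursion $\cZ_N(\bz)|_{z_j=\tau_{1}}=\kappa^b_N(\bz_{\widehat j})\,\widetilde{\cZ}_{N-1}(\bz_{\widehat j})$ (with its companion at $z_j=\nu_{N}$) is a correct consequence of Theorem~\ref{recur-theor}, as is your analysis of which permutations survive the specialization of the right-hand side.

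However, the lynchpin of your induction --- the claim that the two specializations $z_j=\tau_{1}$ and $z_j=\nu_{N}$, together with the bound $\deg_{z_j}\leq N-1$, determine $\cZ_N$ uniquely from $\cZ_{N-1}$ --- fails for $N\geq 3$. If two symmetric candidates of degree at most $N-1$ in each variable agree at both specializations for every $j$, their difference is divisible by $\prod_{i=1}^N(z_i-\tau_{1})(z_i-\nu_{N})$, which already has degree $2$ in each variable; this leaves room for an arbitrary nonzero symmetric cofactor of degree $N-3$ in each variable as soon as $N\geq 3$. (This is precisely why the paper can use the two recursions to pin down the factors $\widetilde\psi_{\bw^{(\beta,N)}}$, which have degree $1$ in each variable, but cannot argue the same way for the partition function.) Without interpolation data at $N$ points per variable, or some further constraint you have not supplied, verifying that the right-hand side satisfies the same two recursions proves nothing. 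The paper's proof is entirely different: it unfolds the divided-difference operators $\pi_i(\alpha,\beta)$ to expand every component as $\psi_\bw(\bz)=\sum_{\sigma}k_{\bw,\sigma}(\bz)\,\psi_{N(N-1)\cdots 1}(\bz_\sigma)$, observes that the coefficient of the longest permutation $w_0$ is nonzero only for the single configuration $12\cdots N$, where it is computed explicitly as $(-1)^{N(N-1)/2}\psi_{1\cdots N}(\bz)$ divided by $\prod_{\alpha<\beta}(\tau_\alpha-\nu_\beta)^{\beta-\alpha}\Delta(\bz)$, and then uses the symmetry of $\cZ_N$ in $\bz$ to recover all the other coefficients $K_\sigma$ from this single one. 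You would need either to adopt that argument or to find a genuinely new way to close your induction.
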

\begin{proof}
As we noticed before, any component can be obtained from $\psi_{N(N-1)\cdots 1}(\bz)$
by sequential action of operators $\pi_i(\alpha,\beta)$, with $1\leq
i\leq N-1$. By expanding the divided difference operator we get
\be\label{perm_exp-groth}
\psi_{\bw}(\bz) =  \sum_{\sigma \in \mathcal
    S_N} k_{\bw,\sigma}(\bz)~ \psi_{N(N-1)\cdots 1}(\bz_\sigma)
\ee
for some coefficients $ k_{\bw,\sigma}(\bz)$. It is not difficult to
see that if $\ell(\sigma)$ is larger than the length of the shortest
permutation mapping $N(N-1)\cdots 1$ to $\bw$ then 
$
k_{\bw,\sigma}(\bz)=0.
$
In particular if $\sigma$ is the longest permutation $w_0$ in
$\mathcal S_N$ then 
$ k_{\bw,w_0}(\bz)\neq 0 $ only for $\bw=12\cdots N$.
In such a case, by explicitly expanding the divided difference operators,
we easily find
$$
k_{12\cdots N,w_0}(\bz) =
\frac{(-1)^{\frac{N(N-1)}{2}}\psi_{1\cdots(N-1)N}(\bz)}{\left(\prod_{1\leq \alpha<\beta\leq
    N}(\tau_\alpha-\nu_\beta)^{\beta-\alpha}  \right)\Delta(\bz)}.
$$
By using eq.(\ref{perm_exp-groth}) we can write $\cZ(\bz,\btau,\bnu)$ as  
$$
\cZ(\bz,\btau,\bnu) = \sum_{\sigma \in \mathcal
    S_N} K_\sigma(\bz)~ \psi_{N(N-1)\cdots 1}(\bz_\sigma).  
$$
where $K_\sigma(\bz)= \sum_\bw k_{\bw,\sigma}(\bz)$. 
In the case  $\sigma=w_0$,
thanks to the discussion above, the sum giving  $K_{w_0}(\bz)$ reduces
to a single term
$$
K_{w_0}(\bz) = \sum_\bw k_{\bw,w_0}(\bz) = k_{12\cdots N,w_0}(\bz).
$$
On the other hand we know that $\cZ(\bz,\btau,\bnu)$ is symmetric in
$\bz$, therefore we have 
$$
K_\sigma(\bz)=  K_e(\bz_\sigma),
$$
which concludes the proof.
\end{proof}
Since both $\psi_{1\cdots(N-1)N}(\bz_{w_0})$ and $\psi_{N(N-1)\cdots
  1}(\bz)$ have factorized expression, the formula obtained in Theorem
\ref{part-funct-theo} can be recast in a determinantal form
\be
\cZ(\bz,\btau,\bnu)=\frac{1}{\Delta(\bz)\prod_{1\leq \alpha<\beta\leq
    N}(\tau_\alpha-\nu_\beta)}~ \det_{1\leq \alpha,\beta
\leq N} M^{(N)}_{\alpha,\beta}
\ee
where 
$$
M^{(N)}_{\alpha,\beta} = \mathfrak
S^{(\beta,N-\beta+1)}(\bz_{\widehat \alpha},\btau,\bnu^c) \prod_{1\leq
  \gamma < N-\beta+1}(z_\alpha-\tau_\gamma)\prod_{N-\beta+1<\gamma\leq N}(z_\alpha-\nu_\gamma).
$$

\appendix

\section{Schubert polynomials}\label{app-schubert}

The Double Schubert polynomials play an important role in the geometry of flag
varieties, where they represent equivariant cohomology classes, and in
the combinatorics of the Bruhat order of the symmetric group (see
\cite{lascoux1985schubert, macdonald1991notes, manivel2001symmetric}).  
In this Appendix we recall their definition and provide an explicit
formula for a certain class of permutations.

Let $\bt=\{t_1,t_2,\dots \}$ and
$\bv=\{v_1, v_2,\dots\}$ be two infinite sets of variables.
In this section we use the divided difference operators in the
variables $\bt$, so here $s_i$ is the transposition of the variables
$t_i \leftrightarrow t_{i+1}$
$$
\partial_i = \frac{1-s_i}{t_i-t_{i+1}}.
$$ 
These operators satisfy the following relations
\be
\begin{split}
\partial_i^2&=0\\
\partial_i\partial_j&=\partial_j\partial_i \hspace{1cm}|i-j|>1\\
\partial_i\partial_{i+1}\partial_i&=\partial_{i+1}\partial_i\partial_{i+1}
\end{split}
\ee
Let $\mathcal S^{\infty}$ be the infinite symmetric group, the algebra
generated by $\partial_i$ for $i\geq 1$ has a basis indexed by 
permutations $\sigma \in \mathcal S^{\infty}$. Let $s_{i_\ell}\cdots
s_{i_1}$ be a reduced decomposition of $\sigma$, then  
$$
\partial_\sigma:= \partial_{i_\ell}\cdots\partial_{i_1}
$$
is well defined, i.e. it is the same for different reduced
decomposition of the same permutation. 

\begin{definition}[Double Schubert polynomials]
The double Schubert polynomials are a family of polynomials $\mathfrak
S_{\sigma}(\bt,\bv)$ in the variables $\bt,\bv$ indexed by 
permutations $\sigma \in \mathcal S^{\infty}$.
For $\sigma \in \mathcal S^{N}\subset \mathcal S^{\infty} $,
$\mathfrak S_{\sigma}(\bt,\bv)$ is defined by
\begin{equation}\label{def-dsp}
\mathfrak S_{\sigma}(\bt,\bv) = \partial_{\sigma^{-1} w_{N}}\prod_{i+j\leq N}(t_i-v_j)
\end{equation}
where $w_{N}$ is the longest permutation in $\mathcal
S^{N}\subset \mathcal S^{\infty}$.
\end{definition}
Now let $\sigma(h,N)\in \mathcal S^{N}$ defined by
$$\sigma(h,N)=(1,h+1,h+2,\dots N,2,3,\dots,h).$$
We show that 
\be\label{sigma(h,N)}
\mathfrak S_{\sigma(h,N)}(\bt,\bv) = \oint_\bt
\frac{dw}{2\pi i} \frac{\prod_{\substack{1\leq \alpha
    \leq N-h+1\\1\leq \beta \leq h}}(t_\alpha-v_\beta)}{\prod_{ 1\leq \alpha
    \leq N-h+1} (w-t_\alpha)\prod_{1\leq \beta \leq h}(w-v_\beta)}
\ee
First notice that for $\tilde \sigma(h,N)=(h+1,h+2,\dots
  N,1,2,3,\dots,h)$ we have  
$$
\mathfrak S_{\sigma'(h,N)}(\bt,\bv) =\prod_{\substack{1\leq \alpha
    \leq N-h\\1\leq \beta \leq h}}(t_\alpha-v_\beta).
$$
Then, using the definition eq.(\ref{def-dsp}),  we can write
$$
\mathfrak S_{\sigma(h,N)}(\bt,\bv)
= \partial_1\cdots \partial_{N-h-1}\partial_{N-h}\mathfrak
S_{\sigma'(h,N)}(\bt,\bv) 
$$
and eq.(\ref{sigma(h,N)}) follows from this general Lemma 
\begin{lemma}\label{multi-exch-lemma0}
Let $K(t_1,t_2,\dots,t_k;t_{k+1})$ be a symmetric function in the
variables $\{t_1,t_2,\dots,t_{k}\}$, then the following identity holds
\begin{multline}
  \partial_1\cdots \partial_{k-1}\partial_k K(t_1,t_2,\dots,t_k;t_{k+1})
  =\\ \sum_{i=1}^{k+1} \frac{K(t_1,\dots, \widehat {t_i},\dots,
    t_{k+1};t_i)}{\prod_{1\leq j\neq i \leq k+1}(t_i-t_j)} 
\end{multline} 
\end{lemma}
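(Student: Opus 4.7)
The plan is to prove the formula by induction on $k$. For $k=1$ the only operator on the left is $\partial_1$, and both sides can be compared directly from the definition $\partial_1 = (1-s_1)/(t_1-t_2)$ applied to $K(t_1;t_2)$.

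For the inductive step, I would decompose
$$
\partial_1\cdots\partial_{k-1}\partial_k\, K(t_1,\dots,t_k;t_{k+1}) \;=\; \partial_1\cdots\partial_{k-1}\bigl(\partial_k K\bigr),
$$
and observe that $\partial_k K$ is still symmetric in $t_1,\dots,t_{k-1}$, because $K$ is symmetric in its first $k$ arguments. Hence the inductive hypothesis at level $k-1$ applies to $L(t_1,\dots,t_{k-1};t_k):=\partial_k K(t_1,\dots,t_k;t_{k+1})$, with $t_{k+1}$ treated as an inert parameter. This yields a sum over $i=1,\dots,k$ of terms $L(\{t_1,\dots,t_k\}\setminus\{t_i\};t_i)/\prod_{j\ne i,\,j\le k}(t_i-t_j)$.

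Next I would substitute the explicit form $\partial_k K = \bigl(K(t_1,\dots,t_k;t_{k+1}) - K(t_1,\dots,t_{k-1},t_{k+1};t_k)\bigr)/(t_k-t_{k+1})$ into each term, and use the symmetry of $K$ in its first $k$ arguments to rewrite $K(\{t_1,\dots,t_k\}\setminus\{t_i\},t_i;t_{k+1})$ as $K(t_1,\dots,t_k;t_{k+1})$. The result then splits into two groups: (a) terms proportional to $K(t_1,\dots,\widehat{t_i},\dots,t_{k+1};t_i)$ for $1\le i\le k$, whose denominators become $\prod_{1\le j\ne i\le k+1}(t_i-t_j)$, matching the first $k$ summands of the target RHS on the nose; and (b) terms proportional to $K(t_1,\dots,t_k;t_{k+1})$, whose coefficients sum to
$$
\sum_{i=1}^{k}\frac{1}{\prod_{1\le j\ne i\le k+1}(t_i-t_j)}.
$$

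The final step is the classical partial fractions identity $\sum_{i=1}^{k+1}\prod_{j\ne i}(t_i-t_j)^{-1}=0$, obtained as the residue sum at infinity of $1/\prod_{j=1}^{k+1}(w-t_j)$. This identity converts the group (b) coefficient into a single term, namely the $i=k+1$ summand of the target RHS, completing the induction. The main (and only mildly delicate) step is this last bookkeeping: carefully grouping the pole contributions and invoking the residue identity to recover the missing $i=k+1$ term from the collective contribution of the ``diagonal'' pieces; everything else is symmetry and direct substitution.
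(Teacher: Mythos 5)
Your route --- induction on $k$ together with the residue identity $\sum_{i=1}^{k+1}\prod_{j\neq i}(t_i-t_j)^{-1}=0$ --- is genuinely different from the paper's, which instead observes that the left-hand side is annihilated by every $\partial_j$, hence is symmetric in all of $t_1,\dots,t_{k+1}$, and then extracts the single coefficient of $K(t_2,\dots,t_{k+1};t_1)$ from the expansion of the operator product. Your strategy is sound and arguably more self-contained. However, the crucial bookkeeping claim --- that group (a) matches the first $k$ summands of the target ``on the nose'' --- is false. Writing out your inductive step, the $i$-th term delivered by the hypothesis is
\[
\frac{1}{\prod_{1\le j\neq i\le k}(t_i-t_j)}\cdot\frac{K(t_1,\dots,t_k;t_{k+1})-K(t_1,\dots,\widehat{t_i},\dots,t_{k+1};t_i)}{t_i-t_{k+1}},
\]
so group (a) carries a minus sign relative to the target, and group (b), after the residue identity, equals \emph{minus} the $i=k+1$ summand. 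Each inductive step therefore flips the overall sign, and your argument, carried out honestly, proves
\[
\partial_1\cdots\partial_k\,K=(-1)^k\sum_{i=1}^{k+1}\frac{K(t_1,\dots,\widehat{t_i},\dots,t_{k+1};t_i)}{\prod_{1\le j\neq i\le k+1}(t_i-t_j)},
\]
not the identity as printed.

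The discrepancy is not yours alone: the base case already fails as stated (take $K(t_1;t_2)=t_2$; then $\partial_1K=-1$ while the displayed right-hand side equals $+1$), so the lemma in the paper is itself off by $(-1)^k$ --- the paper's own proof drops the same factor when it selects the $-s_i$ piece from each $\partial_i=(1-s_i)/(t_i-t_{i+1})$. In short, your method with the signs tracked correctly yields the corrected identity (equivalently, the stated one with denominators $\prod_{j\neq i}(t_j-t_i)$); but as written, the assertion that the terms match on the nose, and that the $k=1$ case ``can be compared directly,'' papers over exactly the point where the sign appears, so the proof does not close for the statement as given.
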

\begin{proof}
If we act with $\partial_j$ with $1\leq j<k$, on the l.h.s. and use
the braiding 
relations of the operators $\partial_i$ we get zero, therefore the
l.h.s. is symmetric in the variables $t_1,\dots,t_k,t_{k+1}$. On the
other hand, by developing the action of the divided difference operators,  we know
that the l.h.s. can be written as  
$$
\sum_{i=1}^{k+1} K(t_1,\dots, \widehat {t_i},\dots, t_{k+1};t_i) G_i({\bf t}).
$$
Therefore it is enough to compute one of the coefficients 
$G_i({\bf  t})$. 
The term $G_{1}({\bf x})K(t_2,\dots,t_k,t_{k+1};t_1)$ in the previous
equation can be obtained in a unique way from the expansion of the
divided difference operators, namely it is given by 
%Let us concentrate on $G_{1}({\bf x})$  
%which can be obtained in a unique way
$$\frac{1}{x_{1}-x_{2}}s_1\cdots\frac{1}{t_{k-1}-t_k}s_{k-1}
\frac{1}{t_k-t_{k+1}}s_kK(t_1,\dots,t_k;t_{k+1})$$ 
and hence we have
$
G_{1}({\bf t})= \frac{1}{\prod_{j=2}^{k+1}(t_{1}-t_j)}
$.
\end{proof}

\section{Technical results}\label{app-proofs}

In this Appendix we are going to present some technical results needed
for the proof of Theorem \ref{theo-product-general}. 
Let  $\bz=\{z_0,z_1,\dots,z_n\}$ and $m\leq n$, then
define the following functions 
\be
G^{(F,m)}(\bz):= \oint_{\bf t} \frac{dw}{2\pi
  i}\frac{\prod_{j=1}^n(z_j-w)F(w)}{\prod_{j=1}^m(w-\tau_j)}
\ee
\begin{proposition}\label{prop-sum-1}
The following identity holds
\begin{equation}
\sum_{j=0}^n \frac{G^{(F,m)}(\bz_{\widehat
    j})\prod_{\alpha=1}^m(z_j-\tau_\alpha)}{\prod_{0\leq i\neq j \leq n}(z_j-z_i)} =0.
\end{equation}
\end{proposition}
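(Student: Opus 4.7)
The plan is to exchange the finite sum with the contour integral and recognize the resulting rational function of $w$ as a partial-fraction expansion whose reconstitution cancels the pole structure in $w$. Explicitly, writing $\prod_{i\neq j}(z_i-w)=\prod_{i=0}^n(z_i-w)/(z_j-w)$ and pulling out of the $j$-sum everything independent of $j$, one obtains
\[
\sum_{j=0}^n \frac{G^{(F,m)}(\bz_{\widehat j})\prod_{\alpha=1}^m(z_j-\tau_\alpha)}{\prod_{0\leq i\neq j\leq n}(z_j-z_i)}
= \oint_{\bf t}\!\frac{dw}{2\pi i}\,
\frac{F(w)\prod_{i=0}^n(z_i-w)}{\prod_{\alpha=1}^m(w-\tau_\alpha)}
\sum_{j=0}^n\frac{\prod_{\alpha=1}^m(z_j-\tau_\alpha)}{(z_j-w)\prod_{i\neq j}(z_j-z_i)}.
\]

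The key step is the identification of the inner $j$-sum. Set $P(w):=\prod_{\alpha=1}^m(w-\tau_\alpha)$, a polynomial of degree $m\leq n$. Because $\deg P\leq n$, the rational function $P(w)/\prod_{j=0}^n(w-z_j)$ is proper, and its partial-fraction decomposition reads
\[
\frac{P(w)}{\prod_{j=0}^n(w-z_j)}=\sum_{j=0}^n\frac{P(z_j)}{(w-z_j)\prod_{i\neq j}(z_j-z_i)}.
\]
Using $z_j-w=-(w-z_j)$ to flip signs, this gives
\[
\sum_{j=0}^n\frac{\prod_{\alpha=1}^m(z_j-\tau_\alpha)}{(z_j-w)\prod_{i\neq j}(z_j-z_i)}
= -\frac{\prod_{\alpha=1}^m(w-\tau_\alpha)}{\prod_{j=0}^n(w-z_j)}.
\]
This is the main computational obstacle, but it is classical Lagrange/partial-fraction interpolation; the hypothesis $m\leq n$ is exactly what makes it applicable.

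Substituting back, the factor $\prod_\alpha(w-\tau_\alpha)$ cancels against the denominator in the integrand, and one is left with
\[
-\oint_{\bf t}\!\frac{dw}{2\pi i}\,F(w)\,\frac{\prod_{i=0}^n(z_i-w)}{\prod_{j=0}^n(w-z_j)}
= (-1)^n\oint_{\bf t}\!\frac{dw}{2\pi i}\,F(w),
\]
since each of the $n+1$ pointwise ratios $(z_i-w)/(w-z_i)$ equals $-1$. The contour $\oint_{\bf t}$ encircles only the poles at the $\tau_\alpha$, around which $F$ is holomorphic (that is the implicit assumption making $G^{(F,m)}$ well defined as a residue-picking integral). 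Hence the remaining integral vanishes, and the stated identity follows.
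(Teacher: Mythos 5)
Your proof is correct. The partial-fraction identity you invoke, $\sum_{j=0}^n \frac{P(z_j)}{(w-z_j)\prod_{i\neq j}(z_j-z_i)} = \frac{P(w)}{\prod_{j=0}^n(w-z_j)}$ for $\deg P\le n$ and distinct $z_j$, is exactly where the hypothesis $m\le n$ enters; the sign bookkeeping ($-(-1)^{n+1}=(-1)^n$) checks out; and your final step correctly isolates the one assumption the statement genuinely needs, namely that $F$ is holomorphic at the points $\tau_\alpha$ (without it the claim is false: take $F(w)=(w-\tau_1)^{-1}$ with $m=n=1$ and the left-hand side evaluates to $-1$). The paper's proof is a close cousin organized differently: rather than interchanging the $j$-sum with the $w$-integral, it introduces an auxiliary variable $y$, observes that $\tilde G(y):=G^{(F/(y-w),m)}(\bz)\prod_{\alpha=1}^m(y-\tau_\alpha)$ is a polynomial in $y$ of degree strictly less than $m\le n$, and concludes from the vanishing of $\sum_j \tilde G(z_j)/\prod_{i\neq j}(z_j-z_i)$ (the residue-at-infinity form of Lagrange interpolation), the summand being recovered via the specialization $G^{(F/(y-w),m)}(\bz)|_{y=z_j}=G^{(F,m)}(\bz_{\widehat j})$. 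Both arguments are residue computations hinging on the same degree count; yours makes the holomorphy of $F$ at the $\tau_\alpha$ visible at the last step, whereas in the paper it is hidden inside the claim that $\tilde G$ is a polynomial (a pole of $F$ at some $\tau_\alpha$ would give the $w$-integrand a higher-order pole there and $\tilde G$ would then acquire a pole in $y$). Your route is arguably more self-contained, since it avoids introducing the deformed kernel $F/(y-w)$; the paper's route keeps everything as a statement about a single polynomial in one auxiliary variable.
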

\begin{proof}
Consider the function $\tilde G(y):=G^{\left ( \frac{F}{y-w}
  ,m \right)}(\bz)\prod_{\alpha=1}^m(y-\tau_\alpha)$. It is simple to
see that $\tilde G(y)$ is polynomial in $y$ of degree strictly less than $m$. 
Therefore the contour integral
$$
\sum_{j=0}^n\frac{\tilde G(z_j)}{\prod_{0\leq i\neq j\leq n}(z_j-z_i)}= \oint_{\bz}\frac{dy}{2\pi i}
\frac{\tilde G(y)}{\prod_{j=0}^n(y-z_j)} =0.
$$
Then in order to conclude it is sufficient to notice that 
$$ G^{\left ( \frac{F}{y-w}
  ,m\right)}(\bz){\Large |}_{y=z_j}= G^{(F,m)}(\bz_{\widehat j}).$$
\end{proof}
\begin{proposition}\label{general-exch-prop}
Let $\bz= \{z_1,\dots ,z_{h+1}\}$, take $K(z_1;\bz_{\widehat 1})$ to
be a symmetric function in the variables 
$\bz_{\widehat 1}$ then  the following identity holds
\begin{equation}\label{eq-general-exch-prop}
\begin{split}
%\pi_{\sigma}(uh(h-1)\dots1) 
\pi_h(u;1)\pi_{h-1}(u;2)\cdots \pi_1(u;h)\left(K(z_1;\bz_{\widehat 1}) \prod_{j=1}^{h}
G^{(F_j,j)}(\bz_{\widehat {h-j+2}})\right)
%H(\bz) 
=\\ \sum_{j=1}^{h+1}\frac{K(z_j;\bz_{\widehat
    j})\prod_{i=1}^h(z_j-\tau_i)}{\prod_{1\leq i\neq j\leq
    h+1}(z_j-z_i)} 
%\bar K(\bz)
\prod_{j=1}^{h}\left(\frac{(z_j-\nu_u)}{(\tau_j-\nu_u)}G^{(F_j,j)}(\bz_{\widehat{h-j+1}})
\right). 
\end{split}
\end{equation}
\end{proposition}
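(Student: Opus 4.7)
The plan is to prove this identity by induction on $h$, with Proposition \ref{prop-sum-1} as the key algebraic tool for combining residue terms. For the base case $h=1$, the single factor $G^{(F_1,1)}(z_1)$ reduces by residue at $w=\tau_1$ to $(z_1-\tau_1)F_1(\tau_1)$. Unfolding the definition of $\pi_1(u;1)$ acting on $K(z_1;z_2)\cdot G^{(F_1,1)}(z_1)$, and recognising $G^{(F_1,1)}(z_2)=(z_2-\tau_1)F_1(\tau_1)$, one matches both sides directly.

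For the inductive step, I would peel off the outermost operator and write $\pi_h(u;1)\cdots\pi_1(u;h) = \frac{(z_{h+1}-\tau_1)(z_h-\nu_u)}{\tau_1-\nu_u}\,\partial_h\cdot\Lambda_h$ where $\Lambda_h:=\pi_{h-1}(u;2)\cdots\pi_1(u;h)$. The inner composition $\Lambda_h$ touches only the variables $z_1,\ldots,z_h$, so $z_{h+1}$ acts as a spectator. To apply the inductive hypothesis at level $h-1$, I would rewrite the input as $K(z_1;\bz_{\widehat 1})\cdot G^{(F_1,1)}(\bz_{\widehat{h+1}})\cdot\prod_{j=2}^{h} G^{(F_j,j)}(\bz_{\widehat{h-j+2}})$, relabel $\tau_i\mapsto\tau_{i+1}$ and $F_i\mapsto F_{i+1}$, and absorb the $z_{h+1}$-dependent pieces into an enlarged symmetric kernel. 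The inductive hypothesis then expresses $\Lambda_h$ acting on this factorisation as a Lagrange-type sum of $h$ terms in $z_1,\ldots,z_h$. Finally, acting with $\frac{(z_{h+1}-\tau_1)(z_h-\nu_u)}{\tau_1-\nu_u}\partial_h$ on this sum and invoking Proposition \ref{prop-sum-1} to combine the resulting contributions yields the desired Lagrange sum over $j=1,\ldots,h+1$, together with the shifted $G^{(F_j,j)}(\bz_{\widehat{h-j+1}})$ factors on the right-hand side.

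The main obstacle will be the bookkeeping of the index shift: one must track how $\partial_h$ converts each omission $\bz_{\widehat{h-j+2}}$ into $\bz_{\widehat{h-j+1}}$ — in particular, how $G^{(F_1,1)}(\bz_{\widehat{h+1}})$ migrates to $G^{(F_1,1)}(\bz_{\widehat h})$ — and verify that the multiplicative prefactors $\frac{z_j-\nu_u}{\tau_j-\nu_u}$ emerge with the correct normalisation from the cascade of $M_i(u;j)=\frac{(z_{i+1}-\tau_j)(z_i-\nu_u)}{\tau_j-\nu_u}$ factors.

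A cleaner alternative, which I would try in parallel, is to compare both sides as rational functions in one distinguished variable: the right-hand side is manifestly the Lagrange interpolation of $K(z;\bz_{\widehat 1})\prod_i(z-\tau_i)$ (times the $\bz$-dependent auxiliary factor) at the nodes $z_1,\ldots,z_{h+1}$, and the left-hand side is polynomial of the same controlled degree in $z_1$. It would then suffice to check agreement at the specialisations $z_1=z_j$ for $j\geq 2$ (where the outermost divided differences collapse and Lemma \ref{multi-exch-lemma0} applies) together with one further specialisation such as $z_1=\tau_1$, using the residue structure of the $G^{(F_j,j)}$ to evaluate both sides.
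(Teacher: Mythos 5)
Your overall strategy (induction on $h$ with Proposition \ref{prop-sum-1} as the engine, base case $h=1$ by direct computation) is the same as the paper's, but the specific decomposition you choose creates a genuine obstruction. You peel off the \emph{outermost} operator $\pi_h(u;1)$ and try to apply the inductive hypothesis to $\Lambda_h=\pi_{h-1}(u;2)\cdots\pi_1(u;h)$. Matching $\Lambda_h$ to the level-$(h-1)$ statement forces the species relabelling $\tau^{new}_m=\tau_{m+1}$, so the template factors $G^{(F'_m,m)}$ at level $h-1$ are contour integrals with poles at $\tau_2,\dots,\tau_{m+1}$ only. But the actual input factors you must feed in are $G^{(F_{m+1},m+1)}(\bz_{\widehat{h-m+1}})$, whose contours encircle $\tau_1,\dots,\tau_{m+1}$: each has one pole too many, and it sits at $\tau_1$, which lies \emph{outside} the relabelled alphabet. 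You cannot absorb $\frac{1}{w-\tau_1}$ into the free function $F'_m$, because the level-$(h-1)$ contour does not encircle $\tau_1$ and the residue there would be lost. So the reduction simply does not land in an instance of the inductive hypothesis; this is not bookkeeping but a structural mismatch. (Absorbing the spectator variable $z_{h+1}$ into the $F$'s, by contrast, is harmless since $(z_{h+1}-w)$ is entire.) The paper avoids this by peeling off the \emph{innermost} operator $\pi_1(u;h)$ instead: the leftover chain $\pi_h(u;1)\cdots\pi_2(u;h-1)$ carries species labels $1,\dots,h-1$ with no relabelling of $\btau$, the leftover factors $G^{(F_j,j)}$ for $j\le h-1$ have exactly the right pole count, and the one factor that does not fit, $G^{(F_h,h)}(\bz_{\widehat 2})$, is handled together with the kernel by explicitly computing $\tilde K=\pi_1(u;h)\bigl(K\,G^{(F_h,h)}(\bz_{\widehat 2})\bigr)$, splitting it into two pieces, and evaluating the resulting sum with Proposition \ref{prop-sum-1}.

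Your fallback via Lagrange interpolation in a distinguished variable is also under-justified as stated: the right-hand side is indeed a Lagrange-type sum, but you would need to establish a degree bound on the left-hand side in $z_1$ (which requires understanding cancellations across the cascade of divided differences) before $h+1$ specialisations, or $h$ plus one extra point, can pin it down; and the claim that the outer divided differences ``collapse'' at $z_1=z_j$ is not immediate since $\pi_1(u;h)$ acts on $(z_1,z_2)$ while $j$ can be far from $2$. Either repair your induction by peeling from the inner end as the paper does, or strengthen the inductive statement to allow $G$-factors with an extra prescribed pole.
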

\begin{proof}
We prove the statement by induction on $h$. For $h=1$
the statement is immediate to check\footnote{Notice that $$G^{(F,1)}(\bz)=
F(\tau_1)\prod_j(z_j-\tau_1).$$}.

Now assume $h>1$.
We start by applying $\pi_1(u, h)$ on $H(\bz)$. Since
the product of the first $h-1$ terms,  $\prod_{j=1}^{h-1}
G^{(F_j,j)}(\bz_{\widehat {h-j+2}})$, is 
symmetric in the variables $z_1,z_2$ it remains as a
factors, hence it is sufficient to look at
\begin{equation}\label{interm-1}
\tilde K(z_{1},z_2;\bz_{\widehat
    {1,2}}) = \pi_1(u, h)\left(K(z_1;\bz_{\widehat 1})G^{(F_{h},h)}(\bz_{\widehat
  {2}};{\bf n}_{ h})\right)
%=\\
\end{equation}
Since the function $\tilde K(z_1,z_2;\bz_{\widehat {1,2}})$ is
symmetric in the variables $z_{3},\dots, z_{h+1}$, when we proceed
with the action of the remaining divided difference operators
$\pi_h(u;1)\cdots \pi_2(u;h-1)$,  we are in the case $h-1$ and by
induction we get
\begin{equation}\label{interm-2}%$$
\sum_{j=2}^{h+1}
\frac{\tilde K(z_1,z_j;\bz_{\widehat{1,j}})\prod_{i=1}^{ h-1}(z_j-\tau_i)}{\prod_{2
    \leq i\neq j\leq h+1}(z_j-z_i)}
\prod_{j=1}^{h-1}\frac{(z_j-\nu_u)}{(\tau_j-\nu_u)}
G^{(F_j,j)}(\bz_{\widehat{h-j+1}}).
\end{equation}
It remains to compute the sum in the previous equation.
For this we split $\tilde K(z_1,z_j;\bz_{\widehat {1,j}})$ in two parts
$$
%\sum_{i=1,j}(-1)^{i+1}
\frac{(z_j-\tau_{h})(z_1-\nu_u) }{(\tau_{ h}-\nu_u)(z_1-z_j)} 
(K(z_1;\bz_{\widehat 1})
  G^{(F_h,h)}(\bz_{\widehat
    {j}})-K(z_j;\bz_{\widehat j})
  G^{(F_h,h)}(\bz_{\widehat
    {1}})).
$$
Once substituted into eq.(\ref{interm-2}) the leftmost term 
provides the terms for $2\leq j \leq h+1$ in the sum in eq.(\ref{eq-general-exch-prop}). 
For the first term we are led to consider the sum
$$
-\frac{(z_1-\nu_u)K(z_1;\bz_{\widehat 1})}{(\tau_{ h}-\nu_u)}\sum_{j=2}^{h+1}
\frac{ G^{(F_h,h)}(\bz_{\widehat
    {j}})\prod_{i=1}^{h}(z_j-\tau_i)}{\prod_{1
    \leq i\neq j\leq h+1}(z_j-z_i)},
$$
which can be easily evaluated using Proposition \ref{prop-sum-1},
giving the remaining term in the sum  in
eq.(\ref{eq-general-exch-prop}), namely
$$
\frac{(z_1-\nu_u)}{(\tau_{h}-\nu_u)}\frac{K(z_1;\bz_{\widehat
    1})\prod_{i=1}^h(z_1-\tau_i)}{\prod_{2\leq i \leq
    h+1}(z_1-z_i)} .
$$
\end{proof}

\bibliographystyle{amsplain}

\bibliography{bibliografia}

\end{document}